\newif\ifdraft
\def\BibTeX{{\rm B\kern-.05e m{\sc i\kern-.025em b}\kern-.08em
    T\kern-.1667em\lower.7ex\hbox{E}\kern-.125emX}}
\apptocmd{\thebibliography}{\raggedright}{}{}
\newtheorem{theorem}{Theorem}[section]
\newtheorem{lemma}[theorem]{Lemma}
\newtheorem{corollary}{Corollary}
\newtheorem{definition}{Definition}[section]
\newtheorem{example}{Example}[section]
\newtheorem{remark}{Remark}
\newcommand{\srp}{$\textsf{CR}$}
\newcommand{\srs}{$\textsf{SR}_{\subseteq}$}
\newcommand{\SR}{\textsf{SR}}
\newcommand{\sNE}{\textsf{sNE}}
\newcommand{\U}{\mathbb{U}}
\newcommand{\mS}{\mathcal{S}}
\newcommand{\R}{\mathbb{R}}
\newcommand{\mH}{\mathcal{H}}
\newcommand{\T}{\mathcal{T}}
\newcommand{\orcidID}[1]{\href{http://orcid.org/#1}{\raisebox{-1.25pt}{\includegraphics{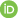}}}}
\begin{document}

\title{Towards a Game-Theoretic Security Analysis of Off-Chain Protocols}
	
	\author{\IEEEauthorblockN{Sophie Rain \orcidID{0000-0002-8940-4989}}
\IEEEauthorblockA{
\textit{TU Wien}, Austria}

\and

\IEEEauthorblockN{Georgia Avarikioti}
\IEEEauthorblockA{
\textit{TU Wien}, Austria }

\and

\IEEEauthorblockN{Laura Kov\'acs \orcidID{0000-0003-0845-5811}}
\IEEEauthorblockA{
\textit{TU Wien}, Austria }

\and

\IEEEauthorblockN{Matteo Maffei}
\IEEEauthorblockA{ \textit{Christian Doppler Lab Blockchain} \\ \textit{Technologies for the Internet of
Things}\\
\textit{TU Wien}, Austria }}


\maketitle

	\begin{abstract}
		Off-chain protocols constitute one of the most promising approaches to solve the inherent scalability issue of blockchain technologies. The core idea is to let parties  transact on-chain only once to establish a channel between them, leveraging later on the resulting channel paths to perform arbitrarily many peer-to-peer transactions off-chain. While significant progress has been made in terms of proof techniques for off-chain protocols, existing approaches do not capture the game-theoretic incentives at the core of their design, which led  to overlooking significant attack vectors like the Wormhole attack in the past. 

In this work we take a first step towards a principled game-theoretic \NEWR{security} analysis of off-chain protocols by introducing the first game-theoretic model that is expressive enough to reason about their security. We advocate the use of Extensive Form Games (EFGs) 
and introduce two instances of EFGs to capture security properties of the closing and the routing of the Lightning Network. Specifically, we  model the closing protocol, which relies on punishment mechanisms to disincentivize parties to upload old channel states on-chain. Moreover, we model the routing protocol, thereby formally characterizing the Wormhole attack, a vulnerability that undermines the   fee-based incentive mechanism underlying the Lightning Network. 

This work is the extended version of our CSF 2023 paper ''Towards a Game-Theoretic Security Analysis of Off-Chain Protocols''.
	\end{abstract}

\begin{IEEEkeywords}
game theory, off-chain protocols, security, rational players, Lightning Network
\end{IEEEkeywords}

	\section{Introduction}

Blockchain technologies are emerging as a revolutionary paradigm to perform secure  decentralized financial applications. Nevertheless, a widespread adoption of  cryptocurrencies, such as Bitcoin~\cite{nakamoto2008bitcoin} and Ethereum~\cite{wood2014ethereum}, is severely hindered by their inherent limitations on transaction throughput~\cite{ScalingBC,ScalingBC2}. For instance, while Bitcoin can support tens of transactions per second and the  confirmation time is about an hour,  traditional credit networks like Visa can comfortably handle up to 47,000 transactions per second.

Off-chain protocols~\cite{gudgeon2020sok} are recognized  as one of the most promising scalability solutions,  achieving a seemingly contradictory property: the bulk of transactions is performed off-chain, and yet in a secure fashion. The idea is to leverage the blockchain only in case of disputes, resorting otherwise to off-chain, peer-to-peer transactions. Bitcoin's Lightning Network \cite{lightning} is the most widely adopted  off-chain instantiation, hosting  at the time of writing  bitcoins worth more than 170M USD, in a total of more than 27,000 nodes and more than 76,000 channels. 
In a nutshell, parties deposit money in a shared address, called channel, and can later on perform arbitrarily many off-chain transactions with each other by redistributing the deposit on the channel. In the end, the channel can be closed and the latest state (i.e., deposit distribution) is posted on-chain.
Off-chain transactions
are not limited to the end-point of the channel, but they
can be routed over paths of channels (so-called multi-hop payments). Besides such payment channel networks, an entire ecosystem of  off-chain protocols~\cite{gudgeon2020sok} (virtual channels, watchtowers,  payment-channel hubs, state channels, side-chains, etc.) is under development for Bitcoin~\cite{AMHL,GenChannels,AMEEFRHM21,AumayrMKM21,avarikioti2019brick,HTLC}, Ethereum~\cite{DziembowskiEFM19,DziembowskiEFHH19,McCorryBBM019,avarikioti2020cerberus}, as well as other cryptocurrencies~\cite{ThyagarajanMSS20}. 

The cryptographic protocols underlying these off-chain constructions are rather sophisticated and, most importantly, rely on game-theoretic arguments to discourage malicious behavior. For instance, the Lightning Network relies on a punishment mechanism to disincentivize parties to publish  old states on-chain and on an unlocking mechanism where parties first pay a neighbor and then retrieve the paid amount from the other to ensure the atomicity of multi-hop payments (i.e., either all channels are consistently updated or none is).

Off-chain protocols are typically subject to rigorous security analyses, which however  concentrate on  cryptographic properties and do not capture the game-theoretic ones. In particular, most protocols are proven secure in the Universal Composability framework~\cite{TCC:CDPW07}, proving that the cryptographic realization simulates  the ideal functionality.  This framework, however,  was developed to reason about security in the classical  honest/Byzantine setting: in particular,  the ideal functionality has to model all possible parties' behavior, rational and irrational, otherwise it would not be simulatable, but reasoning on whether or not certain behavior is rational is  outside of the model and thus left to informal arguments. This is not just a theoretical issue, but a practical one, as there is the risk to let attacks pass undetected: for instance, the Wormhole attack 
~\cite{AMHL} constitutes a rational behavior in the Lightning Network, which is thus admitted in any faithful model thereof although  it undermines its  incentive mechanism. The first step towards closing this  gap in cryptographic proofs is to come up with a  \emph{faithful game-theoretic model for off-chain protocols} in order to reason about security in the presence of rational parties.  We address this challenge in this paper, advocating the use of Extensive Form Games (EFGs) for the game-theoretic security analysis of off-chain protocols. In particular, we introduce two instances of EFGs to model the closing and the routing of the Lightning Network.

\subsection{Related Work} 
A  game-theoretic model for  off-chain protocols is initiated and introduced in~\cite{CITE}. This work \REPLACER{is however limited in a number of ways}{suffers, however, from several limitations, which make it unsuitable to conduct faithful security analyses}.
Firstly, the game model considers only honest closing of channels, i.e., all deviations --   such as posting an old state --  are ignored: this makes it impossible to reason about the security of basic  channel operations. Secondly, the pay-offs are represented as constants, which neglects the dependency of the channel's balance on its security properties. Further,  fees are not considered at all, thereby ignoring their impact on Lightning protocols. For instance, the routing game  to model the security of multi-hop payments fails to capture already identified attacks in payment channel networks, like the Wormhole attack~\cite{AMHL} that targets  the fee distribution among players. \NEWR{Additionally, Lightning is vulnerable to the Griefing attack~\cite{griefing}, where a significant amount of money is locked.}
In our work, we overcome the aforementioned limitations, by defining a stronger closing phase model, by aligning the utilities to the monetary outcome, by considering all possible deviations of parties during closing, and by revising the relevant security properties. 
We  demonstrate the importance of precision in game-theoretic protocol models by modeling the Wormhole attack\NEWR{, as well as the Griefing attack}.


Our work further complements other game-theoretic advancements in the area, most prominently the following lines of research. 

\paragraph{Incentivizing Watchtowers}
A major drawback of payment channel protocols is that channel participants must frequently be online and watch the blockchain to prevent cheating. To alleviate this issue, the parties can employ third parties, or so-called watchtowers, to act on their behalf in case their counterparty misbehaves.  Correctly aligning the incentives of watchtowers to yield a secure payment channel protocol is, however, challenging. This is the main focus of several  works~\cite{McCorryBBM019,avarikioti2018towards,avarikioti2020cerberus,avarikioti2019brick}. 
As their objective is to incentivize external parties, their models does not apply in our work. 

\paragraph{Payment Channel Network Creation Games}
Avarikioti et al.~\cite{avarikioti2020ride,avarikioti2019payment} study payment channel networks as network creation games. Their goal is to determine which channels a rational node  should establish to maximize its profit.
Ersoy et al.~\cite{ersoy2020profit} undertake a similar task; they formulate the same problem 
as an optimization problem, show it is NP-hard, and present a greedy algorithm to approximate it.
Similarly to our work, all these works assume rational participants. However, we aim to model the security of the protocols, in contrast to these works that study the network creation problem graph-theoretically.

\paragraph{Blockchains with Rational Players}
Blockchains incentivize miners to participate in the network via monetary rewards~\cite{nakamoto2008bitcoin}. Therefore, analyzing blockchains under the lens of rational participants is critical for the security of the consensus layer. There are multiple works in this direction: 
Badertscher et al.~\cite{badertscher2018but} present a rational analysis of the Bitcoin protocol. 
Eyal and Garay~\cite{eyal2014majority} introduce an attack on the Nakamoto consensus, effectively demonstrating that rational miners will not faithfully follow the Bitcoin protocol. This attack is generalized in~\cite{kwon2017selfish,sapirshtein2016optimal}.
Consequently, Kiayias et al.~\cite{kiayias2016blockchain} analyze how miners can deviate from the protocol to optimize their expected outcome. 
Later, Chen et al.~\cite{chen2019axiomatic} investigate the reward allocation schemes in longest-chain protocols and identify Bitcoin's allocation rule as the only one that satisfies a specific set of desired properties.
On a different note, several works study the dynamics of mining pools from a game-theoretic perspective~\cite{eyal2015miner,teutsch2016cryptocurrencies} or introduce network attacks that may increase the profit of rational miners~\cite{heilman2015eclipse,nayak2016stubborn}.
An overview of game-theoretic works on blockchain protocols can be found in~\cite{SurveyOnBC}.

All these works, however, focus on the consensus layer (Layer-1) of block\-chains and as both the goals and assumptions are different from the application layer (Layer-2), the models introduced there cannot be employed for our purposes.
For instance,  payment channel protocols occur off-chain and thus game-based  cryptographic assumptions of the blockchain do not apply. 
In addition, consensus  protocols investigate the  expected reward of miners which is a probabilistic problem, whereas we ask if any honest player could lose money, which depends on the behavior of the other players and is fundamentally deterministic.

 Game-based definitions have also been proposed for the security analysis of smart contracts~\cite{QuantAnalysisSC,ProbSC}. These models, however, target an on-chain setting and are thus not suitable to reason about the specifics of off-chain constructions (e.g., closing games, routing games, etc.).

\subsection{Our Contributions}
In this work, we take the first steps towards closing the gap between security and game-theoretic analysis of off-chain protocols. Specifically, we introduce the first game-theoretic models that are expressive enough to reason about the security of off-chain protocols. 
We model off-chain protocols as games and then analyze whether or not certain security properties are satisfied. 
The design  of our models is driven by two principles: (a)~all possible actions should be represented and (b)~the utility function should mirror the  monetary outcome realistically. 
We aim to ensure that \textit{honest participants do not suffer any damage \ref{P1}}, whereas \textit{deviating from the protocol yields a worse outcome for the adversary \ref{P2}} We will use weak immunity (\Cref{def:wi}) to implement \ref{P1}, and collusion resilience (\Cref{def:CR}) together with practicality (\Cref{def:practicalEFG}) for \ref{P2}. While we believe that our approach  of implementing principles (a) and (b) is easily extensible to other off-chain protocols, in this work we focus on  the Bitcoin Lightning Network, which constitutes the most widely adopted off-chain protocol.
Our technical contributions can be  summarized as follows:
\begin{itemize}
    \item We refine existing game-theoretical concepts in order to reason about the security of off-chain protocols  (\Cref{sec:theory}).
    \item  We introduce the Closing Game $G_c$, the first game-theoretic security model that accurately captures the closing phase of Lightning channels, encapsulating   arbitrary deviations  from the protocol specification (Section~\ref{sec:models}).
    \item We perform a detailed security analysis of $G_c$, formalize folklore security corner cases of Lightning, 
    and present the strategy that rational parties should follow to close their channels in order to maximize their expected outcome relative to the current and previous distribution states (\Cref{sec:sec}).
    \item We identify limitations in prior work \cite{CITE} on game-based modeling of multi-hop payments, putting forward a new game-based definition that is precise enough to cover the Wormhole \NEWR{and the Griefing} attack (\Cref{sec:refine}).
    \NEWR{We further show how to model  Fulgor protocol~\cite{10.1145/3133956.3134096}, a variant of Lightning's routing that prevents the Wormhole attack.} 
    Our formalization leverages  game theory concepts introduced in \Cref{sec:theory} and \Cref{sec:models}, thereby demonstrating       \NEWR{the theoretical expressiveness of our framework to  analyze complex protocols}. \REMOVER{inspired by real-world scenarios}. 
\end{itemize}

\NEWR{In conclusion, our work brings game-theoretical foundations to enforce security of off-chain protocols, by providing a rigorous analysis over security properties expressed through formal requirements over game strategies. We believe, the provided rigor in our paper opens up new venues for automating security analysis via game-theoretic arguments, a challenge which we aim to tackle in future work.\footnote{We refer the interested reader to the appendix for the complete definitions and proofs.} } 

	\section{Background and Preliminaries}\label{sec:prelim}

\subsection{Payment Channel Networks}\label{sec:pcn}
A payment channel~\cite{GenChannels} can be seen as an escrow (or multi-signature), into which two  parties Alice $A$ and Bob $B$  transfer their initial coins with the guarantee that their coins are not locked forever and the agreed balance can be withdrawn at any time.
After that, $A$ and $B$  can pay each other off-chain by  signing and exchanging messages that reflect the updated  balances in the escrow. These signatures can be used at any time to close the channel and distribute the coins on-chain according to the last channel state. In order to discourage parties from posting  an old state on-chain, a punishment mechanism is in place. 
In particular, in Lightning~\cite{lightning}, once $A$ closes the channel, she has to wait a mutually agreed time before getting her coins. Meanwhile, $B$ has the opportunity to withdraw all the coins in the channel (by posting a so-called revocation transaction), including the ones assigned to $A$, if the state posted on-chain by $A$ is not the last one they mutually agreed on. Such a punishment mechanism is of game-theoretic nature: parties can indeed post an old state on-chain, yet they are discouraged to do so.

\NEWR{
In particular, Lightning payment channels operate 
as follows: First, Alice and Bob  create a funding transaction where they input their respective coins; the funding transaction has a single output that can only be spent if both $A$ and $B$ provide their signature (2-out-of-2 multi-signature).
Then, the two parties create the first commitment transaction, i.e., a transaction that spends the output of the funding transaction and returns the initial coins to both parties. In other words, the input of the commitment transaction is the output of the funding transaction while the output of the first commitment transaction is two-fold: the first output returns the coins to $A$ and the second output to $B$. However, the commitment transaction each party holds is not the same. Specifically, the commitment transaction of $A$ has an additional spending condition, a timelock $t$ that signifies the revocation period and is pre-agreed between the two parties; in $A$'s commitment transaction $B$'s output is spendable immediately. Symmetrically, in $B$'s commitment transaction $B$'s output has a timelock $t$ while $A$'s output is spendable immediately. Note that a timelock $t$ is a condition that allows the coins of the output to be spent on-chain only after time $t$ has elapsed from the publication of the transaction.
After $A$ and $B$ sign and exchange the respective first commitment transactions, they proceed to signing the funding transaction and publishing it on-chain. This order is important to avoid hostage situations\footnote{If the funding transaction is published on-chain before the first commitment transactions are signed,  a party may hold the other hostage since none of the parties can close the channel unilaterally but only in collaboration.}.
As soon as the funding transaction is securely published on-chain, $A$ and $B$ can transact off-chain by creating every time a new commitment transaction that depicts the current balance of the joint capital among the two parties. Every time a new commitment transaction is created, the parties reveal a secret to their counterparty that allows their counterparty to spend their own coins immediately  (e.g., $A$ can spend $B$'s coins from the previous commitment) if the previous commitment transaction appears on chain (revocation transaction).
To close a Lightning channel, the two parties can either collaborate and spend the output of the funding transaction, or each of them can close the channel unilaterally by publishing the last commitment transaction.
 Since the commitment transactions each party hold have a timelock, in case of cheating, i.e., publication of a previous commitment transaction on-chain, the counterpart can immediately spend the cheating party's coins, claiming all the coins of the channel, thus punish the cheating party for misbehaving. 
}

 Technically, $A$ and $B$ do not just lock their initial funds but also a certain small amount which will be used as a fee for the closing transaction of the channel. Note that every on-chain transaction requires such a fee $f$. The fee for the opening transaction is paid upon the opening of the channel and is thus irrelevant to our consideration. However, in case $A$ posts an old state on-chain and $B$ performs the revocation transaction -- which is an on-chain transaction -- to prove it, $B$ has to carry the additional transaction fee alone. These facts have an important impact on our game-theoretic models.
 
 
 
 \NEWR{In the following, we refer to \emph{honest closing} when a party unilaterally closes the channel by posting the last commitment transaction or when the parties  close collaboratively, where both parties sign to spend the funding transaction output directly.}

    Off-chain transactions are not limited to the end-points of a channel, as they can be performed whenever sender and receiver are connected by a path of channels with enough capacity. The cryptographic approach to do so exploits hash-time-locked-contracts (HTLC) \cite{HTLC}.  Assume players $A$ and $B$ do not share a channel. Instead,  $A$ has a channel with $E_1$;  $E_1$ has a channel with $I$;  $I$ has a channel with $E_2$;  and $E_2$ has a channel with $B$, as illustrated in \Cref{tbl:honestrouting}. Player $A$ can now send an amount $m$ to player $B$ via the intermediaries $E_1$, $I$ and $E_2$, where each intermediary charges a fee $f$ for the routing service, hence  $A$ should pay $m+3f$. The core idea is that $A$ pays $E_1$, $E_1$ pays $I$, and so forth, until $B$ gets paid.

\begin{figure}[tb!]
	\centering
\begin{tikzpicture}[scale= 0.8, ->]
	\node (1) at (0,0) {$A$}; 
	\node (2) at (2.5,0) {$E_1$};
	\node (3) at (5,0) {$I$};
	\node (4) at (7.5,0) {$E_2$};
	\node (5) at (10,0) {$B$};
	\path (5) edge[out=100, in=80, distance= 1.5cm] node [below] {\color{red} \small $y$} node [ below, pos=0.3] {1.} (1); 
	\path (1) edge node [above] {\footnotesize $(m+3f,\textcolor{red}{y},t_1)$} node [ below, pos=0.3] {2.}(2); 
	\path (2) edge node [above] {\footnotesize $(m+2f,\textcolor{red}{y}, t_2)$} node [ below, pos=0.3] {3.} (3); 
	\path (3) edge node [above] {\footnotesize $(m+f,\textcolor{red}{y}, t_3)$} node [ below, pos=0.3] {4.} (4); 
	\path (4) edge node [above] {\footnotesize $(m,\textcolor{red}{y}, t_4)$} node [ below, pos=0.3] {5.} (5); 
	\path (5) edge[out=240, in=300, distance= 0.8cm] node [below] {\color{cyan} \small $x$}    node [ above, pos=0.3] {6.} (4);
	\path (4) edge[out=240, in=300, distance= 0.8cm] node [below] {\color{cyan} \small $x$} node [ above, pos=0.3] {7.} (3); 
	\path (3) edge[out=240, in=300, distance= 0.8cm] node [below] {\color{cyan}  \small $x$}    node [ above, pos=0.3] {8.} (2); 
	\path (2) edge[out=240, in=300, distance= 0.8cm] node [below] {\color{cyan}  \small $x$}    node [ above, pos=0.3] {9.} (1); 
\end{tikzpicture}
\vspace{-0.2cm}
\caption{Routing in Lightning using HTLCs.}
\label{tbl:honestrouting}
\end{figure}
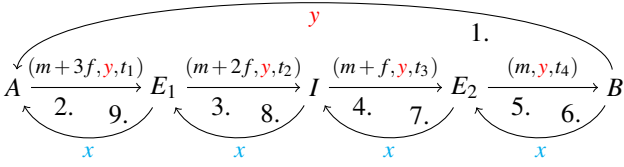

A key security property in multi-hop payments is \emph{atomicity}: either all payments are successful, and the deposit in each channel is updated accordingly, or none is.
To achieve this property, the Lighting protocol proceeds as follows. 
First, the receiver $B$ generates a secret $x$ and sends its hash $h(x)=y$ to the sender $A$ (see action 1 in \Cref{tbl:honestrouting}). 
Then $A$ creates an HTLC for $E_1$, where she locks $m+3f$ with lock $y$ and time-out $t_1$. That means only $E_1$ can claim the money and only by providing a value whose hash is $y$ within time $t_1$ (action 2 in \Cref{tbl:honestrouting}). Although $E_1$ does not know such a value yet and can therefore not unlock, $E_1$ can nevertheless proceed by creating another HTLC for $I$ also locked with $y$ and a time-out $t_2$ (action 3 in \Cref{tbl:honestrouting}). Thereafter, $I$ and $E_2$ continue in the same way (actions 4 -- 5 in \Cref{tbl:honestrouting}). Actions 1 -- 5 of \Cref{tbl:honestrouting} are called the {\it locking phase}. \NEWR{Note that in order to allow everybody to unlock their HTLCs in the subsequent steps, the time-outs have to be decreasing $t_1>t_2>t_3>t_4$.} Once $B$ receives the conditional payment, he can reveal $x$ to $E_2$ and the conditional payment is unlocked (action 6 in \Cref{tbl:honestrouting}). The others can now unlock the HTLCs one after the other from right to left (actions 7 -- 9 in  \Cref{tbl:honestrouting}), which is called the {\it unlocking phase}. Finally, $A$ paid $m+3f$, $B$ received $m$ and each intermediary was rewarded with $f$.

We note that atomicity is achieved by a game-theoretic argument: 
intermediaries can, in principle, stop the protocol either in the locking phase or in the unlocking phase. In the former, they would lose the transaction fee $f$, while in the latter, they would lose the payment amount $m$, $m+f$, $m+2f$ respectively. Thus, they are incentivized to act once they have committed to participate.

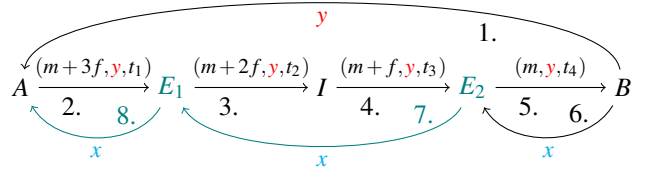
\begin{figure}[tb!]
	\centering
\begin{tikzpicture}[scale= 0.8, ->] 
	\node (1) at (0,0) {$A$}; 
	\node (2) at (2.5,0) {\textcolor{teal}{$E_1$}};
	\node (3) at (5,0) {$I$};
	\node (4) at (7.5,0) {\textcolor{teal}{$E_2$}};
	\node (5) at (10,0) {$B$};
	\path (5) edge[out=100, in=80, distance= 1.5cm] node [below] {\color{red} \small $y$} node [ below, pos=0.3] {1.} (1); 
	\path (1) edge node [above] {\footnotesize $(m+3f,\textcolor{red}{y},t_1)$} node [ below, pos=0.3] {2.}(2); 
	\path (2) edge node [above] {\footnotesize$(m+2f,\textcolor{red}{y},t_2)$} node [ below, pos=0.3] {3.} (3); 
	\path (3) edge node [above] {\footnotesize $(m+f,\textcolor{red}{y},t_3)$} node [ below, pos=0.3] {4.} (4); 
	\path (4) edge node [above] {\footnotesize $(m,\textcolor{red}{y},t_4)$} node [ below, pos=0.3] {5.} (5); 
	\path (5) edge[out=240, in=300, distance= 0.8cm] node [below] {\color{cyan} \small $x$} node [ above, pos=0.3] {6.} (4);
	\color{teal}	\path (4) edge[out=240, in=300, distance= 1cm] node [below] {\color{cyan} \small $x$} node [ above, pos=0.2] {7.} (2); 
	\path (2) edge[out=240, in=300, distance= 0.8cm] node [below] {\color{cyan}  \small $x$} node [ above, pos=0.3] {8.} (1); 
\end{tikzpicture}
\vspace{-0.2cm}
\caption{Wormhole Attack in Lightning.}
\label{tbl:wormhole}
\end{figure}

\paragraph{The Wormhole Attack}
The aforementioned routing protocol is proven to be vulnerable to the \emph{Wormhole attack}~\cite{AMHL}, which is depicted in \Cref{tbl:wormhole}.
The attack is as follows: $E_1$ and $E_2$ collude, and bypass $I$ in the unlocking phase, thus stealing $I$'s participation reward $f$.  Until actions 6 in \Cref{tbl:honestrouting} and \Cref{tbl:wormhole}, the behavior is identical. Then, $E_2$, knowing $x$, forwards $x$ to $E_1$ (offline) instead of unlocking the HTLC from $I$ (action 7 in \Cref{tbl:wormhole}). This way, $E_1$ can unlock $A$'s HTLC and claim the money (action 8), but $I$  will never be able to unlock. After a certain time the remaining HTLCs time-out and the locked money returns to the creators.

 Therefore, the parties $A$ and $B$ are not affected. However,  $E_1$ and $E_2$ collectively earn $3f$ instead of the $2f$ they deserve, stealing the fee $f$ from $I$, who locked resources in the locking phase of the protocol. 
This attack undermines the incentive  of intermediaries to route payments. 

\paragraph{The Griefing Attack}
It describes the scenario  when a player, assume $B$ for simplicity, ignores the proposed payment and refuses to proceed~\cite{griefing}. This way,  money is locked in the conditional payments for a considerable amount of time. While~\cite{CITE-RELATED-WORK} studies
the Griefing attack through probabilistic modelling and~\cite{DBLP:journals/corr/abs-2005-09327} provides mitigation techniques,    to the best of our knowledge there is no  formal security analysis of this attack at present.  Our work addresses this limitation and shows that  Lightning's routing module is indeed susceptible to the Griefing attack.

\NEWR{In the sequel we consider the behavior as illustrated in \Cref{tbl:honestrouting} as the only \emph{honest routing} behavior.}

\subsection{Game-Theoretic Definitions}
\NEWR{We now introduce the  game-theoretic concepts relevant for our formalization.} We denote real numbers by $\R$ and tuples as $\sigma=(\sigma_1,...,\sigma_n)$. \NEWR{We write $\sigma[\sigma'_i/\sigma_i]$ to denote the tuple resulting from substituting $\sigma_i$ by $\sigma'_i$ in $\sigma$, that is $\sigma[\sigma'_i/\sigma_i]=(\sigma_1,...,\sigma_{i-1},\sigma'_i,\sigma_{i+1},...,\sigma_n)$.} We understand games as static objects in which finitely many players can choose finitely many times from a finite set of actions. A game yields a certain positive or negative utility for each player.
We briefly overview the very common Normal Form Games, \NEWR{ also called Strategic Games}~\cite{GameTheoryBook}, in which each player chooses an action \emph{only once}, called strategy. 
\begin{definition}[Normal Form Game -- NFG] \label{def:nfg}
A \emph{Normal Form Game (NFG)} is a tuple $\Gamma=(N,\mS, u)$, where $N$ is the set of game \emph{players}, $\mS=\vartimes_{p \in N} \mS_{p}$ the set of  \emph{joint strategies} $\sigma$ and $u$ the \emph{utility function}:
\begin{itemize}
\item $\mS_p$ is the non-empty set of strategies player $p$ can choose from. Thus, a joint strategy $\sigma \in \mS$ is a tuple of strategies $\sigma=(\sigma_{p_1},...,\sigma_{p_{|N|}})$, with $\sigma_{p_i}\in \mS_{p_i}$.
\item $u=(u_{p_1},\dots,u_{p_n})$, where $u_{p_i}: \mS \to \R$ assigns player $p_i$ its utility for every joint strategy $\sigma \in \mS$.
\end{itemize}
\end{definition}
\REMOVER{An example of an NFG is in \Cref{tbl:nfg}.}
\NEWR{In what follows we fix an arbitrary game $\Gamma$ and give all definitions relative to it. }
To formalize an optimal outcome on game strategies, we use Nash Equilibria. 

\begin{definition}[Nash Equilibrium -- NE]\label{def:NE}
A \emph{Nash Equilibrium} is a joint strategy $\sigma \in \mS$ s.t.\ no player \REPLACER{$p$}{$p_i$} can increase their utility by unilaterally deviating from $\sigma=(\sigma_{p_1},...,\sigma_{p_{|N|}})$,  \REMOVER{That is, choosing a strategy different from $\sigma_{p}$.} Formally, 
\begin{equation} \forall p \in N \; \forall \sigma'_p \in \mS_p:\; u_p(\sigma) \geq u_p(\;\sigma[\sigma'_p/\sigma_p]\;)\;.\end{equation}
\end{definition}

Another important concept is \emph{weakly dominated strategies}, expressing the strategies a rational player would not play since they yield worse utilities.
\begin{definition}[Weakly Dominated Strategy] \label{def:wds}
 A strategy $\sigma_p^d \in \mS_p$ of player $p$ is called \emph{weakly dominated} by strategy $\sigma'_p \in \mS_p$, if it always yields a utility at most as good as $\sigma'_p$ and a strictly worse utility at least once:
\begin{align}
	&\forall\, \sigma \in \mS:\; u_p(\;\sigma[\sigma^d_p/\sigma_p]\;) \leq u_p(\;\sigma[\sigma'_p/\sigma_p]\;) \; \text{and} \\
	& \exists\, \sigma \in \mS:\; u_p(\;\sigma[\sigma^d_p/\sigma_p]\;) < u_p(\;\sigma[\sigma'_p/\sigma_p]\;) \;.
\end{align} 
\end{definition}

\NEWR{
\begin{example}\label{ex:oldclosing}
Consider the NFG $\Gamma_C$ in \Cref{tbl:oldclosing}, which  was introduced in~\cite{CITE} to model closing. In this game $\Gamma_C$, there are two players $N=\{A,B\}$ and each players can choose from the same strategy set $\mS_A=\mS_B=\{U,C,\mathfrak{I}\}$.  
Here, strategy $U$ captures unilateral closing, that is publishing the latest state on-chain. Further, strategy $C$ corresponds for closing collaboratively, that is publishing a mutually signed transaction. Finally, strategy  $\mathfrak{I}$ stands for ignoring, that is doing nothing. The utility for each joint strategy is given in \Cref{tbl:oldclosing}, where player $A$'s strategies are listed in the left column of \Cref{tbl:oldclosing} and the strategies of $B$ are given in the top row of \Cref{tbl:oldclosing}. 

Applying \Cref{def:NE}, the joint strategies $(C,C)$, $(U,\mathfrak{I})$ and $(\mathfrak{I},U)$  are Nash Equilibria: for each of these joint strategies,  neither $A$ nor $B$ can deviate in order to increase their  own utility. Comparing the second and the third row of \Cref{tbl:oldclosing}, we see that $A$'s utility is always as least as good in the second row as it is in the third row. Hence, strategy $C$ weakly dominates strategy $\mathfrak{I}$ for player $A$, by \Cref{def:wds}; the same property  also holds  for player $B$. By comparing the other pairs of rows/columns of  \Cref{tbl:oldclosing}, we see that there is no other weak dominance in $\Gamma_C$. 

\begin{table}	
	\centering	
	\caption{NFG $\Gamma_C$ \NEWR{with players $A,B$}.}
	\begin{tabular}{|c||r|c|c|c|}
	    \hline 
	\diagbox[width=5em,height=2em]{$~~~~A$}{$~~B$}	&{$U$} & \textcolor{black}{$C$} & $\mathfrak{I}$ \\
		\hline\hline 
		$U$  & $(1/2,{1/2})$ & $(0,{1})$ & $(0,1)$ \\
		\hline
		\textcolor{black}{$C$} & $(1,{0})$ & $(1,{1})$ & $(-1,{-1})$ \\
		\hline
		\textcolor{black}{$\mathfrak{I}$} & $({1},{0})$ & $({-1},{-1})$ & $({-1},-1)$ \\
		\hline
	\end{tabular}
	\label{tbl:oldclosing}
\end{table}

\end{example}}

\REMOVER{To formalize strategies where players make multiple choices one after the other, we next consider Extensive Form Games (EFGs), which extend NFGs as follows. 

\begin{definition}[Extensive Form Game -- EFG]
An \emph{Extensive Form Game (EFG)} is a tuple $\Gamma=(N,\mH,P,u)$, where $N$ and $u$  are as in NFGs. The set  $\mH$ captures \emph{game  histories}, $\T \subseteq \mH$ is the set of \emph{terminal histories} and $P$ denotes the \emph{next player function}, satisfying the following properties.
\begin{itemize}
\item The set $\mH$ of histories is a set of sequences of actions with  \begin{enumerate} 
\item $\emptyset \in \mH$; 
\item if the action sequence $(a_k)_{k=1}^K \in \mH$ and $L<K$, then also $(a_k)_{k=1}^L \in \mH$; 
\item a history is terminal $(a_k)_{k=1}^K \in \T$, if there is no action $a_{K+1}$ with  $(a_k)_{k=1}^{K+1} \in \mH$.
\end{enumerate} 
\item The next player function $P$ 
\begin{enumerate}
    \item assigns the next player $p \in N$ to every non-terminal history  $(a_k)_{k=1}^K \in \mH\setminus \T$, that is  $P((a_k)_{k=1}^K) = p$; 
\item after a non-terminal history $h= (a_k)_{k=1}^K \in \mH$, it is player $P(h)$'s turn to choose an action from the action set $A(h)=\{a: (h,a) \in \mH\}$.
\end{enumerate}
\end{itemize}
A \emph{strategy} of  player $p$ is a function $\sigma_p$ mapping every $h \in \mH$ with $P(h)=p$ to an action from $A(h)$. The utilities of all joint strategies with terminal history $h$ are the same.
\end{definition}

Note that the set of terminal histories $\T$ is well-defined by $\mH$ and does therefore not occur in the tuple $\Gamma$. 
EFGs can be well-represented via trees.
\begin{definition}[EFG as tree]
Given an EFG $\Gamma=(N,\mH,P,u)$, then the following tree $G=(V,E)$ represents $\Gamma$.
\begin{itemize}
    \item For every history $h$ exists exactly one node $v_h \in V$. 
    It is labeled $P(h)$, the next player, if $h$ is not terminal ($h \notin \T$), and $u(\sigma)$, the joint utility of playing a game with history $h$, if $h \in \T$ and the joint strategy $\sigma$ yields history $h$. 
    \item Two nodes $v_h$, $v_{h'}\in \mH$ are connected via an oriented edge $(v_h,v_{h'}) \in E$ iff $h'=(h,a)$. This edge is labeled $a$.
\end{itemize}
The graph $G$ is a tree with root $v_{\emptyset}$ representing the empty history, which is labeled with the first player, and with one leaf $v_t$ per terminal history $t \in \T$.  
\end{definition}

Note that (i) a history $h$ is a path starting from the root, and it is terminal if and only if it ends in a leaf; (ii) each internal tree node models the turn of an EFG player $p$ after a non-terminal history $h$. The outgoing edges represent the action set $A(h)$. Further, (iii) each leaf represents the joint utility of every joint strategy, whose  history (path) leads to it; (iv) an EFG strategy $\sigma$ is not a history (path) but a set of edges that contains precisely one outgoing edge per internal node.

\begin{example}
The tree in \Cref{tbl:efg} results from the game $\Gamma_E=(N,\mH, P,u)$, where $N=\{A,B\}$ and the set of histories is  $\mH=\{\emptyset,(a), (b), (b,c), (b,d), (b,d,e), (b,d,f), (b,d,f,g),$  $(b,d,f,h)\}$. The next player function $P$ assigns player $A$ after histories $\emptyset$ and $(b,d)$, and player $B$ after $(b)$ and $(b,d,f)$. Finally, the utility function $u$ assigns  joint utility $(2,2)$ to strategies that yield history $(a)$, utility $(3,1)$ for strategies with history $(b,c)$, utility $(1,1)$ for strategies with history $(b,d,e)$, and $(0,2)$ for strategies resulting in $(b,d,f,h)$.

A strategy $\sigma$ in $\Gamma_E$ is for example: $A$ chooses $a$ after history $\emptyset$, and $f$ after $(b,d)$. $B$ takes $c$ after $(b)$ and $g$ after $(b,d,f)$.
 \begin{figure}
	\centering
\begin{tikzpicture}[scale= 0.9,->,>=stealth',auto,node distance=3cm, el/.style = {inner sep=4pt, align=center, sloped}]
	\node (1) at (0,0) {$A$};
	\node (2) at (1.5,-0.75) {$B$} ; 
	\node (3) at (-1.5,-0.75) {$(2,2)$};
	\node (4) at (3,-1.5) {$A$} ;
	\node (5) at (0, -1.5) {$(3,1)$};
	\node (6) at (4.5,-2.25) {$B$} ; 
	\node (7) at (1.5,-2.25) {$(1,1)$};
	\node (8) at (6,-3) {$(0,2)$} ;
	\node (9) at (3, -3) {$(0,1)$};

	\path (1) edge node[ above] { $b$} (2);
	\path (1) edge node[above, pos=0.7] {$a$} (3); 
	\path (2) edge node [above] {$d$} (4);
	\path (2) edge node[ above, pos=0.7] {$c$} (5);
	\path (4) edge node[above] {$f$} (6);
	\path (4) edge node[ above, pos=0.7] {$e$} (7); 
	\path (6) edge node[above] {$h$} (8);
	\path (6) edge node[above, pos=0.7] {$g$} (9);
\end{tikzpicture}
\caption{Game $\Gamma_{E}$.}
\end{figure}
\end{example}

In the context of EFGs, the concept  of    
\emph{Nash Equilibria}  remains as given in \Cref{def:NE}. In addition to Nash Equilibria, another important concept for EFGs is the \emph{Subgame Perfect Equilibrium}, characterizing the strategies played in practice by rational parties. To this end, we first introduce subgames of EFGs.

\begin{definition}[Subgame of  EFG]
The \emph{subgame} of an EFG $\Gamma=(N,\mH,P,u)$ that follows history $h\in \mH$ is the  EFG $\Gamma(h)=(N,\mH_{|h}, P_{|h}, u_{|h})$ s.t.\   for every sequence $h' \in \mH_{|h}$ we have $(h,h') \in \mH$,  $P_{|h}(h'):= P(h,h')$ and $u_{|h}(h')=u(h,h')$.
\end{definition}

\begin{definition}[Subgame Perfect Equilibrium]
A \emph{subgame perfect equilibrium} is a joint strategy $\sigma \in \mS$, s.t.\ $\sigma_{|h}$ is a Nash Equilibrium of the subgame $\Gamma(h)$, for every $h \in \mH$.
\end{definition}
}

 \subsection{Game-Theoretic Security Properties of Off-Chain Protocols}\label{sec:prelin:NFGSec}
 We now present existing game-theoretic concepts~\cite{CITE,GameTheoryBook} implying security properties of off-chain protocols. 
 In Section~\ref{sec:theory}, we extend these concepts towards  \NEWR{another type of games, called Extensive Form Games}, enabling our security analysis in  Section~\ref{sec:models}.  We focus on two security properties ensuring that \ref{P1} honest players do not suffer damage, and \ref{P2} subgroups of rational players do not deviate from a respective strategy. A protocol is compliant to these properties, if the strategy implementing the intended behavior satisfies them; \NEWR{we call such a  strategy an \emph{honest strategy}}. 
 \vspace{0.2cm}
 
\begin{enumerate}[label={(P\arabic*})]
    \item \label{P1} {{\it  No Honest Loss.}} As the utility function of a game is supposed to display the monetary and intrinsic value of a certain joint strategy, property~\ref{P1} is expressed using \emph{weak immune strategies} defined next. 
\end{enumerate}

\begin{definition}[Weak Immunity] \label{def:wi}
	\NEWR{A joint strategy $\sigma \in \mS$ in an NFG $\Gamma$ is called \emph{weak immune}, if every player $p$ that follows $\sigma$ gets utility at least 0, regardless of how the other players behave:}
	\begin{equation}  \forall p \in N\;\;\forall \sigma'\in \mS:\quad u_p(\; \sigma'[\sigma_p/\sigma'_p]\;)\geq 0 \;. \end{equation}
\end{definition}

\NEWR{\begin{example}
In the game  $\Gamma_C$ of \Cref{tbl:oldclosing}, the only weakly immune strategy is $(U,U)$. This is the case, because as long as $A$ chooses $U$, player $B$ can take any strategy and $A$ will never get negative utility (similarly, vice-versa). 
\end{example}}

\vspace{0.2cm}

\begin{enumerate}[label={(P\arabic*})]
\setcounter{enumi}{1}
    \item \label{P2}  {{\it No Deviation.}}
 Even though the concept of Nash Equilibria seems to be a good candidate to ensure \ref{P2} at first glance, they have two crucial shortcomings. First, a Nash Equilibrium only ensures that a single player cannot profit from deviating, but does not imply that two or more players cannot do so. Second, there might be Nash Equilibria, which are weakly dominated by another strategy for a specific player. Such Nash Equilibria will therefore not be played by rational parties and hence should not be considered to satisfy \ref{P2}.  
\end{enumerate}

The solution proposed for NFGs in~\cite{CITE} is to consider strategies $\sigma$ compliant to \ref{P2}, if they are both \emph{strongly resilient} (fixing the former shortcoming) and \emph{practical} (fixing the latter) as defined subsequently.\vspace{0.2cm}

Strong resilience extends Nash Equilibria by considering deviations of multiple players.

\begin{definition}[Strong Resilience -- \SR]\label{def:sr}
	A joint strategy $\sigma \in \mS$ in an \NEWR{NFG} $\Gamma$ is  \emph{strongly resilient (\SR)}  if no proper subgroup of players $S:=\{s_1,...,s_j\}$ has an incentive in deviating: 
\begin{equation}
\begin{aligned}
	{\color{teal} \forall S \subset N}\quad & \forall \sigma'_{s_i} \in \mS_{s_i} \quad {\color{teal} \forall p \in S}:\\
	&u_p(\sigma) \geq u_p(\;\sigma[\sigma'_{s_1}/\sigma_{s_1},...,\sigma'_{s_j}/\sigma_{s_j}]\;) \;.
	\end{aligned}
	\end{equation}
	\end{definition} 
	
\NEWR{	
We note that in  games with two players (i.e. two-player games), 	 strong resilience and Nash Equilibrium  are identical. As such, in $\Gamma_C$ from \Cref{tbl:oldclosing}, the joint  strategies $(C,C)$, $(U,\mathfrak{I})$ and $(\mathfrak{I},U)$ of Example~\ref{ex:oldclosing} are also strongly resilient.}

To define practicality \NEWR{of a strategy}, we first introduce the concept of  \emph{iterated deletion of weakly dominated strategies (IDWDS)}.

\begin{definition}[Iterated Deletion of Weakly Dominated Strategies -- IDWDS] \label{def:idwds}
The \emph{iterated deletion of weakly dominated strategies} (IDWDS)  of an \NEWR{NFG} $\Gamma$ is defined as iteratively rewriting $\Gamma$ by omitting \emph{all} weakly dominated strategies of all players. This is repeated until no strategy is weakly dominated any more. The resulting game $\Gamma'$ is thus a subgame of $\Gamma$.
\end{definition}

Note that, when IDWDS is applied to a game $\Gamma$, then every Nash Equilibrium of the resulting game $\Gamma'$ is also a Nash Equilibrium of $\Gamma$. \NEWR{Since all weakly dominated strategies of every player are removed at each step, the generated game is unique. Details and proofs can be found in~\cite{GameTheoryBook}}. \vspace{0.1cm}

\NEWR{We now define practical strategies}, in order to  ensure that no single strategy is weakly dominated at any iteration.

\begin{definition}[Practicality]\label{def:nfg:practical}
	A strategy is \emph{practical} if it is a Nash Equilibrium of the \NEWR{NFG} $\Gamma'$ after iterated deletion of weakly dominated strategies.
\end{definition}

\NEWR{
\begin{example}
Let us consider $\Gamma_C$ from \Cref{tbl:oldclosing}.  We know from \Cref{ex:oldclosing} that  only $\mathfrak{I}$ is weakly dominated for both $A$ and $B$.  Therefore, according to \Cref{def:idwds}, strategy $\mathfrak{I}$ has to be removed from both player's strategy set. This yields  the game $\Gamma'_C$ as listed \Cref{tbl:idwds}.

\begin{table}	
	\centering	
	\caption{NFG $\Gamma'_C$ obtained from IDWDS over \Cref{tbl:oldclosing}.}

	\begin{tabular}{|c||r|c|c|}
	    \hline 
	\diagbox[width=5em,height=2em]{$~~~~A$}{$~~B$}		&{$U$} & \textcolor{black}{$C$}  \\
		\hline
		\hline
		$U$  & $(1/2,{1/2})$ & $(0,{1})$  \\
		\hline
		\textcolor{black}{$C$} & $(1,{0})$ & $(1,{1})$  \\
		\hline
	\end{tabular}
	\label{tbl:idwds}
\end{table}
Note that there are no weakly dominated strategies in $\Gamma'_C$ . Thus, any Nash Equilibrium  of $\Gamma'_C$ is also practical strategy of $\Gamma_C$. By comparing  utilities, we derive that the only Nash Equilibrium  of $\Gamma'_C$ is the joint strategy $(C,C)$.
\end{example}}

An alternative approach for expressing \ref{P2} is by requiring a strategy $\sigma$ to be both a \emph{strong Nash Equilibrium} (a property similar to \SR{}) and practical, instead of \SR{} and practical.

	\begin{definition}[Strong Nash Equilibrium -- \sNE] \label{def:sne}
	A joint strategy $\sigma$ is a \emph{strong Nash Equilibrium (\sNE)} if for every group of deviating players $S:=\{s_1,...,s_j\}$ and all possible deviations $\sigma'_{s_i} \in \mS_{s_i}$, $i\in \{1,...,j\}$ at least one player $p \in S$ has no incentive to participate, that is
\begin{equation}
\begin{aligned}
	 \forall {\color{teal} S \subseteq N,\, S \neq \emptyset} \quad & \forall \sigma'_{s_i} \in \mS_{s_i} \quad {\color{teal} \exists p \in S}:\\ 
	 &u_p(\sigma) \geq u_p(\;\sigma[\sigma'_{s_1}/\sigma_{s_1},...,\sigma'_{s_j}/\sigma_{s_j}]). 
	\end{aligned}
	\end{equation}
\end{definition}	

\NEWR{
\begin{example}
In $\Gamma_C$ from \Cref{tbl:oldclosing}, all NE are also sNE. For the joint strategy $(C,C)$, this is easy to see. However, it is also the case for $(U,\mathfrak{I})$ and $(\mathfrak{I},U)$, since any deviation yields a utility of at most 1. Thus, at least one player's utility does not increase by deviating from $(U,\mathfrak{I})$, $(\mathfrak{I},U)$ respectively.
\end{example}
}

 A detailed comparison of the various concepts ensuring \ref{P2} including their strengths and weaknesses, is given in \Cref{sec:theory}.

	\section{\REPLACER{EFG Advancements for}{EFG-based Modeling of}  Off-Chain Protocols}\label{sec:theory}

\REPLACER{We now introduce novel EFG concepts 
by extending the NFG setting of  Section~\ref{sec:prelin:NFGSec}.  Such an extension is needed to overcome the restriction of NFGs in modeling only simultaneous actions, while ensuring practicality in the sequential setting of EFGs. 
Doing so, we introduce \emph{extended strategies in EFGs}, which allow us to capture deviations such as dishonest closing attempts in \Cref{sec:models}. Such closing attempts cannot be modeled in the NFG approach of~\cite{CITE}.}{So far we considered games in which each party takes only one action. We now extend our definitions to handle adaptive strategies, i.e., games in which parties take several actions and choose at each step which action to take based on the actions previously chosen by  other parties.  As we will see, this is necessary for faithfully modeling off-chain protocols and overcoming the limitations of previous work~\cite{CITE}. For that, we  overview the concept of extensive form games (EFGs) in \Cref{sec:efg-new}. We  show how to lift  NFG-based  security definitions  to EFGs in \Cref{sec:EFG:extended}. Finally, we show that these definitions do not yet suffice to yield an accurate security model of off-chain protocols, and  introduce a refined security definition based on the concept of collusion resilience in \Cref{sec:secstrat}.}

\NEWR{\subsection{Extensive Form Games (EFG)}\label{sec:efg-new}}
To formalize strategies where players make multiple choices one after the other, we \REPLACER{next consider}{advocate the usage of} Extensive Form Games (EFGs)~\cite{GameTheoryBook}, which extend NFGs as follows. 

\begin{definition}[Extensive Form Game -- EFG]
An \emph{Extensive Form Game (EFG)} is a tuple $\Gamma=(N,\mH,P,u)$, where $N$ and $u$  are as in NFGs. The set  $\mH$ captures \emph{game  histories}, $\T \subseteq \mH$ is the set of \emph{terminal histories}, and $P$ denotes the \emph{next player function}, satisfying the following properties.
\begin{itemize}
\item The set $\mH$ of histories is a set of sequences of actions with  \begin{enumerate} 
\item $\emptyset \in \mH$; 
\item if the action sequence $(a_k)_{k=1}^K \in \mH$ and $L<K$, then also $(a_k)_{k=1}^L \in \mH$; 
\item a history is terminal $(a_k)_{k=1}^K \in \T$, if there is no action $a_{K+1}$ with  $(a_k)_{k=1}^{K+1} \in \mH$.
\end{enumerate} 
\item The next player function $P$ 
\begin{enumerate}
    \item assigns the next player $p \in N$ to every non-terminal history  $(a_k)_{k=1}^K \in \mH\setminus \T$, that is  $P((a_k)_{k=1}^K) = p$; 
\item after a non-terminal history $h= (a_k)_{k=1}^K \in \mH$, it is player $P(h)$'s turn to choose an action from the action set $A(h)=\{a: (h,a) \in \mH\}$.
\end{enumerate}
\end{itemize}
A \emph{strategy} of  player $p$ is a function $\sigma_p$ mapping every $h \in \mH$ with $P(h)=p$ to an action from $A(h)$. \NEWR{Formally, \[\sigma_p:\{h \in \mH: P(h)=p\}\to \{a: (h,a)\in \mH, \forall h \in \mH\}\;,\] such that $\sigma_p(h)\in A(h)$. The set of all strategies of a player $p$ is $\mS_p$, and the set of all joint strategies is $\mS=\vartimes_{p\in N}\mS_p$.} \REMOVER{The utilities of all joint strategies with terminal history $h$ are the same.}
\end{definition}

Note that the set of terminal histories $\T$ is \REPLACER{well-defined}{uniquely determined} by $\mH$ and therefore does not  explicitly occur in the tuple $\Gamma$. \NEWR{Since histories $h$ are just sequences of actions $h=(a_k)_{k=1}^K=(a_1,...,a_K)$, we denote histories by the variable $h$, the abstract sequence  $(a_k)_{k=1}^K$, or the explicit sequence  $(a_1,...,a_K)$, depending on the context in which they are used.} We note that 
EFGs can  conveniently be represented as trees, as described below. 
\begin{definition}[EFG as Tree]
Considering an EFG $\Gamma=(N,\mH,P,u)$,  the following tree $G=(V,E)$ represents $\Gamma$.
\begin{itemize}
    \item For every history $h \in \mH$, there exists exactly one node $v_h \in V$. 
    This is labeled by $P(h)$, the next player, if $h$ is not terminal ($h \notin \T$), or by  $u(\sigma)$, the joint utility of playing a game with history $h$, if h is terminal ($h \in \T$) and the joint strategy $\sigma$ yields history $h$. 
    \item Two nodes $v_h$, $v_{h'}\in \mH$ are connected via an oriented edge $(v_h,v_{h'}) \in E$ iff $h'=(h,a)$. This edge is labeled $a$.
\end{itemize}
\REMOVER{The graph $G$ is a tree with root $v_{\emptyset}$ representing the empty history, which is labeled with the first player, and with one leaf $v_t$ per terminal history $t \in \T$.}
\end{definition}

\REMOVER{Note that (i) a history $h$ is a path starting from the root, and it is terminal if and only if it ends in a leaf; (ii) each internal tree node models the turn of an EFG player $p$ after a non-terminal history $h$. The outgoing edges represent the action set $A(h)$. Further, (iii) each leaf represents the joint utility of every joint strategy, whose  history (path) leads to it; (iv) an EFG strategy $\sigma$ is not a history (path) but a set of edges that contains precisely one outgoing edge per internal node.}
\NEWR{Let us illustrate EFGs and their tree-based representation through the following example. }

\begin{example}\label{ex:efg}
The game tree in \Cref{tbl:efg} results from the extensive form game $\Gamma_E=(N,\mH, P,u)$ with the two players $N=\{A,B\}$, where the set of histories is $\mH=\{\emptyset,(a), (b), (b,c), (b,d), (b,d,e), (b,d,f), (b,d,f,g),$  $(b,d,f,i)\}$. The next player function $P$ assigns player $A$ after histories $\emptyset$ and $(b,d)$, and player $B$ after $(b)$ and $(b,d,f)$. Finally, the utility function $u$ assigns  joint utility $(2,2)$ to strategies that yield history $(a)$, utility $(3,1)$ for strategies with history $(b,c)$, utility $(1,1)$ for strategies with history $(b,d,e)$, and $(0,2)$ for strategies resulting in $(b,d,f,i)$.
A strategy \NEWR{$\sigma=(\sigma_A,\sigma_B)$ in $\Gamma_E$ is for example: $A$ chooses $a$ after history $\emptyset$: $\sigma_A(\emptyset)=a$; and $f$ after $(b,d)$: $\sigma_A((b,d))=f$; $B$ takes $c$ after $(b)$: $\sigma_B((b))=c$, and $g$ after $(b,d,f)$: $\sigma_B((b,d,f))=g$. Following this strategy until we read a leaf yields history $(a)$. A different strategy $\sigma'=(\sigma'_A,\sigma'_B)$, which also yields history $(a)$, is for example $\sigma'_A(\emptyset)=a$,  $\sigma'_A((b,d))=e$ and  $\sigma'_B=\sigma_B$.}
 \begin{figure}
	\centering
\begin{tikzpicture}[scale= 0.9,->,>=stealth',auto,node distance=3cm, el/.style = {inner sep=4pt, align=center, sloped}]
	\node (1) at (0,0) {$A$};
	\node (2) at (1.5,-0.75) {$B$} ; 
	\node (3) at (-1.5,-0.75) {$(2,2)$};
	\node (4) at (3,-1.5) {$A$} ;
	\node (5) at (0, -1.5) {$(3,1)$};
	\node (6) at (4.5,-2.25) {$B$} ; 
	\node (7) at (1.5,-2.25) {$(1,1)$};
	\node (8) at (6,-3) {$(0,2)$} ;
	\node (9) at (3, -3) {$(0,1)$};

	\path (1) edge node[ above] { $b$} (2);
	\path (1) edge node[above, pos=0.7] {$a$} (3); 
	\path (2) edge node [above] {$d$} (4);
	\path (2) edge node[ above, pos=0.7] {$c$} (5);
	\path (4) edge node[above] {$f$} (6);
	\path (4) edge node[ above, pos=0.7] {$e$} (7); 
	\path (6) edge node[above] {$i$} (8);
	\path (6) edge node[above, pos=0.7] {$g$} (9);
\end{tikzpicture}
\caption{An EFG  $\Gamma_{E}$.}
\label{tbl:efg}
\end{figure}
\end{example} 

\NEWR{As depicted in the tree-based representation of \Cref{tbl:efg}, we note that the utility of joint strategies in an EFG is uniquely determined by their associated history (i.e., path). }  
In the context of EFGs, the concept  of    
\emph{Nash Equilibria}  remains as given in \Cref{def:NE}. In addition to Nash Equilibria, another useful concept for EFGs is the \emph{Subgame Perfect Equilibrium}, which we will use to characterize the strategies played in practice by rational parties. To this end, we first introduce the notion of subgames of EFGs\NEWR{. A subgame of an EFGs  can be seen as a  subtree determined by a certain history (i.e., whose root note is the last history node), and is formalized below}.

\begin{definition}[Subgame of  EFG]
The \emph{subgame} of an EFG $\Gamma=(N,\mH,P,u)$ associated to  history $h\in \mH$ is the  EFG $\Gamma(h)=(N,\mH_{|h}, P_{|h}, u_{|h})$ defined as follows: $\mH_{|h}:=\{h'~|~(h,h') \in \mH\}$,  $P_{|h}(h'):= P(h,h')$, and $u_{|h}(h'):=u(h,h')$.
\end{definition}

\NEWR{
\begin{example}
Consider the EFG $\Gamma_{E}$ from  \Cref{tbl:efg} . 
The subgame of $\Gamma_{E}$  associated to history $(b,d)$ is the subtree rooted in $A$. 
\end{example}
}

\NEWR{By adjusting the concept of Nash Equilibrium to subgames, we derive the following property of joint strategies.}

\begin{definition}[Subgame Perfect Equilibrium]
A \emph{subgame perfect equilibrium} is a joint strategy $\sigma = (\sigma_1,...,\sigma_{n}) \in \mS$, s.t. $\sigma_{|h}= (\sigma_{1|h},...,\sigma_{n|h})$ is a Nash Equilibrium of the subgame $\Gamma(h)$, for every $h \in \mH$. \NEWR{The strategies $\sigma_{i|h}$ are functions that map every $h'\in \mH_{|h}$ with $P_{|h}(h')=i$ to an action from $A_{|h}(h')$.}
\end{definition}

\subsection{EFG Extensions \NEWR{for Security Properties}}\label{sec:EFG:extended}
While  EFGs enable us to incorporate choices made at different times yielding different options for the next player, they come with the following limitation.
\NEWR{The intended (i.e., honest) behaviors in off-chain protocols only specify a terminal history (i.e., a path from root to leaf), rather than a strategy. For instance, an honest history may specify to close the channel collaboratively, but it does not capture a  player's behavior once a player deviated.}
To address this limitation, we introduce the following notion of an extended strategy in EFGs.
\begin{definition}[Extended Strategy]\label{def:extstrat}
	Let $\beta$ be a terminal history in an EFG $\Gamma$. Then, all strategies $\sigma_\beta$ that result in history $\beta$ are \emph{extended strategies} of $\beta$. 
\end{definition} 

\NEWR{
\begin{example}
Recall \Cref{tbl:efg}. In \Cref{ex:efg}, we consider the terminal history $(a)$ and provide two extended strategies of $(a)$, they are $\sigma$ and $\sigma'$. A strategy, which is not an extended strategy of $(a)$ is for instance $\sigma''=(\sigma''_A,\sigma''_B)$, where $\sigma_A''(\emptyset)=b$,   $\sigma''_A((b,d))=e$ and  $\sigma''_B=\sigma_B$. This is the case because by following the choices of $A$ and $B$ in $\sigma''$, we end up in $(b,c)$. 
\end{example}
}

\REMOVER{Recall that the game-theoretic properties in~\Cref{sec:prelim} are defined on strategies, but not on terminal histories.
In our work, however, we are interested in analyzing whether a protocol together with its honest behavior $\beta$ satisfies  the security properties \ref{P1} and \ref{P2}. We therefore use extended strategies $\sigma_\beta$ to formalize properties of  $\beta$. } 


 While  EFGs  can in principle be translated to NFGs, as explained in~\cite{GameTheoryBook},  
 analyzing the security properties \ref{P1}-\ref{P2}  over the translated NFGs may yield unexpected results. We shortly exemplify this point in \Cref{ex:pract}, but similar issues occur also in larger games. 
 We thus lift NFG-based definitions to EFGs,  enabling the  analysis of \ref{P1} and \ref{P2}. 
 Since EFGs have a utility function just as NFGs do, which assigns  values after the game, the NFG concepts of weak immunity, strong resilience and \sNE{} remain the same for EFGs.
 
 \NEWR{
 \begin{definition}[EFG Properties]
 A joint strategy $\sigma\in\mH$ of an EFG $\Gamma$ is called weak immune, strongly resilient, or a strong Nash Equilibrium, if it satisfies the formulae of \Cref{def:wi}, \Cref{def:sr} or \Cref{def:sne} respectively.
 \end{definition}
 }
 
 Practicality in NFGs, however, relies  on IDWDS, which fails to incorporate the sequential nature of EFGs,  and hence must  be adjusted for EFGs. This is because NFG actions  happen simultaneously, while EFG players  choose their actions sequentially. We first  present an example to showcase that  applying the NFG definition of practicality to an EFG, by using its translation to an NFG, leads to overlooking rational strategies.

\begin{example} \label{ex:pract}
 Let us consider the EFG $\Gamma_E$ from \Cref{tbl:efg}, with two players $A$ and $B$. The compact translation of $\Gamma_E$ to an NFG  $\Gamma_N$ is given in \Cref{tbl:nfg}. \NEWR{Histories of \Cref{tbl:efg}, where players choose twice, such as $(b,d,f)$, are translated to \Cref{tbl:nfg} as the joint strategy $(b;f\;,d)$. Hence, the NFG strategy $b;f$ of player $A$ means choosing action $b$ first, and, if $A$ gets to choose again, $A$ takes $f$.} Player $A$'s strategies are displayed in the rows, whereas player $B$'s are shown in the columns of \Cref{tbl:nfg}. Strategy $d;g$ for example denotes choosing $d$ in the first turn and $g$ in the second turn, unless the game ends before. For readability, strategies with identical utilities in any case are merged together, e.g., having only $a$ instead of both $a;e$ and $a;f$. 
 
According to definition of practicality for NFGs (see \Cref{def:nfg:practical}), the only practical strategy in $\Gamma_N$ is $(a,\;d;i)$, which results in a utility of $(2,2)$. This is because for $A$ strategy $b;e$ weakly dominates $b;f$ and for $B$ strategy $d;i$ weakly dominates both $c$ and $d;g$. After deleting those (in blue), the red strategy $b;e$ of $A$ becomes weakly dominated by $a$. Thus, after removing $b;e$ only the joint strategy $(a,\;d;i)$ remains and is therefore a Nash Equilibrium of the resulting game. 

\begin{table}	
	\centering	
	\caption{Compact View of $\Gamma_E$, Translated to an NFG $\Gamma_N$.}
	
	    	\begin{tabular}{|c||r|c|c|c|}
	    \hline 
	\diagbox[width=5em,height=2em]{$~~~~A$}{$~~B$}	
		&\textcolor{cyan}{$c$} & \textcolor{cyan}{$d;g$} & $d;i$ \\
		\hline\hline
		$a$  & $(2,{2})$ & $(2,{2})$ & $(2,2)$ \\
		\hline
		\textcolor{red}{$b;e$} & $(3,{1})$ & $(1,{1})$ & $(1,{1})$ \\
		\hline
		\textcolor{cyan}{$b;f$} & $({3},{1})$ & $({0},{1})$ & $({0},2)$ \\
		\hline
	\end{tabular}
	\label{tbl:nfg}
\end{table}
However, in the EFG $\Gamma_E$ the comparison of strategies has a certain order, as not all choices are made simultaneously. Thus, when it comes to $B$ choosing between  option $c$ and $d$, choosing $c$ is also a rational action because in any case $B$ gets utility 1. This is the case, since the subgame following after $d$, will end in the subgame perfect and practical $(1,1)$, if played by rational players. Following this argumentation, we claim that $(b;e,c)$, yielding history $(b,c)$ should also be considered rational and thus practical. 
\end{example}

Example~\ref{ex:pract} demonstrates that it is advisable to adapt the \REPLACER{introduced}{NFG} concept of practicality \NEWR{for EFGs}, and that a n\"aive application can be problematic since information may be lost during the transformation from EFG to NFG~\cite{GameTheoryBook}.  We therefore propose to use subgame perfect equilibria for comparing EFG strategies, and define \emph{practicality for EFGs} as follows.

\begin{definition}[Practicality for EFG]\label{def:practicalEFG}
		A strategy of an EFG $\Gamma$ is \emph{practical} if it is a subgame perfect equilibrium of $\Gamma$. 
\end{definition}  

\subsection{Security Strategies for Off-Chain Protocols}\label{sec:secstrat}
We now leverage the previously introduced EFG-based definitions  (\Cref{sec:EFG:extended}) to faithfully model the security  of off-chain protocols. \NEWR{ In particular, we propose the novel concept of \emph{collusion resilience} for addressing \ref{P2}, and compare it to existing formalizations of property \ref{P2}.}

In \cite{CITE},  strong resilience and practicality were used to model the no deviation property of \ref{P2}:  We identify unwanted properties of strong resilience and we thus investigate variations of it. Specifically, we show  that strong Nash Equilibria do not imply strong resilience nor vice-versa (\Cref{lemma:impl}), and therefore define the \emph{collusion resilience} property of a joint strategy. Intuitively, collusion resilience considers the sum of the utilities of the deviating parties, since rational players may collude or be controlled by the same entity.

\begin{definition}[Collusion Resilience -- \srp] \label{def:CR}
	A joint strategy $\sigma \in \mS$ in an EFG/NFG $\Gamma$ is called \emph{collusion resilient (\srp)} if no strict subgroup of players $S:=\{s_1,...,s_j\}$ has a joint incentive in deviating from $\sigma$. That is, 
	\begin{equation}
\begin{aligned}
	\forall {\color{teal}S \subset N} \quad &  \forall \sigma'_{s_i} \in \mS_{s_i}: \\
	&{\color{teal} \sum_{p \in S}} u_p(\sigma) \geq {\color{teal} \sum_{p \in S}} u_p(\;\sigma[\sigma'_{s_1}/\sigma_{s_1},...,\sigma'_{s_j}/\sigma_{s_j}]\;).
	\end{aligned}
	\end{equation}
\end{definition}

In addition, we also consider a slight adaption of strong resilience, \srs{}, where the deviation of the entire set of players $N$ is also allowed, as it is for \sNE.

\begin{definition}[Strong Subset Resilience -- \srs]
	A joint strategy  $\sigma \in \mS$ is called \emph{strongly subset resilient (\srs)}, if no player of any subgroup $S \subseteq N$, $S:=\{s_1,...,s_j\}$ has an incentive to deviate from $\sigma$: 
	\begin{equation}
\begin{aligned}
\forall {\color{teal}S \subseteq N}\quad &
\forall \sigma'_{s_i} \in \mS_{s_i}\quad
{\color{teal}  \forall p \in S}: \\
&u_p(\sigma) \geq u_p(\;\sigma[\sigma'_{s_1}/\sigma_{s_1},...,\sigma'_{s_j}/\sigma_{s_j}]\;) \;.	
\end{aligned}
\end{equation}
\end{definition}

We now  formalize how the resilience properties relate to each other\NEWR{, which motivates our definition of  \ref{P2}.} \REMOVER{It is not hard to prove that the following holds.}

\begin{lemma}[Resilience Properties] \label{lemma:impl}
	 Let $\sigma \in \mS$ be a joint strategy. The following and only the following implications hold.
	 
\begin{minipage}{0.34\textwidth}
	\begin{enumerate}
		\item $\sigma \text{ is \srs}\; \Rightarrow \; \sigma \text{ is \SR, \srp{}, \sNE}$.
		\item $\sigma \text{ is \SR} \; \Rightarrow \; \sigma \text{ is \srp}$.
	\end{enumerate}
	\end{minipage}
	\begin{minipage}{0.12\textwidth}
\begin{tikzpicture}[scale=0.8, baseline, ->]
		\node (1) at (0,0) {\srs};
		\node (2) at (-1.5,0) {\SR} ; 
		\node (3) at (0,-1.5) {\sNE};
		\node (4) at (-1.5,-1.5) {\srp} ;

		\path (1) edge  (2);
		\path (1) edge (3); 
		\path (1) edge  (4);
		\path (2) edge (4);
	\end{tikzpicture}
\end{minipage}
\end{lemma}

The next example further motivates why we decided to formalize \ref{P2} in terms of \emph{collusion resilience}. 

\begin{example} \label{ex:cr} Consider the games $\Gamma_1$ and $\Gamma_2$, respectively defined in Tables~\ref{tbl:G1}-\ref{tbl:G2}. The games $\Gamma_1$ and $\Gamma_2$  show that there exist cases where both  strong resilience and strong Nash Equilibria fail to correctly state whether rational players will deviate,  while collusion resilience does not. 

Let us study $\Gamma_1$ first. There are three players $P_1$ on the left, $P_2$ in the ``3rd dimension'' who only has one possible strategy, and $P_3$ at the top. Let us consider the joint strategy $\sigma = (H_1,H_2, H_3)$. Since $P_2$ does not have another choice, $P_2$ can never deviate. Player $P_1$ deviating alone yields the same utility as $\sigma$ and is thus irrelevant. The same holds for $P_3$. The only deviation that makes a difference, is if $P_1$ and $P_3$ change strategy together to $(D_1,H_2, D_3)$. By doing so, $P_1$ profits and receives 5 instead of 1, but $P_3$ looses by getting $-2$ instead of 1. Thus, $P_3$ does not have an incentive to do so, unless the two players collude for their mutual benefit and share their payoffs. This way they receive 1.5 each instead of 1 each, which poses a serious thread to $\sigma$ and should thus not be considered satisfying \ref{P2}. However, $(H_1, H_2, H_3)$ is \sNE{}, since $P_3$ has no incentive in deviating with $P_1$, if their utilities are not shared, but it is not \srp{}, since 
\begin{align}
   2&= u_{P_1}(H_1, H_2, H_3)+u_{P_3}(H_1, H_2, H_3)  \\ 
   &< u_{P_1}(D_1, H_2, D_3)+u_{P_3}(D_1, H_2, D_3)=3 \; . 
\end{align}

In the similar game $\Gamma_2$, on the contrary, $P_3$ has no incentive in deviating from $\sigma=(H_1,H_2,H_3)$ together with $P_1$, also  if their utilities are shared. Such a deviation yields 0.5 each, instead of 1 each in $\sigma$.
Hence, there is no incentive to change strategy for one or more players and therefore $(H_1, H_2, H_3)$ should be considered satisfying \ref{P2}.
    Nevertheless, according to \Cref{def:sr}, $(H_1, H_2, H_3)$ is not \SR, since at least one of the deviating parties $P_1$, $P_3$ profits from choosing $(D_1, H_2, D_3)$, although $P_3$ has no reason to play along. However, in $\Gamma_2$, $(H_1, H_2, H_3)$ is \srp{} as 
\begin{align}
       2&= u_{P_1}(H_1, H_2, H_3)+u_{P_3}(H_1, H_2, H_3)  \\
   &\geq u_{P_1}(D_1, H_2, D_3)+u_{P_3}(D_1, H_2, D_3)=1 \; .
\end{align}
\end{example}

\begin{table}
\centering
\begin{minipage}{0.47\linewidth}\centering
\caption{Three Player Game $\Gamma_1$.}
		\begin{tabular}{|r|c|c|}
		\hline
		{\tiny \rotatebox{45}{$H_2$}}& $H_3$ & $D_3$ \\
		\hline
		$H_1$ & ${\color{red}(1,1,1)}$ & $(1,1,1)$ \\
		\hline
		$D_1$ & $(1,1,1)$& $(\textcolor{cyan}{5},0,\textcolor{cyan}{-2})$\\
		\hline
	\end{tabular}
\label{tbl:G1}
\end{minipage}
\begin{minipage}{0.47\linewidth}\centering
\caption{Three Player Game $\Gamma_2$.}
	\begin{tabular}{|r|c|c|}
	\hline
	{\tiny \rotatebox{45}{$H_2$}} & $H_3$ & $D_3$ \\
	\hline
	$H_1$ & {\color{red}$(1,1,1)$} & $(1,1,1)$ \\
	\hline
	$D_1 $& $(1,1,1)$& $(\textcolor{cyan}{3},0,\textcolor{cyan}{-2})$\\
	\hline
\end{tabular}
\label{tbl:G2}
\end{minipage}
\end{table}

\begin{remark}[Formalizing (\ref{P1} and \ref{P2}]
Based on the resilience properties of \Cref{lemma:impl},  we say {\emph{(P2) is satisfied by a joint strategy $\sigma$}, if $\sigma$ is \srp{} and practical}.
In addition, a joint strategy \emph{$\sigma$ satisfies \ref{P1}, if $\sigma$ is weak immune}, as in \cite{CITE}.
\end{remark}

\NEWR{We conclude this section by defining secure game strategies/histories, as follows. }

\begin{definition}[Secure Strategy] 
A strategy $\sigma$ of an NFG/EFG  is \emph{secure} if it is weak immune, practical and \srp.
\end{definition}

\NEWR{When  discussing security in the setting of EFGs, we are interested mainly in assessing whether a  \emph{history} is  secure, as the protocol only defines an honest history instead of a full strategy. By applying  \Cref{def:extstrat}, we state the following security characterization.

\begin{definition}[Secure History] \label{def:secure}
A terminal history $\beta$ of an EFG is \emph{secure}  if there exist extended strategies $\sigma_1$, $\sigma_2$, and $\sigma_3$ of $\beta$, such that $\sigma_1$ is weak immune, $\sigma_2$ is practical and $\sigma_3$ is \srp. 
\end{definition}

We note that we do not have to find a secure extended strategy for the history to be secure, as aiming for one joint secure strategy in an EFG would be unnecessarily restrictive. Instead, our goal is to make sure that rational parties follow the honest history, no matter what their actual strategy is. In particular, an honest player follows the honest history by default, a rational player does so because of practicality and collusion resistance. Weak immunity further ensures that honest players as well as rational one cannot be damaged by Byzantine players while following the honest history. 
Hence, the strategy each player has in mind does not matter, since in a secure protocol weak immune, practical, and collusion resistant, strategies are overlapping along the honest history. This is the case because in \Cref{def:secure} we require $\sigma_1$, $\sigma_2$, and $\sigma_3$ to all yield the same history, namely $\beta$. We can therefore admit  that an  honest player has a weak immune strategy in mind, while a rational player has a practical one, as long as these overlap on the honest history. 
}

	\section{Closing Games \NEWR{of Off-Chain Protocols}} \label{sec:models}
We now define a new two-player EFG, called the \emph{Closing Game $G_c$}, 
in order to model closing phase properties of off-chain protocols, in particular of the Lightning Network. \NEWR{As explained in \Cref{sec:pcn}, to close a channel a party can unilaterally publish a channel state on-chain, which does not necessarily have to be the latest one. The one who closes, however, has to wait a certain amount of time until the money can be used. Meanwhile, the other party can steal all the money from the channel in case the state published on-chain is not the latest one: this ensures that rational players   close their channel only with the latest  state.  Alternatively, the parties can collaboratively sign a new transaction to split the money. In this case no one has to wait.}

Our closing game overcomes the limitations of previous work~\cite{CITE} in representing dishonest closing attempts, \NEWR{by} modeling  how closing can be achieved after a failed collaborative closing attempt and by also considering the additional fee $f$ to be paid in a revocation transaction.

To the best of our knowledge, our closing game $G_c$ is the most accurate model for the security analysis of off-chain protocols, notably of the Lightning Network.
 In our model of the closing phase we make the following assumptions for a channel between $A$ and $B$ \NEWR{at the moment where the closing phase is initiated.}
\begin{itemize}
	\item The fair split of the channel's funds is $a \to A$, $b \to B$ and $a>0$, $b>0$.
	\item The benefit of closing the channel is $\alpha$. Closing a channel yields a benefit, since it unlocks assets.
	\item The opportunity cost of having to wait for one's funds upon closing is $\epsilon$.
	\item When both players agree to update the channel we assume a fair deal in the background which yields a profit of $\rho$ for both parties.
	\item Publishing a revocation transaction on-chain costs a fee $f>0$.
	\end{itemize}
 
	Further, to properly model utilities in the closing game $G_c$, we define the following total order, which is crucial for \NEWR{analyzing} security properties of $G_c$. \NEWR{For capturing total order properties in the setting of $G_c$, we  extend the set  $\R$ of real numbers by the infinitesimal numbers $\alpha$, $\epsilon$ and $\rho$.}

\begin{definition}[Utility Order\label{def:utility:order}]
We consider the total order $(\U,\preccurlyeq)$, where $\U$ is the group resulting from closing $\R \;\dot{\cup}\; \{\alpha,\epsilon,\rho \}$ under addition. The total ordering $\preccurlyeq$ is uniquely defined by the following conditions.
\begin{enumerate}
\item On $\R$, the relation $\preccurlyeq$ is the usual less than or equal relation  $\preccurlyeq|_\R\: := \:\leq$.
\item The values $\alpha$, $\epsilon$ and $\rho$ are greater than 0, 
\begin{align}
\forall \xi \in \{\alpha,\epsilon,\rho\}: \; -\xi\prec 0 \prec \xi \;. \end{align}
\item The values $\alpha$, $\epsilon$ and $\rho$ are closer to 0 than any real number, 
\begin{align} \forall x \in \R, \xi \in \{\alpha,\epsilon,\rho\}, x>0: \; \xi \prec x,\; -x \prec -\xi . \end{align}
\item Additionally, $\alpha$, $\epsilon$ and $\rho$ have the order $\rho \prec \epsilon \prec \alpha$.
\end{enumerate}
\end{definition}

In general, unlocking funds gives additional financial freedom even if there is some processing delay; therefore, we choose \REPLACER{$\alpha-\epsilon \succ 0$}{$\epsilon\prec\alpha$ in \Cref{def:utility:order}}. Additionally, once the parties initiate the closing phase, it is reasonable to assume that 
no potential update significantly benefits both parties. 
In contrast, both parties are interested in avoiding the opportunity cost, i.e., the cost of having to wait for their funds upon closing,
therefore, we set $\rho \prec \epsilon$ \NEWR{in \Cref{def:utility:order}}.

\NEWR{
\begin{remark}
While the ordering conditions of \Cref{def:utility:order} may seem to be  restrictive, lifting them comes with the burden of  considering a high number of possible variable orderings.  In particular, one would need to consider
 $ (\text{number of variables})!$ orderings, which would highly complicate the  formal analysis task. Approximating or clustering the number of orderings, while weakening conditions in \Cref{def:utility:order}, is an interesting venue for future work. 
\end{remark}
}

Based on the utility ordering of \Cref{def:utility:order}, we introduce our \emph{Closing Game for Player $A$} below.

\begin{table}[t]
    \centering
        \caption{Possible Actions in $G_c(A)$.} 
    \begin{tabularx}{\linewidth}{|@{\hspace{.5em}}>{\bfseries}l@{\hspace{.5em}}X@{\hspace{.5em}}|}
    \hline
        $H$ & Close unilaterally and \emph{honestly} without reacting to a previous move, such as a collaborative closing attempt. \\
    \hline
        $D$ & Close unilaterally but \emph{dishonestly} (without reacting to a previous move) with a profit of $d_A \in (0,b]$ in $A$'s case, $d_B \in (0,a]$ in $B$'s case.\\
    \hline
        $C_h$ & Try to close \emph{collaboratively} and \emph{honestly}, that is proposing a fair split.\\
    \hline
        $C_c$ & Try to close \emph{collaboratively} but by \emph{cheating} the other party by $c \in (0,b]$, that means proposing an unfair split.\\
    \hline
        $S$ & \emph{Signing} the collaborative closing attempt of the other player.\\
    \hline 
     $\mathfrak{I}$ & \emph{Ignore} the previous action and do nothing.\\
    \hline
    $P$ & \emph{Prove} other party tried to close dishonestly. That means stating a revocation transaction. We assume \NEWR{its publication requires a fee of $f>0$ and that} the attempt to do so is always successful, that is that the miners behave honestly.\\
    \hline
    $U^+$ & Propose an \emph{update} of the channel where player $A$'s balance is \emph{increased} by $p_A \in (0,b]$.\\
    \hline
    $U^-$ & Propose an \emph{update} where player $A$'s balance is \emph{decreased} by $p_B \in (0,a]$.\\
    \hline 
    $\mathfrak{A}$ & Agree to a proposed update.\\
    \hline
    \end{tabularx}
    \label{tbl:actions}
\end{table}

\begin{definition}[Closing Game $G_c(A)$ of Player $A$] \label{def:closing}
	The \emph{Closing Game} $G_c(A)=(N, \mH, P, u)$ is an EFG with two players $N=\{A,B\}$. The tree representation of $G_c(A)$ in \Cref{tbl:Gc} defines $\mH$, $P$ and $u$\footnote{The subgames $S_i$, $S'_i$ are given in the appendix.}, 
	with the actions of the game being summarized in \Cref{tbl:actions}.
\end{definition}

Note that 
	the \emph{utility  function} $u$ of $G_c(A)$ in \Cref{tbl:Gc} assigns player $p\in N$ the money player $p$ received minus the money player $p$ deserved based on the latest channel state. The  values of closing ($\alpha$), updating ($\rho$) and waiting ($-\epsilon$) are also considered in \Cref{tbl:Gc}. As discussed in \Cref{sec:pcn}, the fee needed for the closing transaction is assumed to have been reserved among the locked funds in the channel all the time and is spent upon closing, therefore not affecting the players' channel balance.

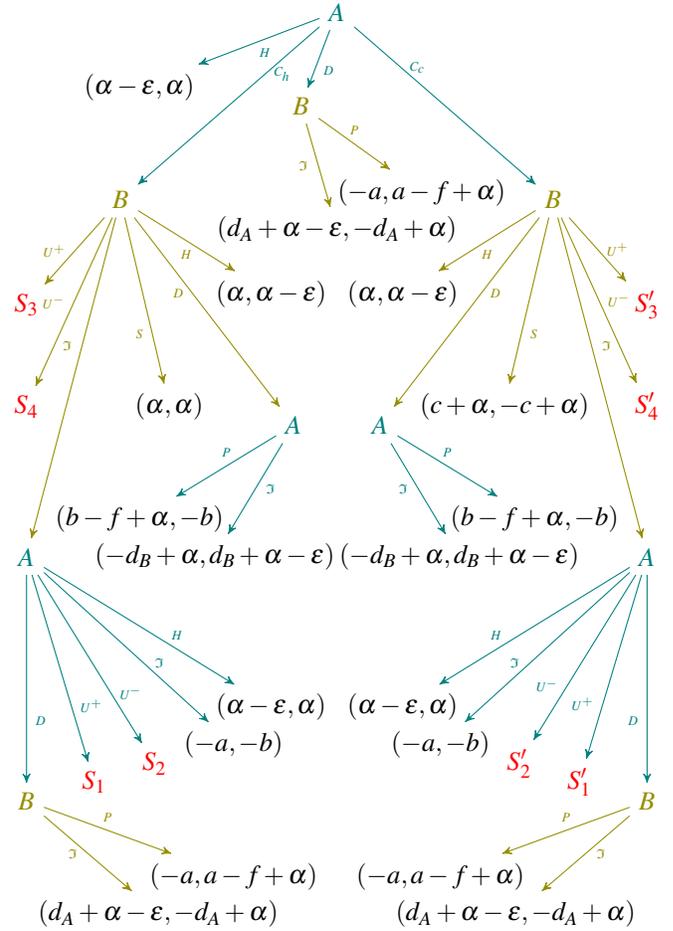
\begin{figure}[t]
	\centering
	\begin{tikzpicture}[scale=1,->,>=stealth',auto,node distance=2cm, el/.style = {inner sep=2pt, align=left, sloped}]
		\node (1) at (-0.375,0.5) {\color{teal}$A$};
		
		\node (2) at (-3.25,-2) {\color{olive} $B$} ; 
		\node (3) at (-3,-0.5) { $(\alpha-\epsilon,\alpha)$};
		\node (4) at (-0.85,-0.75) {\color{olive}$B$};
		\node (5) at (2.5,-2) {\color{olive}$B$};
		
		\node (8) at (0.75,-1.9) { $(-a,a-f+\alpha)$};
		\node (9) at (-0.375,-2.4) { $(d_A+\alpha-\epsilon, -d_A+\alpha)$};
		
		\node (6) at (-4.5,-6.75) {\color{teal}$A$};
		\node (7) at (-2.6,-4.75) { $(\alpha,\alpha)$};
		\node (27) at (-1.25,-3.25) {$(\alpha,\alpha-\epsilon)$};
		\node (28) at (-4.5,-3.4) {\color{red}$S_3$};
		\node (29) at (-4.5,-4.75) {\color{red}$S_4$};
		\node (30) at (-0.95,-5) {\color{teal}$A$};
		
		\node (10) at (1.85,-4.75) {$(c+\alpha,-c+\alpha)$};
		\node (11) at (3.75,-6.75) {\color{teal}$A$};
		\node (31) at (0.2,-5) {\color{teal}$A$};
		\node (32) at (3.75,-4.75) {\color{red}$S'_4$};
		\node (33) at (3.75,-3.4) {\color{red}$S'_3$};
		\node (34) at (0.5,-3.25) {$(\alpha,\alpha-\epsilon)$};
		
		\node (36) at (-3,-6.25) { $(b-f+\alpha,-b)$} ; 
		\node (37) at (-2,-6.75) { $(-d_B+\alpha,d_B+\alpha-\epsilon)$};
		\node (38) at (2.25,-6.25) { $(b-f+\alpha,-b)$};
		\node (39) at (1.25,-6.75) {$(-d_B+\alpha,d_B+\alpha-\epsilon)$};
		
		\node (12) at (-4.5,-10) {\color{olive}$B$};
		\node (15) at (-3.6,-9.75) {\color{red}$S_1$};
		\node (16) at (-2.8,-9.5) {\color{red}$S_2$};
		\node (13) at (-1.25,-8.75) {$(\alpha-\epsilon,\alpha)$};
		\node (14) at (-1.75,-9.25) {$(-a,-b)$};

		\node (21) at (3.75,-10) {\color{olive}$B$};
		\node (18) at (2.85,-9.75) {\color{red}$S'_1$};
		\node (17) at (2.05,-9.5) {\color{red}$S'_2$};
		\node (20) at (0.5,-8.75) { $(\alpha-\epsilon,\alpha)$};
		\node (19) at (1.,-9.25) {$(-a,-b)$};
		
		\node (22) at (-1.75,-11) {$(-a,a-f+\alpha)$} ; 
		\node (23) at (-2.75,-11.5) {$(d_A+\alpha-\epsilon,-d_A+\alpha)$};
		\node (24) at (1,-11) { $(-a,a-f+\alpha)$};
		\node (25) at (2,-11.5) {$(d_A+\alpha-\epsilon,-d_A+\alpha)$};	
		
		{\color{teal}
		\path (1) edge node [right, pos=0.3] {\tiny $C_h$}  (2);
		\path (1) edge node [below, pos=0.45] {\tiny $H$}  (3); 
		\path (1) edge node [right, pos=0.7] {\tiny $D$}  (4);
		\path (1) edge node [above, pos=0.35] {\tiny $C_c$}   (5);}
		
		{\color{olive}	
		\path (4) edge node [above] {\tiny $P$}   (8);
		\path (4) edge node [left] {\tiny $\mathfrak{I}$}   (9);
		}
		
		{\color{teal}
		\path (6) edge node [right, pos=0.7] {\tiny $D$}   (12);
		\path (6) edge node [above, pos=0.7] {\tiny $H$}   (13); 
		\path (6) edge  node [above, pos=0.7] {\tiny $\mathfrak{I}$}  (14);
		\path (6) edge node [right, pos=0.7] {\tiny $U^+$}   (15);
		\path (6) edge node [right, pos=0.7] {\tiny $U^-$}   (16);
		\path (11) edge node [left, pos=0.67] {\tiny $U^-$}  (17);
		\path (11) edge node [left, pos=0.7] {\tiny $U^+$}   (18);
		\path (11) edge node [above, pos=0.7] {\tiny $\mathfrak{I}$}  (19); 
		\path (11) edge node [above, pos=0.7] {\tiny $H$}   (20);
		\path (11) edge node [left, pos=0.7] {\tiny $D$}   (21);}
		
		{\color{olive} 
		\path (12) edge node [above] {\tiny $P$}   (22);
		\path (12) edge node [left] {\tiny $\mathfrak{I}$}   (23); 
		\path (21) edge node [above] {\tiny $P$}   (24);
		\path (21) edge node [right] {\tiny $\mathfrak{I}$}   (25); }
		
		{\color{olive}
		\path (2) edge node [below, pos=0.5] {\tiny $H$}   (27); 
		\path (2) edge node [left, pos=0.5] {\tiny $U^+$}   (28);
		\path (2) edge node [left, pos=0.5] {\tiny $U^-$}   (29);
		\path (2) edge node [left, pos=0.4] {\tiny $D$} (30);
		\path (2) edge node [left, pos=0.4] {\tiny $\mathfrak{I}$}  (6);
		\path (2) edge node [left, pos=0.7] {\tiny $S$}   (7); 
	
		\path (5) edge node [right, pos=0.7] {\tiny $S$}   (10);
		\path (5) edge node [right, pos=0.4] {\tiny $\mathfrak{I}$}  (11);
		\path (5) edge node [right, pos=0.4] {\tiny $D$}   (31);
		\path (5) edge node [right, pos=0.5] {\tiny $U^-$}   (32);
		\path (5) edge node [right,pos=0.5] {\tiny $U^+$}   (33); 
		\path (5) edge node [below, pos=0.5] {\tiny $H$}   (34);}
		
		{\color{teal}
		\path (30) edge node [above] {\tiny $P$}   (36);
		\path (30) edge node [right] {\tiny $\mathfrak{I}$}   (37); 
		\path (31) edge node [above] {\tiny $P$}   (38);
		\path (31) edge node [left] {\tiny $\mathfrak{I}$}  (39);}
	\end{tikzpicture}
	\vspace{-0.2cm}
\caption{Closing Game $G_c(A)$.}
\label{tbl:Gc}
\end{figure}

The closing game for player $B$, $G_c(B)$ is defined similarly to  $G_c(A)$, with the roles of $A$ and $B$ being swapped in \Cref{def:closing}. 
Based on the closing games $G_c(A)$ and $G_c(B)$, we consider  the closing phase in an off-chain channel as given in \Cref{fig:C} and defined below.


\begin{definition}[Closing Phase]
	The  \emph{closing phase} of an off-chain channel modeled by a closing game $G_c(A)$ is initiated in one of three ways: (i)  $A$ starts with a closing action $C$,  and thus triggers the {closing game} $G_c(A)$; 
	(ii) $A$ does not start a closing action, thus performing action ignore $\mathfrak{I}$, but  $B$ starts with a closing action $C$ and triggers $G_c(B)$; 
	or (iii) none of the players $A$ and $B$ ever start closing, that is $B$ also choosing action $\mathfrak{I}$, in which case the money stays locked in the channel. Then, we get the EFG $\Gamma_C$ from \Cref{fig:C} \NEWR{modeling the closing phase of $G_c(A)$ and $G_c(B)$}.
	\end{definition}

	\section{Closing Games for \REPLACER{the Security Analysis of Lightning Channels}{Secure  Lightning Channels}}\label{sec:sec}

We  now show that the closing games from \Cref{def:closing}  precisely capture \NEWR{secure} closing \NEWR{phases} in Lightning channels~\cite{lightning}. Namely, the following two terminal histories of closing games   model the honest behavior of Lightning: 
(i) history $(H)$ from \Cref{tbl:Gc} represents unilateral honest closing of $A$, yielding utility $(\alpha-\epsilon,\alpha)$;  
and (ii)  history $(C_h,S)$ captures the attempt of $A$ to close collaboratively and honestly, while $B$ signs, with a utility of $(\alpha,\alpha)$. Our security analysis  focuses on these two honest histories  of Lightning channels. 

\NEWR{
\begin{definition}[Honest Closing]\label{def:honest:closing}
The only \emph{honest histories} in the closing game $G_c(A)$ are the terminal histories $(H)$ honest unilateral closing and $(C_h,S)$ honest collaborative closing. All strategies yielding one of the two histories are considered \emph{honest strategies}.
\end{definition}
}

In the following, the values $d_{A}$ (resp. $d_{B}$)  defined in \Cref{tbl:actions} (line $D$) represent the difference of funds between the latest state and the old one that is dishonestly posted on chain by $A$ (resp. $B$). In other words, if $(a,b)$ is the latest state, the one posted on chain is   $(a+d_A,b-d_A)$ (resp.  $(a-d_B,b+d_B)$), thus enabling dishonest closing attempts of profit $d_A$ for $A$ (resp.  $d_B$ for $B$).
   The values $p_{A,B}$ (\Cref{tbl:actions}, lines $U^+$, $U^-$) and $c$ (\Cref{tbl:actions}, line $C_c$)  can respectively be chosen by $A$ and $B$ at the time of the action and do not depend on previous distribution states.  \NEWR{Based on this setting, we derive the security properties~\ref{P1} and \ref{P2} of Lightning channels as given below. The omitted proofs are given in the appendix. 

} 

\begin{theorem}[Weak Immunity of Honest Behavior -- \ref{P1}]
 \label{thm:wi}
 The terminal histories $(H)$ \NEWR{of honest unilateral closing}, and $(C_h,S)$ \NEWR{ of honest collaborative closing} of $G_c(A)$ are weak immune, if \NEWR{the channel balances are higher than the fee required in a revocation transaction, that is} if $a\geq f$ and $b\geq f$.
\end{theorem}

\begin{figure}
    \centering
	\begin{tikzpicture}[->,>=stealth',auto,node distance=2cm, el/.style = {inner sep=2pt, align=left, sloped}]
			\node (1) at (0,0) {$A$};
			\node (2) at (0,-1) {$G_c(A)$} ; 
			\node (3) at (2,0) {$B$};
			\node (4) at (2,-1) {$G_c(B)$};
			\node (5) at (4.5,0) {$(-a,-b)$};
	
			\path (1) edge  node [left, pos=0.5] {\tiny $C$} (2);
			\path (1) edge  node [above, pos=0.5] {\tiny $\mathfrak{I}$} (3); 
			\path (3) edge  node [left, pos=0.5] {\tiny $C$} (4);
			\path (3) edge  node [above, pos=0.5] {\tiny $\mathfrak{I}$} (5);
	\end{tikzpicture} 
    \caption{Closing Phase $\Gamma_C$.}
    \label{fig:C}
\end{figure}
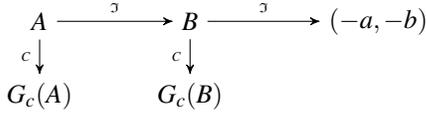



\Cref{thm:wi} implies that as long as both players have a minimal balance of $f$ in the channel, no honest player can lose money. As such, \Cref{thm:wi} establishes the security property \ref{P1} ensuring ``no honest loss" \NEWR{in the channel}. 

Further, to ensure the security property \ref{P2} of ``no deviation",  we require that 
\begin{align}
&a-p_B+d_A \geq f \quad \text{and} \label{eq:ineqA} \\ &b-p_A+d_B \geq f\;\label{eq:ineqB} .
\end{align}

To understand the inequations~\eqref{eq:ineqA}-\eqref{eq:ineqB} consider
the history $(C_h,\mathfrak{I},U^-,\mathfrak{A},D)$ in \Cref{tbl:Gc}, respectively $S'_1$. This history formalizes the case where $A$ attempts honest collaborative closing (action $C_h$) and $B$ ignores it (action $\mathfrak{I}$). Then $A$ proposes an update (action $U^-$) from state $(a,b)$ to state $(a-p_B, b+p_B)$ and $B$ agrees (action $\mathfrak{A}$). Finally, $A$ closes dishonestly (action $D$) using the old distribution state $(a-p_B+d_A, b+p_B-d_A)$.  
Let us also study the options $B$ has.  By ignoring $A$'s behavior (action $\mathfrak{I}$), $B$ receives $b+p_B-d_A$ instead of the fair amount $b+p_B$, leaving $B$ with a loss of $d_A$. By publishing the revocation transaction (action $P$), $B$ receives $a+b$ but has to pay the fee $f$ for pushing it on the blockchain, which leads to a win of $a-p_B-f$. Therefore, the win should be greater than the loss; hence 
\begin{align}
a-p_B-f \geq -d_A \quad \Leftrightarrow \quad a-p_B+d_A\geq f \;,
\end{align}
in order for a rational $B$ to publish the revocation transaction. This in turn yields a loss for $A$ and hence  discourages $A$ from closing dishonestly, which is necessary for the incentive compatibility \ref{P2} of Lightning's closing phase.
By swapping $A$'s and $B$'s roles, we get the  prerequisite formulated in \eqref{eq:ineqB}. These extreme cases of dishonest closing subsume the others. Thus, the only preconditions we need in
the following \Cref{thm:incentcomp} are \eqref{eq:ineqA}--\eqref{eq:ineqB}.
 In summary, formulas~\eqref{eq:ineqA}--\eqref{eq:ineqB}  ensure that ignoring any dishonest closing attempt is worse than publishing the revocation transaction.  Property \ref{P2} is then established by the following theorem. 

\begin{theorem}[Incentive-Compatibility -- \ref{P2}]
\label{thm:incentcomp} 
\noindent If $a-p_B+d_A \geq f$ and $b-p_A+d_B \geq f$, then \begin{enumerate}
    \item honest unilateral closing $(H)$ is \srp, but \emph{not} practical.
    \item honest collaborative closing $(C_h,S)$ is \srp. It is practical iff $c \neq p_A$.
\end{enumerate}
\end{theorem}

\begin{remark}[Explanation of $c \neq p_A$] \label{rmk:ceqp}
The condition $c \neq p_A$ in \Cref{thm:incentcomp} has the following  relevance. Player $A$ can in principle choose to propose dishonest collaborative closing (action $C_c$), providing $A$ an unfair advantage of value $c$. Then, either $B$ (action $U^+$) or $A$ ($B$ choosing action $\mathfrak{I}$ to ignore  first, then $A$ taking action $U^+$) can propose a channel update $(a,b) \mapsto (a+c,b-c)$. The value of the update $p_A$ is now equal to the amount player $A$ cheated with in $C_c$: $p_A=c$. 
In this special case, the closing game behaves differently.  The described histories $(C_c, \mathfrak{I}, U^+)$ and $(C_c,U^+)$ lead to the subgames $S_1'$  and $S_3'$  respectively. Let us consider $S_3'$ with $p_A=c$.

Assume $A$ agrees to the update, action $\mathfrak{A}$, and player $B$ signs the initially unfair collaborative closing attempt of $A$. Since in the meantime the channel was updated by the exact amount that $A$ tried to cheat with, the pending collaborative closing now contains the fair split. Therefore, both players profit from this course of action, yielding utility $(\rho+\alpha,\rho+\alpha)$. The analog can be achieved in subgame $S_1'$ with the history $(\mathfrak{A},\mathfrak{I},S)$. In fact, for $p_A=c$, those histories are the only practical ones and provide the mutually best outcome possible.

However,  updating to $(a+c,b-c)$ first and then closing honestly and collaboratively yields the exact same result. This is why we study the closing game without the possibility of updating after a closing attempt in the next section \Cref{sec:noup}.
\end{remark}

We  now state our first main security theorem. Since  $(H)$ is not practical, a rational player will not play it. Hence, the terminal history $(H)$ is not  secure. We get the following security result instead for $(C_h,S)$.

\begin{theorem}[Security of $G_c(A)$]\label{thm:sec:closing}
If $a\geq f$, $b\geq f$, $a-p_B+d_A \geq f$, $b-p_A+d_B \geq f$, and $c \neq p_A$, then the closing game $G_c(A)$ together with the honest behavior $(C_h,S)$ is \emph{secure}.
\end{theorem}
\begin{proof}
As $a\geq f$ and $b\geq f$, we have that $(C_h,S)$ is weak immune (\Cref{thm:wi}). Since  $a-p_B+d_A\geq f$, we derive  $b-p_A+d_B \geq f$ and $c \neq p_A$, \NEWR{we have that $(C_h,S)$ is also practical} and \srp{} (\Cref{thm:incentcomp}). Hence, by \Cref{def:secure},  $(C_h,S)$ is secure.
\end{proof}

\NEWR{\Cref{thm:sec:closing} implies that for  honest and rational players the action of  collaborative closing followed by signing $(C_h,S)$ is the best way to close an off-chain channel. It also implies, that rational adversaries will cooperate. Further, Byzantine players represent no threat as long as their channel balances are high enough and they do not engage in special cases of  channel updates after a collaborative closing attempt. }

\NEWR{We note that for proving our security properties~\ref{P1}-\ref{P2} in \Cref{thm:wi}--\Cref{thm:sec:closing},
we rely  on a succinct analysis of the finite graph properties of the closing game $G_C(A)$ from \Cref{tbl:Gc}. While automated  approaches analyzing a finite number of graph properties exist, see e.g.~\cite{GRAPHtoolone,graphtooltwo}, these approaches cannot handle (game) graphs where graph leaves contain variables, instead of specific numerical values, which is the case of $G_C(A)$. For such cases, automated reasoning tools, such as theorem provers, need to be combined with graph-theoretic manipulations of $G_C(A)$, an approach we aim to investigate as a future work towards automating the security analysis (and proofs) of closing games. }

\subsection{Closing Games without Updates}\label{sec:noup}
We will now consider a variation of closing games without updates, as updating is not beneficial for at least one player  upon closing.
Furthermore, we avoid special cases such as the one described in \Cref{rmk:ceqp}, which should be  equivalent to updating before initiating $G_c(A)$, and then closing honestly and collaboratively. As such, 
the {\it closing game  $G_c(A)$ without updates}  results from removing all actions $U^+$ and $U^-$ in   \Cref{tbl:Gc}.    For the resulting closing game  $G_c(A)$ without updates we get the following security result similar to \Cref{thm:sec:closing}. 


\begin{theorem}[Security of $G_c(A)$ without Updates]\label{thm:noup}
If $a\geq f$ and $b\geq f$, then the closing game $G_c(A)$ without updates and together with  both honest histories $(H)$ and $(C_h,S)$ is \emph{secure}.
\end{theorem}

\begin{proof} 
\NEWR{We  respectively fix honest strategies $\sigma$  and $\sigma'$  for histories  $(H)$ and  $(C_h,S)$; let $\sigma'$  have $A$ choosing $C_h$ initially, $P$ after $(C_h,D)$ and $H$ after $(C_h,\mathfrak{I})$, and then $B$ choosing  $S$ after $(C_h)$, $P$ after $(D)$ and $H$ after $(C_c)$. 
 Infer that the deviation of $A$   causes negative utility for $B$, whereas the deviation of $B$ leads to non-negative utility for $A$ as  $b-f\geq 0$. By \Cref{thm:wi} we thus have that  $\sigma'$, and therefore $(C_h,S)$, are weak immune. In addition, \Cref{thm:wi} implies that also $(H)$ is weak immune.}
 
To show practicality, we compute all subgame perfect terminal histories. From $a\geq f$ and  $b\geq f$ we have  $a+d_A\geq f$ and $b+d_B \geq f$.  
\NEWR{Since closing with a  dishonest behavior yields utility $a-f+\alpha$, $b-f+\alpha$ respectively, whereas ignoring a dishonest behavior   leads to $-d_A+\alpha$ and $-d_B+\alpha$, we conclude that  the best choice after action $D$ is always $P$.}
%
Thus, $A$'s best choice after $(C_h,\mathfrak{I})$ and $(C_c,\mathfrak{I})$ is $H$. Therefore, $B$ has the two subgame perfect options $\mathfrak{I}$ and $S$ after $(C_h)$, and only $\mathfrak{I}$ after $(C_c)$, yielding thus  the following practical histories:   history $(C_h,S)$ with utility $(\alpha,\alpha)$; and $(C_h,\mathfrak{I},H)$, $(C_c,\mathfrak{I},H)$, and $(H)$ each with utility $(\alpha-\epsilon,\alpha)$. Therefore, both $(H)$ and $(C_h,S)$ are practical.

Note that every practical terminal history is a Nash Equilibrium, since if a deviation could benefit a player, the player would have chosen differently already. As  \srp{} is equivalent to Nash Equilibria in two-player games \NEWR{(by \Cref{def:sr}
and 
\Cref{lemma:impl}), we use \Cref{def:practicalEFG} and \Cref{lemma:impl}} to conclude that  practicality of $(H)$ and $(C_h,S)$ implies collusion resistance \srp{} of $(H)$ and $(C_h,S)$. 
\NEWR{As $(H)$ and $(C_h,S)$ are both weak immune, practical and \srp{}, by \Cref{def:secure} we  infer   that they are also secure. }
\end{proof}

\NEWR{
\begin{remark}
Note that the analysis of utilities in  the closing game $G_c(A)$  crucially depends on constraints of the underlining ordering that we set in \Cref{def:utility:order}, and thus on the values of variables $a,b,c,d_{A,B},f$ in \Cref{tbl:actions}.
%
In general, the bigger $\epsilon$ gets in \Cref{def:utility:order}, the more discouraged is closing unilaterally in \Cref{tbl:actions}, and hence in \Cref{tbl:Gc}. Further, $B$  is more likely to accept a dishonest collaborative closing attempt $C_c$, as it is better to lose $c$ than to lose $\epsilon$.
\end{remark}
}

We further study  what happens if a player has almost no funds left in a channel. In particular, we show that security properties, \NEWR{in particular weak immunity and practicality},  are  violated in this  case, thereby formalizing  the following folklore   in the community. 

\begin{theorem}[Little Funds] \label{thm:secflaw}
If $a<f$, then  only 
terminal histories that involve an explicit cheating attempt are weak immune \NEWR{in the closing game $G_c(A)$  without updates}. A terminal history involves an explicit cheating attempts  if one of its actions is $C_c$ or $D$.
\end{theorem}
\begin{proof}
Let $\sigma$ be any strategy, yielding a history that does not involve an explicit cheating attempt. Then $A$ can deviate to a strategy where $A$ chooses $D$ as its first action. In this case, the honest $B$ gets negative utility, \NEWR{no matter whether} $B$ chooses $P$ or $\mathfrak{I}$, since $a<f$. Hence, only histories that involve explicit cheating attempts can be weak immune.
\end{proof}

We next derive the following results on security properties. 
\begin{corollary} \label{cor:cor2}
If there exists an old channel state $(a+d_A,b-d_A)$, with $a+d_A <f$, then neither history $(H)$ nor $(C_h,S)$ is  weak immune nor practical, but \srp.
\end{corollary}

\begin{corollary}\label{cor:cor3}
A rational party should \emph{never}, in any channel, let the opponent's balance fall below $f$, because at that point the other party can always cause financial loss by closing dishonestly and unilaterally\footnote{The special edge cases $a=0$ or $b=0$ are considered in the appendix.}. 
\end{corollary}
\begin{proof}
Once the opponent's balance is below $f$, that party can start the closing game, therefore the opponent becoming $A$. Thus, by applying \Cref{thm:secflaw}, it follows that the opponent can make the rational player lose money by closing unilaterally and dishonestly. If it is not the first time that $A$'s balance is below $f$ and the respective old state contains a higher balance for $A$ than the latest one, then we are even in the situation of \Cref{cor:cor2}. It is thus  rational of $A$ (practical) to close dishonestly. 
\end{proof}

\subsection{Optimal Strategy for Closing Off-Chain}
\NEWR{To summarize, our security analysis based on  closing games for Lightning channels yields the following  results.} 
 \Cref{thm:noup}-\Cref{thm:secflaw}, together with  with \Cref{cor:cor2}-\Cref{cor:cor3}, allow us to derive the optimal strategy for closing an off-chain channel for a rational and suspicious player. \NEWR{ We next describe and illustrate this optimal strategy, highlighting the main steps of our security analysis based on \Cref{thm:noup}-\Cref{thm:secflaw}.}
 
 \NEWR{Without loss of generality,} we  assume the current state of the channel is $(a,b)$.\vspace{0.2cm}
 
 \noindent The player, \NEWR{assumed to be player $A$}, who initiated the closing phase shall:
\begin{itemize}
    \item try to close honestly and collaboratively (action $C_h$), if there does not exist an old state $(a+d_A, b-d_A)$, where $d_A>0$ and $a+d_A<0$. In case the other player, that is player $B$,  does not sign (action $S$), \NEWR{player $A$}  shall close honestly and unilaterally (action $H$).
    
    If player $B$ closed dishonestly and unilaterally (action $D$), \NEWR{player $A$} shall:
    \begin{itemize}
        \item state the revocation transaction \NEWR{(action $P$)}, if the state used for cheating was $(a-d_B,b+d_B)$, where $d_B>0$ and $b+d_B \geq f$. 
        \item ignore the cheating otherwise \NEWR{(action $\mathfrak{I}$)}, as it yields less loss.
    \end{itemize}
    \item close dishonestly and unilaterally (action $D$), if there exists an old state $(a+d_A,b-d_A)$, where $d_A>0$ and $a+d_A<f$. In this case, \NEWR{player $A$} shall use the old distribution state $(a+d_A',b-d_A)$, with the highest $d_A'>0$ that still satisfies $a+d_A'<f$.
\end{itemize}
\vspace{0.2cm}
The reacting player, \NEWR{in this case assumed to be player $B$}, shall:
\begin{itemize}
    \item sign the collaborative honest closing attempt (action $S$), if applicable, if there is no old state  $(a-d_B,b+d_B)$, $d_B>0$ in which the funds of \NEWR{player $B$}  are less then $f$, that is if $b+d_B<f$. 
    \item  close honestly and unilaterally (action $H$), in case of a dishonest collaborative closing attempt (action $C_c$). This holds, if there is no old state  $(a-d_B,b+d_B)$, $d_B>0$ in which the player \NEWR{$B$'s} funds are less then $f$, that is  $b+d_B<f$. 
    \item otherwise ignore \NEWR{(action $\mathfrak{I}$)} the collaborative and honest/dishonest closing attempt, if applicable, and close dishonestly and unilaterally (action $D$), using the old state $(a-d_B',b+d_B')$, with the highest $d_B'>0$ that still satisfies $b+d_B'<f$.
    \item state the revocation transaction (action $P$), if player $A$ tried to close dishonestly and unilaterally (action $D$) with state $(a+d_A,b-d_A)$, where $d_A>0$ and $a+d_A \geq f$.
    \item ignore (action $\mathfrak{I}$) if player $A$ closed dishonestly (action $D$), in the case where $a+d_A < f $, as it yields less loss.
\end{itemize}

\begin{example}
Let players $A$ and $B$ share a channel with initial balance $(5,5)$ and let us assume the fee for publishing a revocation transaction $f=2$. After the first update let their state be $(3,7)$. The optimal way for $A$ to close now is $C_h$ and for $B$ to sign. Dishonest closing would cause $B$ to publish the revocation transaction, yielding a loss of $3$ for $A$ and a profit of $3-2=1$ for $B$.

The next update could be $(1.8,8.2)$. The best way to close for $A$ is still $C_h$. Dishonest closing using $(3,7)$, for example, would still cause $B$ to publish the revocation transaction. Player $B$ would in this case lose $1.8-2=-0.2$, but he would lose more, $7-8.2=-1.2$, by ignoring it. 

Another update could be $(1,9)$. Now the optimal strategy for $A$ to close is $D$, using the old state $(1.8, 8.2)$. Ignoring the dishonest closing (action $\mathfrak{I}$) brings $B$ $-0.8$, but proving $A$'s cheating (action $P$) leads to $1-2=-1$. Hence, a rational $B$ will choose to ignore (action $\mathfrak{I}$), that means $B$ does not publish the revocation transaction.
\end{example}

	\section{\REPLACER{Further Games for Security Analysis}{Beyond Closing Games for Off-Chain Security} }\label{sec:refine}

\NEWR{Our game-theoretic  analysis so far focused on using closing games to capture  security properties of off-chain channels (\Cref{sec:models}), and in particular of  Lightning channels (\Cref{sec:sec}). In this section, we show that our game-theoretic formalism from \Cref{sec:theory} is  expressive enough to analyse more complex protocols than just closing phases in Lightning channels. 
In particular, we introduce a new EFG, called the {\it Routing Game} in \Cref{sec:routingGame}, and use this game in
\Cref{sec:routingGameAnalysis} 
to disprove security of Lightning's routing mechanism amid the Wormhole and Griefing  attacks~\cite{AMHL,griefing}.
 We also discuss a natural extension of our analysis to model  other off-chain protocols in \Cref{sec:otherProtocols}. 
}

\begin{figure*}[t]
	\centering
	\begin{tikzpicture}[scale=0.975,->,>=stealth',auto,node distance=2cm, el/.style = {inner sep=2pt, align=left, sloped}]
		\node (1) at (-3,0) {$B$};
		\node (7) at (-4.25,-0.75) {$(0,0,0,0,0)$} ; 
		\node (12) at (-1.5,-0.5) {$A$};
		\node (13) at (-3.25,-1.75) {$(0,0,0,0,0)$};
		\node (2) at (0,-1) {$E_1$};
		\node (8) at (-2,-2.5) {$(-\epsilon,0,0,0,0)$} ;
		\node (14) at (1.5, -1.5) {$I$};
		\node (15) at (-0.5, -3) {$(-\epsilon,-\epsilon,0,0,0)$};
		\node (3) at (3,-2) {$E_2$};
		\node (9) at (1,-3.5) {$(-\epsilon,-\epsilon,-\epsilon,0,0)$} ;
		\node (16) at (4.5,-2.5) {$B$};
		\node (17) at (2.5, -4) {$(-\epsilon,-\epsilon,-\epsilon,-\epsilon,0)$};
		\node (4) at (6,-3) {$E_2$};
			\node (20) at (3.5,0.5) {$E_1$};
			
			\node (21) at (4.75,0.) {$I$};
			\node (22) at (6.5,2) {\color{cyan}$(\rho,\;m+3f-\epsilon,\;-\epsilon,\;-m,\;\rho)$};		
			\node (46) at (0.75,1.5) {$(m+3f+\rho-\epsilon,-\epsilon,-\epsilon,-m,\rho)$};
			
			\node (77) at (3.5,1.75) {\color{olive} $B$};
			\node (78) at (2,2.25) {\color{olive} $I$};
			\node (79) at (-1.5,2.75)  {\color{olive}$(\rho,\;m+3f-\epsilon,\;-\epsilon,\;-m,\;\rho)$}; 
			\node (80) at (5.5,2.75) {\color{olive}$(\rho, f, m+2f-\epsilon, -m, \rho)$};

			\node (69) at (6,0.5) {$E_1$};
			\node (70) at (6,-0.5) {$E_1$};

			\node (73) at (8.5,1.25) {$(m+3f+\rho-\epsilon,-\epsilon,-\epsilon,-m,\rho)$};
			\node (74) at (9.5,0.5) {$(\rho,m+3f-\epsilon,-\epsilon,-m,\rho)$};
			
			\node (75) at (8.75,-1.25) {$(m+3f+\rho-\epsilon,-m-2f,m+2f-\epsilon,-m,\rho)$};
			\node (76) at (9.5,-.5) {$(\rho,f,m+2f-\epsilon,-m,\rho)$};

		\node (10) at (4,-4.6) {$B$} ;
		
		    \node (56) at (5,-5) {$\mathfrak{S}_5$} ;
		    \node (57) at (2.75,-5.5) {$I$} ;
		    \node (58) at (0,-4.9) {$(m+3f+\rho-\epsilon,-\epsilon,-\epsilon,-m,\rho)$} ;
		    
		    \node (59) at (1,-5.75) {$E_1$} ;
		    \node (61) at (3.75,-6) {$\mathfrak{S}_6$} ;
		    \node (60) at (2,-7.25) {$B$} ;
		    
		    \node (62) at (-1.75,-6.75) {$(m+3f+\rho-\epsilon,-m-2f,m+2f-\epsilon,-m,\rho)$} ;
		    \node (63) at (-3.,-6.25) {$(\rho,f,m+2f-\epsilon,-m,\rho)$} ;
		    
		    \node (64) at (-2,-7.75) {$(m+3f+\rho-\epsilon,-\epsilon,-\epsilon,-m,\rho)$} ;
		    \node (65) at (2,-8.25) {$E_1$} ;
		    
		    \node (66) at (-2,-8.75) {$(m+3f+\rho-\epsilon,-\epsilon,-\epsilon,-m,\rho)$} ;
		    \node (67) at (-0,-9.25) {$(\rho,m+3f-\epsilon,-\epsilon,-m,\rho)$} ;
		
		\node (18) at (7.5, -3.5) {$I$};
		\node (19) at (6.25, -4.75) {$E_2$};
		    \node (48) at (7.375,-5.125) {$E_1$};
			\node (51) at (4.25,-6.75) {$B$};	    
		    
		    \node (49) at (7.75,-6.25) {$(m+3f+\rho-\epsilon,-\epsilon,-m-f,f,\rho)$};
		    \node (50) at (10,-5.75) {$(\rho,m+3f-\epsilon,-m-f,f,\rho)$};

		    \node (52) at (7.5,-7.25) {$(m+3f+\rho-\epsilon,-\epsilon,-m-f,f,\rho)$};
		    \node (53) at (3.25,-7.75) {$E_1$};
		    
		    \node (54) at (7.25,-8.5) {$(m+3f+\rho-\epsilon,-\epsilon,-m-f,f,\rho)$};
		    \node (55) at (4.5,-9) {$(\rho,m+3f-\epsilon,-m-f,f,\rho)$};
		
		\node (5) at (9,-4) {$E_1$};
		\node (6) at (11,-4.625) {\textcolor{red}{$(\rho,f,f,f,\rho)$}} ;
		\node (11) at (9.,-2.65) {$(m+3f+\rho-\epsilon,-m-2f,f,f,\rho)$};

			\node (32) at (0.5,0.1) {$\mathsf{S}_2$};
			\node (33) at (0,-2) {$\mathbb{S}_2$};
			\node (81) at (-.5,0.1) {$\mathbf{S}_2$};
			
			\node (34) at (2,-0.4) {$\mathsf{S}_3$};
			\node (35) at (1.5,-2.5) {$\mathbb{S}_3$};
			\node (82) at (1,-0.4) {$\mathbf{S}_3$};
			
			\node (36) at (3.5,-0.9) {$\mathsf{S}_4$};
			\node (37) at (3,-3) {$\mathbb{S}_4$};
			\node (83) at (2.5,-0.9) {$\mathbf{S}_4$};
			
			\node (39) at (-3.,-1) {$\mathfrak{S}_1$};
			
			\node (40) at (-1.,0.6) {$\mathsf{S}_1$};
			\node (41) at (-1.5,-1.5) {$\mathbb{S}_1$};	
			\node (84) at (-2,0.6) {$\mathbf{S}_1$};
			
			\node (42) at (4.5,-3.5) {$\mathfrak{S}_2$};
			
			\node (44) at (6.,-4) {$\mathfrak{S}_3$};
			
			\node (45) at (7.5,-4.5) {$\mathfrak{S}_4$};
		
		\path (1) edge node[above] {\tiny $S_H$} (12);
		\path (1) edge node[ above, pos=0.7] {\tiny $\mathfrak{I}$} (7); 
		\path (12) edge node[above] {\tiny $L$} (2);
		\path (12) edge node[above, pos=0.5] {\tiny $\mathfrak{I}$}  (13);
		\path (2) edge node[above] {\tiny $L$} (14);
		\path (2) edge node[above, pos=0.5] {\tiny $\mathfrak{I}$}  (8);
		\path (14) edge node[ above] {\tiny $L$} (3);
		\path (14) edge node[ above, pos=0.5] {\tiny $\mathfrak{I}$}  (15);
		\path (3) edge node[above] {\tiny $L$} (16);
		\path (3) edge node[above, pos=0.5] {\tiny $\mathfrak{I}$}  (9);
		\path (16) edge node[ above] {\tiny $U$} (4);
		\path (16) edge node[ above, pos=0.5] {\tiny $\mathfrak{I}$}  (17);
		\path (4) edge node[above] {\tiny $U$} (18);
		\path (4) edge node[above, pos=0.5] {\tiny $\mathfrak{I}$}  (10); 
		\path (18) edge node[ above] {\tiny $U$} (5);
		\path (18) edge node[ above, pos=0.5] {\tiny $\mathfrak{I}$}  (19);
		\path (5) edge node[above] {\tiny $U$} (6);   
		\path (5) edge node[right, pos=0.5] {\tiny $\mathfrak{I}$}  (11);

			\path (2) edge node[right, pos=.3] {\tiny $L_H$} (32);
			\path (2) edge node[right] {\tiny $L_A$} (33);
			\path (2) edge node[right] {\tiny $L_T$} (81);
			
			\path (14) edge node[right, pos=.3] {\tiny $L_H$} (34);
			\path (14) edge node[right] {\tiny $L_A$} (35);
			\path (14) edge node[right] {\tiny $L_T$} (82);
			
			\path (3) edge node[right, pos=.3] {\tiny $L_H$} (36);
			\path (3) edge node[right] {\tiny $L_A$} (37);
			\path (3) edge node[right] {\tiny $L_T$} (83);
			
			\path (1) edge node[right] {\tiny $S_S$} (39);
			
			\path (12) edge node[right, pos=.3] {\tiny $L_H$} (40);
			\path (12) edge node[right] {\tiny $L_A$} (41);
			\path (12) edge node[right] {\tiny $L_T$} (84);
			
			\path (16) edge node[right] {\tiny $S_S$} (42);
			
			\path (4) edge node[right] {\tiny $S_S$} (44);
			
			\path (18) edge node[right] {\tiny $S_S$} (45);
			
			\path (10) edge node[above, pos=0.9] {\tiny $S_{S_{E_1}}$} (56);
			\path (10) edge node[above, pos=0.6] {\tiny $S_{S_{I}}$} (57);
			\path (10) edge node[above] {\tiny $\mathfrak{I}$}  (58);
			
			\path (57) edge node[above] {\tiny $U$} (59);
			\path (57) edge node[right] {\tiny $\mathfrak{I}$}  (60);
			\path (57) edge node[above, pos=0.9] {\tiny $S_{S_{E_1}}$} (61);
			
			\path (59) edge node[below] {\tiny $\mathfrak{I}$}  (62);
			\path (59) edge node[above] {\tiny $U$} (63);
			
			\path (60) edge node[above] {\tiny $\mathfrak{I}$}  (64);
			\path (60) edge node[left] {\tiny $S_{S_{E_1}}$} (65);
			
			\path (65) edge node[below] {\tiny $U$} (67);
			\path (65) edge node[above] {\tiny $\mathfrak{I}$}  (66);
			\path (19) edge node[below] {\tiny $S_{S_{E_1}}$} (48);
			\path (19) edge node[above] {\tiny $\mathfrak{I}$}  (51);
			
			\path (48) edge node[left] {\tiny $\mathfrak{I}$}  (49);
			\path (48) edge node[above] {\tiny $U$} (50);
			
			\path (51) edge node[above] {\tiny $\mathfrak{I}$}  (52);
			\path (51) edge node[above] {\tiny $S_{S_{E_1}}$} (53);
			
			\path (53) edge node[above] {\tiny $\mathfrak{I}$}  (54);
			\path (53) edge node[left] {\tiny $U$} (55);
			
			\path (4) edge node[left, pos=0.5] {\tiny $S_{S_{E_1}}$} (20);
			
			{\color{olive}\path (20) edge node[right] {\tiny $U$} (77);}
			\path (20) edge node[above] {\tiny $S_{S_I}$} (21);
			\path (20) edge node[above] {\tiny $\mathfrak{I}$}  (46);
			
			\path (21) edge node[above] {\tiny $\mathfrak{I}$}  (69);
			\path (21) edge node[below] {\tiny $U$} (70);
			
			\path (69) edge node[above] {\tiny $\mathfrak{I}$}  (73);
			\path (69) edge node[below] {\tiny $U$} (74);
			
			\path (70) edge node[below] {\tiny $\mathfrak{I}$}  (75);
			\path (70) edge node[above] {\tiny $U$} (76);
			
		{\color{olive}	\path (77) edge node[above] {\tiny $\mathfrak{I}$}  (22);
			\path (77) edge node[above, pos=0.2] {\color{olive}\tiny $S_{S_I}$} (78);
			
			\path (78) edge node[above] {\color{olive}\tiny $\mathfrak{I}$}  (79);
			\path (78) edge node[above] {\color{olive}\tiny $U$} (80);}
			
	\end{tikzpicture}
	\caption{\NEWR{Partial Definition of the Lightning's Routing $G_{\text{rout}}$ and the Fulgor Model $G_{\text{Ful}}$. The olive colored subtree only applies for $G_{\text{rout}}$.}}
	\label{tbl:mymodel}
\end{figure*}
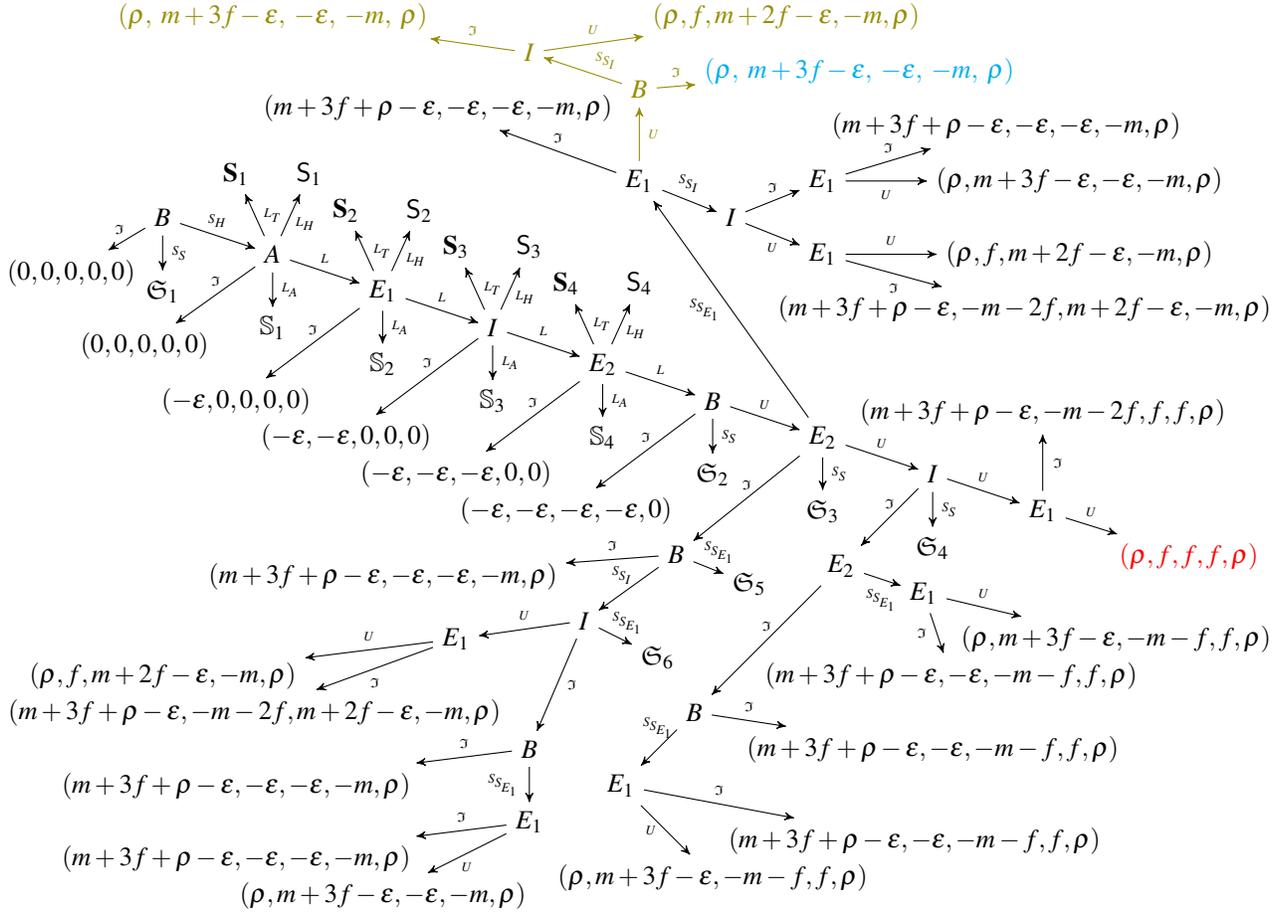

\subsection{\NEWR{Routing Games for Lightning's Routing Module}}\label{sec:routingGame}
%
%
We first  propose a new EFG, 
called the \emph{Routing Game},  showing that EFGs can capture actual attacks, in this case the Wormhole attack \cite{AMHL} and the Griefing attack~\cite{griefing}, which were overlooked for example in~\cite{CITE}. Specifically,  the below defined routing game considers fees $f$, and  \NEWR{supports  actions} allowing  the intermediaries  to choose not to claim their money using the secret $x$  but instead to forward it to another intermediary (as explained in \Cref{sec:pcn}). \NEWR{Additionally, other deviations such as creating a conditional payment (i.e. HTLC) with a different hash value, a different amount, or a different time-out than expected are also considered.} \NEWR{For simplicity, we chose to model our routing game below with five players;  however, an arbitrary number of intermediaries can be modeled.}

\begin{table}[b]
    \centering
        \caption{\NEWR{Possible Actions in $G_{\text{rout}}$.}}
    \begin{tabularx}{\linewidth}{|@{\hspace{.5em}}>{\bfseries}l@{\hspace{.5em}}X@{\hspace{.5em}}|}
    \hline
        $S_H$ &  \emph{Sharing} the secret's \emph{Hash} to enable the others to create HTLCs (action 1 in \Cref{tbl:honestrouting}, \Cref{sec:pcn}). \\
    \hline
        $L$ & \emph{Lock} money, as defined in actions 2--5 in \Cref{tbl:honestrouting}, in an HTLC.\\
    \hline
       $U$ & \emph{Unlocking} the money from an HTLC (actions 6--9 in \Cref{tbl:honestrouting}). Thereby the secret is revealed to the HTLC's creator. \\ 
    \hline
       $\mathfrak{I}$ & \emph{Ignoring} all the previous actions and do nothing. If applicable, until the unlockable HTLC has timed out.\\
    \hline 
     $S_S$ & \emph{Sending the Secret} to another player. If it is sent to a specific player (not leading to $\mathfrak{S}_i$) this player is indicated by another subscript. \\
    \hline 
    $L_H$ & \emph{Locking} money in an HTLC, that uses a different \emph{Hash-lock} than described in \Cref{tbl:honestrouting}.\\
    \hline
   $L_A$ & \emph{Locking} a different \emph{Amount} of money in an HTLC, than described in \Cref{tbl:honestrouting}.\\
    \hline
   $L_T$ & \emph{Locking} money in an HTLC, whose \emph{Time-out} is different from the values described in \Cref{tbl:honestrouting}.  \\
    \hline
    \end{tabularx}
    \label{tbl:actionsrout}
\end{table}

\begin{definition}[Routing Game $G_{\text{rout}}$]
	The \emph{routing game} $G_{\text{rout}}=(N_r,\mH_r, P_r, u_r)$ is an EFG with five players $N=\{A,E_1,I,E_2,B\}$, where 
	\begin{itemize}
	    \item 
the histories $\mH_r$, the next player function $P_r$, and the utility function $u_r$ are defined via the  tree representation of \Cref{tbl:mymodel}.
 The utility tuples in \Cref{tbl:mymodel} assign the first value to $A$, the second to $E_1$, the third to $I$, the fourth to $E_2$, and the last to $B$; 
\item  the  actions of $G_{\text{rout}}$  are as listed  in \Cref{tbl:actionsrout}.
\end{itemize}
\end{definition}

\NEWR{We note that our Routing Game $G_{\text{rout}}$  has four types of subgames, as modeled in \Cref{tbl:mymodel} and described next: (i) subgames that  result from sending the secret to another player $\mathfrak{S}_i$; 
(ii) subgames that result from locking a wrong amount of money in the HTLC $\mathbb{S}_i$; 
(iii) subgames that  result from using a wrong time-out in an HTLC $\mathbf{S}_i$; 
and (iv) subgames that result from using a wrong hash value as lock in the HTLC $\mathsf{S}_i$.
We further note that \Cref{tbl:mymodel} only gives a partial model, 
 as not all subgames are presented in \Cref{tbl:mymodel}. However, within one type of subgame, the game trees are similar. Therefore, we provide only one instance of each type in the appendix, 
 which are the subgames $\mathbb{S}_1$, $\mathsf{S}_2$,  and $\mathbf{S}_3$. An instance of a secret forwarding subgame capturing the Wormhole attack  can be seen in $G_{\text{rout}}$, as  the subtree after history $(S_H, L, L,L, L, U, S_{S_{E_1}})$.}

Let us  emphasize that the  utility function $u_r$ of  $G_{\text{rout}}$ assigns each player $p \in N$ the relative profit of their routing actions and does not mirror the individual channel balances. It also takes the value $\rho$ of a successful payment \NEWR{and the opportunity cost $\epsilon$} into account.

%

 As in the closing games $G_c(A)$ and $G_c(B)$, we aim to align utility and monetary outcome as tight as possible. 
 We adjust the ordering $(\U,\preccurlyeq)$ of \Cref{def:utility:order} \NEWR{by not assuming that $\rho \prec \epsilon$, since achieving an update is the ultimate goal of the routing protocol.} We also consider the utility relative to the amount due to each party.

\subsection{\NEWR{Security Analysis of Lightning's Routing Module}}\label{sec:routingGameAnalysis}

\NEWR{Let us recall \Cref{tbl:honestrouting} and \Cref{tbl:wormhole}, where   player $A$  wants to pay another player $B$ money of value $m$. Since, $A$ and $B$ do not share a channel, the three intermediaries $E_1$, $I$,and $E_2$ support the payment, with each receiving a fee $f>0$  for their collaboration if the payment is successful. Each player who creates an HTLC locks her money for a given time, yielding an opportunity cost of $\epsilon$ if the money is returned. }
If the transaction fails, before anyone has unlocked an HTLC, all parties get utility 0 or $-\epsilon$, depending on whether they created an HTLC or not. Otherwise, the intermediaries' utilities are according to their financial win/loss.
The parties $A$ and $B$ both receive $\rho$ once $B$ is paid. Should the transaction fail after $B$ is paid, but before $A$ has paid, she has utility $m+3f+\rho-\epsilon$; once $E_1$ collects the money, $A$'s  utility is $\rho$. 

\NEWR{In the sequel, we consider the behavior from \Cref{tbl:honestrouting} as the only \emph{honest} history in $G_\text{rout}$, as also formalized next. 

\begin{definition}[Honest Routing]
The only \emph{honest history} in the routing game $G_\text{rout}$ is the history $(S_H, L, L, L,L, U,U,U,U)$. All strategies yielding this history are considered \emph{honest strategies}.
\end{definition}
}

Using our model $G_\text{rout}$ and its honest behavior, we derive the following result. 

\NEWR{
\begin{theorem}[Vulnerability of $G_\text{rout}$ to Wormhole Attacks] \label{thm:wormhole}
	The honest behavior $(S_H, L, L,L,L,U,U,U,U)$ of the  Routing Game $G_{\text{rout}}$ is not \srp{}.
\end{theorem}
\begin{proof}
 The utility of the honest behavior of the routing module  $(S_H,L,L,L,L,U,U,U,U)$  is $(\rho,f,f,f,\rho)$ (as indicated in red in \Cref{tbl:mymodel}). Let us compare this behavior and utility to the deviating terminal history $(S_H, L,L,L,L,U,S_{S_{E_1}},U,\mathfrak{I})$ with a utility of $(\rho,\;m+3f-\epsilon,\;-\epsilon,\;-m,\;\rho)$ (given in blue in \Cref{tbl:mymodel}). It is not hard to argue that the collusion of $E_1$ and $E_2$ (and $B$ by not sending the secret to $I$) strictly profits from the deviation, which yields a joint utility of $3f-\epsilon+\rho$, whereas the honest behavior only yields a joint utility of $2f+\rho$. As such,  collusion resistance \srp{} is violated, since no honest player can prevent the Wormhole attack from happening by following any honest strategy (that is,  a strategy $\sigma$ whose history is the honest behavior $(S_H,L,L,L,L,U,U,U,U)$).
\end{proof}
 
 In conclusion, \Cref{thm:wormhole} formally proves that Lightning's routing module is susceptible to the Wormhole attack.\NEWR{ We further extend this result by noting that not only can    $G_{\text{rout}}$  capture the Wormhole attack, but also the Griefing attack, as stated below.
}

\begin{theorem}[Vulnerability of $G_{\text{rout}}$ to Griefing Attack] \label{thm:griefing}
	The honest behavior $(S_H, L, L,L,L,U,U,U,U)$ of the  Routing Game $G_{\text{rout}}$ is not weak immune.
\end{theorem}
\begin{proof}
For showing that history $(S_H, L, L,L,L,U,U,U,U)$ is not weak immune, we  prove that no strategy which yields this history is weak immune. Let us consider any such strategy $\sigma$. Then, player $A$ has to choose action $L$ after $B$ sent her the secret, that is history $(S_H)$. Assume now $E_1$ deviates and chooses to ignore (action $\mathfrak{I}$). Then $A$'s utility is $-\epsilon \prec 0$. Hence, history $(S_H, L, L,L,L,U,U,U,U)$ is not weak immune.
 \end{proof}

We also obtain the following result as an immediate consequence of  \Cref{thm:wormhole} and \Cref{thm:griefing}.

\begin{corollary}[Security of Routing Module]
	The honest behavior $(S_H, L, L,L,L,U,U,U,U)$ of the  Routing Game  $G_{\text{rout}}$ is not secure. Hence, the Routing Game $G_{\text{rout}}$ is not secure. 
\end{corollary}
}


\NEWR{
\subsection{Further Routing Protocols Beyond the Lightning Network}\label{sec:otherProtocols}

We conclude this paper by arguing that our EFG games, either closing or routing games, are not restricted to Lightning networks but can be used for other protocols as well. In the remaining of this section,  we illustrate how to model  Fulgor~\cite{10.1145/3133956.3134096}, a payment channel network protocol that fixes the Wormhole attack, but not the Griefing attack.

The routing mechanisms used in  Fulgor  is similar to Lightning's routing, and is similarly based on  HTLCs. The main difference lies in the  structure of the secrets and their hashes. Indeed,  while Lightning uses the same secret $x$ for every HTLC, Fulgor provides a different secret and hash lock for each player.

\begin{figure}[t!]
	\centering
\begin{tikzpicture}[scale= 0.8, ->]
	\node (1) at (0,0) {$A$}; 
	\node (2) at (2.5,0) {$E_1$};
	\node (3) at (5,0) {$I$};
	\node (4) at (7.5,0) {$E_2$};
	\node (5) at (10,0) {$B$};
	
\draw[-] (0,2) -- (0, 0.3) node [left, pos=0.6] {1.};
\draw[->] (2.5,2) -- (2.5, 0.3) ;
\draw[->] (5,2) -- (5, 0.3) ;
\draw[->] (7.5,2) -- (7.5, 0.3) ;
\draw[->] (10,2) -- (10, 0.3) ;
\draw[-] (0,2) -- (10, 2) node [above, pos=0.125] { \footnotesize $y_2, x_2,\text{ZKP}_2$} node [above, pos=0.375] { \footnotesize $y_3,x_3,\text{ZKP}_3$}  node [above, pos=0.625] { \footnotesize $y_4,x_4, \text{ZKP}_4$} node [above, pos=.9] { \footnotesize $x_1+x_2+x_3+x_4$}  ;

	\path (1) edge node [above] {\footnotesize $(m+3f,\textcolor{red}{y_1},t_1)$} node [ below, pos=0.3] {2.}(2); 
	\path (2) edge node [above] {\footnotesize $(m+2f,\textcolor{red}{y_2},t_2)$} node [ below, pos=0.3] {3.} (3); 
	\path (3) edge node [above] {\footnotesize $(m+f,\textcolor{red}{y_3},t_3)$} node [ below, pos=0.3] {4.} (4); 
	\path (4) edge node [above] {\footnotesize $(m,\textcolor{red}{y_4},t_4)$} node [ below, pos=0.3] {5.} (5); 
	\path (5) edge[out=240, in=300, distance= 0.8cm] node [below] {\color{cyan} \footnotesize $x_1+x_2+x_3+x_4$}    node [ above, pos=0.3] {6.} (4);
	\path (4) edge[out=240, in=300, distance= 0.8cm] node [below] {\color{cyan} \footnotesize$x_1+x_2+x_3$} node [ above, pos=0.3] {7.} (3); 
	\path (3) edge[out=240, in=300, distance= 0.8cm] node [below] {\color{cyan}  \footnotesize $x_1+x_2$}    node [ above, pos=0.3] {8.} (2); 
	\path (2) edge[out=240, in=300, distance= 0.8cm] node [below] {\color{cyan}  \footnotesize $x_1$}    node [ above, pos=0.3] {9.} (1); 
\end{tikzpicture}
\vspace{-0.2cm}
\caption{Routing in Fulgor.}
\label{tbl:fulgor}
\end{figure}
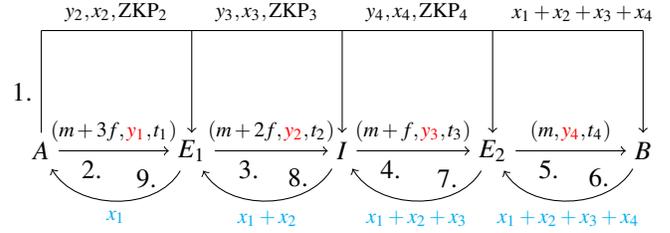

Fulgor's routing mechanism is illustrated in \Cref{tbl:fulgor}, where player $A$ generates
different secrets and hash locks at the beginning.
The secrets and the hashes relate in the following way: $h(x_1)=y_1$, $h(x_1+x_2)=y_2$, $h(x_1+x_2+x_3)=y_3$ and $h(x_1+x_2+x_3+x_4)=y_4$. Therefore, a player only gets to know a sum of secrets when the right-hand party unlocks and subtracts the secret value received from $A$ to unlock their HTLC. $A$ also provides a zero-knowledge-proof $\text{ZKP}_i$ for each intermediary~\cite{CITE-ZKP-PAPER} to prove that the secrets and hashes  constructed this  way guarantee successful unlocking of the left HTLC, which is essential to not lose funds.

The game-theoretical (EFG) model of Fulgor $G_{Ful}$ reported in  \Cref{tbl:mymodel} looks similar to the routing game $G_{\text{rout}}$, yet, with one significant difference. Consider the history $(S_H,L,L,L,L,U,S_{S_{E_1}})$ in \Cref{tbl:mymodel}, which enables player $E_1$ to unlock (action $U$) the HTLC created by $A$. In Fulgor, the same history does not enable $E_1$ to unlock the HTLC. As \Cref{tbl:fulgor} shows, the secrets that $E_2$ can share after action 6 are $x_4$ and $x_1+x_2+x_3+x_4$. Further, $E_1$ only knows $x_2$, thus there is no way to compute $x_1$. This is however the value needed to unlock the HTLC created by $A$.  \REPLACER{Therefore, intuitively Fulgor seems Wormhole proof}{
Indeed, Fulgor is not affected by the Wormhole attack. Nevertheless, similarly to \Cref{thm:griefing}, the honest behavior of Fulgor is not weak immune, as it is vulnerable to the Griefing attack. 
}
}


 	\section{Conclusions}\label{sec:conclusions}
Our work advocates the use of Extensive Form Games (EFGs) for the game-theoretic security analysis of off-chain protocols. In particular, we introduce two instances of EFGs to model the closing and the routing of the Lightning Network. By doing so, we take the first step towards closing the gap existing security proof techniques have due to using informal arguments about rationality. \NEWR{
We express security properties as formal requirements over  joint strategies in EFGs, allowing us to establish optimal strategies for closing off-chain and capture security vulnerabilities amid attacks.  Given the theoretical expressiveness of our EFGs, 
future work includes the definitions of new games to  capture 
a wider range of  off-chain protocols. To overcome the burden of tedious manual analysis, we also plan to leverage
SMT solving and/or automated theorem proving in order to provide automated security proofs. }

\REMOVER{close it further in future work by generalizing and extending our models to various off-chain protocols and by automating the security analysis thereof. To this end, we are interested in leveraging 
SMT solving and/or automated theorem proving for enforcing security of (more) complex protocols. }

\medskip 
\noindent
\textbf{Acknowledgments.} We thank our anonymous reviewers for their valuable feedback. The work was partially supported  by the European Research Council (ERC) under the ERC CoG  ARTIST 101002685 and the ERC CoG BROWSEC 71527; by the TU Wien Doctoral College SecInt;  by the Austrian Science Fund (FWF) projects PROFET P31621 and LogiCS W1255-N23;  by the Austrian Research Promotion Agency (FFG) (COMET K1 SBA, COMET K1 ABC); by the Vienna Business Agency project Vienna Cybersecurity and Privacy Research Center (VISP); by the Austrian Federal Ministry for Digital and Economic Affairs;  the National Foundation for Research, Technology and Development; and the Christian Doppler Research Association through CDL-BOT.

%
%

	
	%
	\bibliographystyle{IEEEtran}
	\bibliography{includes/references}
	
	\newpage
	
	\appendix		
	\subsection{Proof of the Resilience Properties}
We restate the results for better readability.
\newtheorem*{RP}{Lemma~III.1}
\begin{RP}[Resilience Properties] 
	 Let $\sigma \in \mS$ be a joint strategy. The following and only the following implications hold.
	 
\begin{minipage}{0.34\textwidth}
	\begin{enumerate}
		\item $\sigma \text{ is \srs}\; \Rightarrow \; \sigma \text{ is \SR, \srp{}, \sNE}$.
		\item $\sigma \text{ is \SR} \; \Rightarrow \; \sigma \text{ is \srp}$.
	\end{enumerate}
	\end{minipage}
	\begin{minipage}{0.12\textwidth}
\begin{tikzpicture}[scale=0.8, baseline, ->]
		\node (1) at (0,0) {\srs};
		\node (2) at (-1.5,0) {\SR} ; 
		\node (3) at (0,-1.5) {\sNE};
		\node (4) at (-1.5,-1.5) {\srp} ;

		\path (1) edge  (2);
		\path (1) edge (3); 
		\path (1) edge  (4);
		\path (2) edge (4);
	\end{tikzpicture}
\end{minipage}
\end{RP}

\begin{proof}
    We start by showing property (2). Let $\sigma$ be \SR{} and let $S=\{s_1,...,s_j\} \subset N$, $\sigma'_S\in \mS_S$ be arbitrary but fixed. Then, for all $p \in S$ we have  $u_p(\sigma) \geq u_p(\sigma[\sigma'_{s_1}/\sigma_{s_1},...,\sigma'_{s_j}/\sigma_{s_j}])$ and thus also $ \sum_{p \in S} u_p(\sigma) \geq \sum_{p \in S} u_p(\sigma[\sigma'_{s_1}/\sigma_{s_1},...,\sigma'_{s_j}/\sigma_{s_j}])$. Hence $\sigma$ is \srp{} and the implication is proven.
    	For implication (1) we see that \srs{} $\Rightarrow$ \SR{} is trivial. 
	If the property is satisfied for every $S \subseteq N$, then it is also satisfied for every $S \subset N$.
	By (2) and the transitivity of implication we also get \srs{} $\Rightarrow$ \srp.
	For the last implication let $\sigma$ be \srs{} and let $S=\{s_1,...,s_j\} \subseteq N, \, S \neq \emptyset$ and $\sigma'_S \in \mS_S$ be arbitrary but fixed. Then there exists some $p \in S$ and by definition all $p \in S$ satisfy $u_p(\sigma) \geq u_p(\sigma[\sigma'_{s_1}/\sigma_{s_1},...,\sigma'_{s_j}/\sigma_{s_j}])$. Therefore, $\sigma$ is \sNE{}. 
	
	To prove that no other implication holds between those four concepts, we provide three counterexamples. An overview of which game disproves which implication is given in \Cref{tbl:counterex}.
	
		  \begin{table}[b]
    \centering    
    \caption{Game $\Gamma_3$.}
    	\begin{tabular}{|r|c|c|}
    	\hline
    	 & $H_2$ & $D_2$ \\
    	\hline
    	$H_1$ & {\color{red}$(1,1)$} & $(1,1)$ \\
    	\hline  
    	$D_1 $& $(1,1)$& $(2,2)$\\
    	\hline
    \end{tabular}
    \label{tbl:G3}
    \end{table}
	
	The three-player NFG $\Gamma_1$ in Table IV shows a joint strategy $(H_1,H_2,H_3)$ that is \sNE{}, but not \srp{} (refer to Example III.5
	). Using the just proven (1) and (2), we get that $(H_1,H_2,H_3)$ is also not \SR{} nor \srs{}. 
	
	The three-player game $\Gamma_2$ (Table V) shows a strategy $(H_1,H_2,H_3)$ that is \srp{}, but not \SR{} (see Example III.5) and thus also not \srs{} (property (1)).
	It is, however, \sNE{}: The only relevant deviation from  $(H_1,H_2,H_3)$ is $(D_1, H_2, D_3)$, as it yields a different utility $(3,0,-2)$ instead of $(1,1,1)$. While player $P_1$ profits in this case, player $P_3$ does not. One deviating player not profiting suffices for a strong Nash Equilibrium, thus $(H_1,H_2,H_3)$ is \sNE{}.
	
	To prove the remaining implications incorrect, we consider the two-player game $\Gamma_3$ in \Cref{tbl:G3}. We can easily see that $(H_1,H_2)$ is not \srs, nor \sNE{}. This is the case, as all players $\{P_1,P_2\}$ can deviate to play $(D_1,D_2)$ which yields a strict increase for both. However, since no player profits from deviating alone, $(H_1,H_2)$ is still \SR{} and \srp.

\end{proof}

\subsection{Results of Security Analysis and Their Proofs} \label{app:sec}
In this section all omitted proofs of the results from Section V are provided. Additionally, the  results  \Cref{thm:bleqf}, \Cref{thm:a0} and \Cref{thm:b0} about edge cases are stated and proven.\\

\newtheorem*{WI}{Theorem~V.1}
\begin{WI}[Weak Immunity of Honest Behavior -- (P1)]
 The terminal histories $(H)$ of honest unilateral closing, and $(C_h,S)$ of honest collaborative closing of $G_c(A)$ are weak immune, if the channel balances are higher than the fee required in a revocation transaction, that is if $a\geq f$ and $b\geq f$.
\end{WI}
\begin{table}[t]
	\renewcommand*{\arraystretch}{1.2}
	\caption{Overview of Implications and Counterexamples.}
	\centering
	\begin{tabular}{| c | c |c |c | c |}
		\hline
			$\to$ & \SR{}& \srs & \sNE{} & \srp \\
		\hline
			\SR{}   &  \diagbox{\quad}{\quad} & $\Gamma_3$ & $\Gamma_3$ & $\checkmark$ \\
		\hline
			\srs  & $\checkmark$ & \diagbox{\quad}{\quad} & $\checkmark$ & $\checkmark$ \\
		\hline
			\sNE{} & $\Gamma_1$ & $\Gamma_1$&\diagbox{\quad}{\quad} & $\Gamma_1$ \\
		\hline
			\srp   & $\Gamma_2$ & $\Gamma_2$ & $\Gamma_3$&  \diagbox{\quad}{\quad}\\
		\hline
	\end{tabular}
	\label{tbl:counterex}
\end{table}

\begin{proof}
Let $a,b\geq f$. For history $(H)$, we consider any strategy $\sigma$, where $A$ chooses $H$ after the empty history $\emptyset$, $B$ chooses $S$ after $(C_h)$, $P$ after $(D)$ and $H$ after $(C_c)$. Such a strategy $\sigma$ yields terminal history $(H)$. If we can show that $\sigma$ is weak immune, also history $(H)$ is weak immune by Definition III.11.
Assume, player $A$ honestly follows $\sigma$, i.e., choosing $(H)$, then $B$'s deviation from $\sigma$ cannot affect the outcome. Thus, $A$'s utility remains non-negative. The other way around, if $B$ follows $\sigma$, $A$ can deviate to any initial action $C_h$, $D$ or $C_c$, player $B$'s utility never drops below 0, by following strategy $\sigma$, as $a\geq f$. Since honest players cannot get negative utility, $\sigma$ is weak immune.

    Similarly, for  $(C_h,S)$, we consider any strategy $\sigma'$, where $A$ chooses $C_h$ initially, player $B$ chooses $S$ after $(C_h)$, $P$ after $(D)$ and $H$ after $(C_c)$. Further, player $A$ takes $P$ after $(C_h,D)$ and $H$ after $(C_h,\mathfrak{I})$, $(C_h,U^+)$ and $(C_h,U^-)$. This strategy  $\sigma'$ yields terminal history $(C_h,S)$.  Deviation of $A$ has the same effects as before, never causing the honest $B$, who follows $\sigma$, negative utility. If $B$ deviates now to one of $U^+$, $U^-$, $\mathfrak{I}$, $D$, or $H$, honest $A$, following $\sigma$,  also never gets negative utility, since $b \geq f$. Therefore, $\sigma'$ and hence history $(C_h,S)$ are weak immune.  
\end{proof}

\newtheorem*{IC}{Theorem~V.2}
\begin{IC}[Incentive-Compatibility -- (P2)]
\noindent If $a-p_B+d_A \geq f$ and $b-p_A+d_B \geq f$, then \begin{enumerate}
    \item honest unilateral closing $(H)$ is \srp, but \emph{not} practical.
    \item honest collaborative closing $(C_h,S)$ is \srp. It is practical iff $c \neq p_A$.
\end{enumerate}
\end{IC}

\begin{proof} Let us first prove collusion resilience \srp{} of $(H)$ and $(C_h,S)$.
As in the previous proof, we only have to find a strategy $\sigma$ that yields history $(H)$ and another strategy $\sigma'$ yielding history $(C_h,S)$, that are \srp{}, to prove $(H)$ and $(C_h,S)$ \srp{}, as defined in Definition III.11.
Additionally, collusion resilience is defined on strict subsets of players. Thus, in a two-player game, it considers only deviations of single players and since the summation over one value is the value itself, \srp{} is equivalent to being a Nash Equilibrium in this case.
We therefore only have to check whether $\sigma$ and $\sigma'$ are Nash Equilibria. 

For $(H)$, we consider a strategy $\sigma$, where player $A$ chooses $H$ initially, player $B$ chooses $\mathfrak{I}$ after $(C_h)$, and $\mathfrak{I}$ after $(C_c)$. Additionally, player $B$ always chooses $P$ after a history $(...,D)$, where the last action was $D$. Player $A$ takes action $H$ after $(C_h,\mathfrak{I})$ and $(C_c,\mathfrak{I})$. Further, $B$ takes action $\mathfrak{I}$ after $(C_{h/c},\mathfrak{I},U^{+/-})$ (subgames $S_1$, $S_2$, $S'_1$, $S'_2$ in \Cref{tbl:S1p}). For $A$, we finally assume she takes action $H$ after $(C_{h/c},\mathfrak{I}, U^{+/-},\mathfrak{I})$. This strategy yields history $(H)$. Deviations from $\sigma$ of player $B$ cannot change the utility, hence in particular cannot increase his utility.
Let us consider deviations of player $A$. A deviation to $D$ at any point in the game, leads to $A$ losing all her funds $a$, which is a strict decrease in utility. This is the case because in $\sigma$ player $B$ always chooses $P$ after $D$. Therefore this option is not a threat. If $A$ deviates to $C_h$ or $C_c$ initially, we end up in $(C_h,\mathfrak{I})$, $(C_c,\mathfrak{I})$ respectively. Closing honestly (action $H$) here leads to the same utility as not deviating. Also a deviation to $\mathfrak{I}$ does not lead to a better utility. The options she has left is taking $U^+$ or $U^-$. Either way, $B$ takes $\mathfrak{I}$ and leaves $A$ similar choices to before: action $H$ or action $\mathfrak{I}$, both of which do not yield a better utility for her. 
Since no player can increase their utility by deviating from $\sigma$, it is a Nash Equilibrium, and hence $(H)$ is too.

To show that $(C_h,S)$ is a Nash Equilibrium, we consider a strategy $\sigma'$, where $A$ picks $C_h$ initially, $B$ chooses $S$ after $(C_h)$, $P$ after $(D)$ and $H$ after $(C_c)$. Further, let $A$ pick $P$ after $(C_h,D)$, $H$ after $(C_h,\mathfrak{I})$ and $(C_h,U^{+/-})$ (subgames $S_3$, $S_4$ in \Cref{tbl:S4p}). This strategy $\sigma'$ has terminal history $(C_h,S)$.
A deviation of player $B$, results in either the same utility (choosing action $\mathfrak{I}$, $U^+$, or $U^-$ after $(C_h)$ and having $A$ taking $H$) or in strictly worse utility (choosing $H$ or $D$, where $A$ takes $P$). Every other deviation has no impact on the resulting history. Similarly, player $A$ cannot profit from deviating. Choosing action $H$ or $D$ initially, leads to a strict loss, as $B$ plays $P$, whereas taking action $C_c$ yields the same utility for $A$ (as $B$ will take action $H$). Every other deviation has no impact on the history. Hence, no player can increase their utility by deviating, which makes $\sigma'$ and therefore $(C_h,S)$ a Nash Equilibrium.

To prove the practicality properties, we compute all subgame perfect equilibria of $G_c(A)$. We compute subgame perfect equilibria bottom-up. That is, we start comparing the utility of subtrees with leaves only. In $G_c(A)$, these are for example the subgames after history $(C_h,\mathfrak{I},D)$ or $(C_c,D)$. For the latter, $A$ is the player to choose the action. To compute the subgame perfect equilibrium, we have to compare all possible utilities for $A$ after $(C_c,D)$. We then replace this internal node labelled $A$, by the utility that yields the best value for $A$ and proceed until we reach the root. If there is no single best choice for a player, then all actions resulting in best utility have to be considered. Applying this procedure to the subgames $S_1$-$S_4$ and $S'_1$-$S'_4$ we get subgame perfect terminal history  $(\mathfrak{A},H)$ with utility $(\rho+\alpha-\epsilon,\rho+\alpha)$ for $S_1$. For $S_2$ we get terminal history $(S)$ yielding $(\alpha,\alpha)$ and $(\mathfrak{I},H)$, yielding $(\alpha-\epsilon,\alpha)$. For $S_3$ and $S_4$ it is $(\mathfrak{I},S)$ with $(\alpha,\alpha)$. The subgame $S'_1$ has practical history $(\mathfrak{I},H)$, with $(\alpha-\epsilon,\alpha)$ if $c>p_A$, $(\mathfrak{A},\mathfrak{I},S)$ with $(\rho+\alpha,\rho+\alpha)$ if $c=p_A$ and $(\mathfrak{A},H)$ with $(\rho+\alpha-\epsilon,\rho+\alpha)$ if $c<p$. The subgame $S'_2$ has practical history $(\mathfrak{I},H)$, yielding $(\alpha-\epsilon,\alpha)$. For $S'_3$ and $S'_4$ in \Cref{tbl:S4p} we get $(\mathfrak{I},H)$ with $(\alpha,\alpha-\epsilon)$ and additionally for $S'_3$, if $c=p$, we also have $(\mathfrak{A},S)$ yielding $(\rho+\alpha,\rho+\alpha)$. All of these results are based on the facts $a-p_B+d_A\geq f$ and $b-p_A+d_B \geq f$, since this causes the revocation transaction always to be better than ignoring the dishonest unilateral closing attempt.

Based on these preliminary results, we can now compute the subgame perfect equilibria for $G_c(A)$ considering multiple practical histories and case splits as stated:
If $c=p_A$, then $(C_c,U^+,\mathfrak{A},S)$ and $(C_c,\mathfrak{I},U^+,\mathfrak{A},\mathfrak{I},S)$ are practical, both yielding $(\rho+\alpha,\rho+\alpha)$. If $c>p_A$, then the histories $(C_h,S)$, $(C_h,U^+,\mathfrak{I},S)$, $(C_h,U^-,\mathfrak{I},S)$ and $(C_h,\mathfrak{I},U^-,S)$ all leading to $(\alpha,\alpha)$ are practical, as well as terminal history $(C_h,\mathfrak{I},U^+,\mathfrak{I},H)$, yielding $(\rho+\alpha-\epsilon,\rho+\alpha)$.
For $c<p_A$, all the histories and their utilities from $c>p_A$ are practical. Additionally $(C_c,\mathfrak{I},U^+,\mathfrak{A},H)$ is subgame perfect in this case 
and also results in utility $(\rho+\alpha-\epsilon,\rho+\alpha)$.

This shows, that $(H)$ is never practical and $(C_h,S)$ is practical if and only if $c\neq p_A$. 
\end{proof}

\subsubsection{Results without Updates}

\newtheorem*{COR}{Corollary~1}
\begin{COR}
If there exists an old channel state $(a+d_A,b-d_A)$, with $a+d_A <f$, then neither history $(H)$ nor $(C_h,S)$ is  weak immune nor practical, but \srp.
\end{COR}

\begin{proof}
We fix the old distribution state such that the difference $d_A$ to the latest state is the value of $A$'s dishonest closing attempt in the closing game. As $a+d_A<f$ implies $a<f$, Theorem V.5 applies. Therefore, neither $(H)$ nor $(C_h,S)$ are weak immune.

In order to show that they are also not practical, we prove instead, that the only practical history is $(D,\mathfrak{I})$. Since $a+d_A<f$, $\mathfrak{I}$ is the best choice for $B$ after $(D)$, $(C_h,\mathfrak{I},D)$ and $(C_c,\mathfrak{I},D)$. Consequently, $A$ will choose $D$ after $(C_h,\mathfrak{I})$ and $(C_c,\mathfrak{I})$. If now $b+d_B\geq f$, then $A$'s best choice is $P$ after $(C_h,D)$ and $(C_c,D)$. Thus, $B$ will take $S$ after $(C_h)$ and $H$ after $(C_c)$. In the other case, $b+d_b<f$, $A$'s best option is $\mathfrak{I}$ after $(C_h,D)$ and $(C_c,D)$, thus $B$'s best choice after $(C_h)$ and $(C_c)$ is $D$, which yields a negative utility for $A$. Therefore, in both cases $A$'s only subgame perfect action is $D$. Hence, $(D,\mathfrak{I})$ is the unique subgame perfect history.

For \srp, we show instead that there exist extensions (Definition III.11) $\sigma$ of $(H)$ and $\sigma'$ of $(C_h,S)$ that are Nash Equilibria. Let $\sigma$ be the strategy, where $A$ chooses $H$, everyone chooses $P$ after a dishonest closing attempt $D$, $B$ chooses $\mathfrak{I}$ after $(C_h)$ and  $(C_c)$ and $A$ chooses $H$ after $(C_h,\mathfrak{I})$ and $(C_c,\mathfrak{I})$.
Then, player $B$'s deviations have no impact, thus cannot not increase his utility, and player $A$'s deviations either lead to the same utility as $\sigma$, or to the strictly worse utility $-a$. Anyway, no player can  deviate to increase their utility and therefore $\sigma$ and thus $(H)$ is \srp{}.
To prove $(C_h,S)$ is \srp, we consider the strategy $\sigma'$, which is the same as $\sigma$, except $A$ initially chooses $C_h$ and $B$ chooses $S$ after $(C_h)$. A deviation of $A$ either leads to utility $\alpha-\epsilon$ for her, which is worse than $\sigma$'s utility, or to utility $-a$, which is even worse. For $B$, a deviation either leads to the same utility $\alpha$ (taking $\mathfrak{I}$ after $(C_h)$), to a slightly worse $\alpha-\epsilon$ (choosing $H$ after $(C_h)$) or to the way worse $-b$ ($D$ after $(C_h)$). Every other deviation has no impact on the history. Hence, as nobody profits from deviating, $\sigma'$ is also a Nash Equilibrium.
\end{proof}

We present an additional theorem, discussing the case where player $B$ has little funds left in the channel. Since the roles of player $A$ and $B$ are arbitrary, it is of little importance because the results give stronger security guarantees as for the case where $A$ has a low balance. Nevertheless, we state it for the sake of completeness.
 
\begin{theorem} \label{thm:bleqf}
If there exists an old state with $b+d_B<f$, but $a \geq f$, then \begin{enumerate}
        \item $(H)$ is secure.
        \item $(C_h,S)$ is not practical, not weak immune, but \srp.
    \end{enumerate}
\end{theorem}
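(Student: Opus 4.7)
The plan is to exploit the same inverted-incentive mechanism used in \Cref{cor:cor2}, but symmetrically: under $b+d_B<f$, whenever $B$ posts the old dishonest state $(a-d_B, b+d_B)$, party $A$ strictly prefers ignoring over revoking, since the revocation utility $b-f+\alpha$ is dominated in the total order $\preccurlyeq$ by the ignore utility $-d_B+\alpha$, precisely because $b+d_B<f$ implies $b-f<-d_B$. This flips $B$'s incentives along the $C_h$ and $C_c$ branches, while the $D$-branch rooted at $A$'s cheating is unaffected (since $a\geq f$ keeps $B$'s revocation profitable there).

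\textbf{Part 1: Security of $(H)$.} I extend $(H)$ by letting $A$ play $H$ at the root, while $B$ plays $S$ after $C_h$, $P$ after $D$, $H$ after $C_c$, and $P$ after every nested dishonest close of $A$. For weak immunity, the terminal $(H)$ gives both players strictly positive utility, and because $a\geq f$ every $B$-response to a deviating $A$ keeps $u_B$ strictly positive. For practicality I run backward induction: the flipped response at each $(\cdot, D)$ node where $B$ could exploit $(a-d_B, b+d_B)$ yields $B$ a real-magnitude positive $d_B+\alpha-\epsilon$, strictly dominating the infinitesimals $\alpha$ and $\rho+\alpha$ that $B$ could obtain from $S$ or from a successful update. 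Hence at $(C_h)$ and $(C_c)$ the SPE action of $B$ is $D$, and rolling up gives $A$ utility $-d_B+\alpha$ (negative real) from both $C_h$ and $C_c$, and $-a$ from $D$; only $H$ yields a positive utility $\alpha-\epsilon$. So $(H)$ is the unique SPE root action, i.e., practical. Since \srp{} over strict subsets of a two-player game coincides with the Nash condition, and SPE implies Nash, $(H)$ is \srp. Together with weak immunity, $(H)$ is secure.

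\textbf{Part 2: $(C_h, S)$ is not weak immune, not practical, but \srp.} Non-weak-immunity follows from one branch: if $B$ deviates to $D$ after $(C_h)$, then $A$'s response $P$ gives $b-f+\alpha\prec 0$ (using $b<f$, which follows from $b+d_B<f$ and $d_B>0$), while $I$ gives $-d_B+\alpha\prec 0$; so no extension keeps $A$'s utility non-negative. Non-practicality is immediate from the backward induction of Part 1: $B$'s SPE action after $(C_h)$ is $D$, not $S$. For \srp{} I construct the Nash extension $\sigma$ where $\sigma_A$ plays $C_h$ at the root and $P$ at every off-path $(\cdot, D)$ node, while $\sigma_B$ plays $S$ after $C_h$, $P$ after $D$, $H$ after $C_c$. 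Then $A$'s unilateral deviations to $H, D, C_c$ yield $\alpha-\epsilon, -a, \alpha$ respectively, all $\preccurlyeq \alpha$; and for every $B$-deviation at $(C_h)$, the chosen $\sigma_A$ either routes play to a terminal with $u_B\leq \alpha$ directly (e.g., $\sigma_A(C_h,I)=H$) or, in the update subgames $S_3, S_4$, makes $B$'s best subsequent action $S$ with payoff $\alpha$ because the off-path revocation threat renders $D$ unprofitable for $B$. Hence $\sigma$ is Nash, so $(C_h, S)$ is \srp.

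\textbf{Main obstacle.} The delicate step is the \srp{} construction in Part 2: the SPE response of $A$ at an off-path $(\cdot, D)$ node is $I$ under $b+d_B<f$ (by the very mechanism that drives Part 1), so the revocation threat used to neutralize $B$'s update-then-cheat deviations is non-credible in the SPE sense. What rescues the argument is that Nash only requires $\sigma_A$ to be a best response on-path, and off-path threats are admissible; verifying this carefully for the nested subgames $S_3$ and $S_4$, in particular checking that under the threat $B$'s best continuation is $S$ with utility exactly $\alpha$ (rather than some $\rho+\alpha$ branch from a completed update), is the main bookkeeping challenge.
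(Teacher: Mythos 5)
Your proof is correct and follows essentially the same route as the paper's: the same weak-immunity witness for $(H)$, the same backward-induction argument that $b+d_B<f$ makes $I$ dominate $P$ for $A$ after $B$'s dishonest close (while $a\geq f$ keeps $P$ optimal against $A$'s cheating), so $D$ is $B$'s unique subgame perfect reply at $(C_h)$ and $(C_c)$, making $(H)$ the unique practical history and $(C_h,S)$ non-practical and non-weak-immune, plus the same off-path-revocation Nash extension establishing that $(C_h,S)$ is collusion resilient. Your write-up is if anything more explicit than the paper's (which passes over the update subgames silently); the one loose phrase — crediting $B$ with $\rho+\alpha$ from a ``successful update'', whereas in the SPE those branches are worth only $\alpha$ to $B$ because $A$ refuses to agree — is harmless, since both values are infinitesimal and dominated by $d_B+\alpha-\epsilon$.
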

\begin{proof}
To prove (1), we start by showing weak immunity for a strategy $\sigma$ with history $(H)$. Consider $\sigma$, where $A$ takes action $H$ initially, player $B$ chooses $P$ after $(D)$, $S$ after $(C_h)$ and $H$ after $(C_c)$. Then $\sigma$ and thus $(H)$ is weak immune, because $B$'s deviations have no impact on the history and $A$'s deviations can never bring $B$'s utility below zero.

Next, we prove the practicality of $(H)$ by computing all subgame perfect equilibria. Since $a \geq f$, the subgame perfect choice after $(D)$, $(C_h,\mathfrak{I},D)$ and $(C_c,\mathfrak{I},D)$ is $P$. Thus, $A$ chooses $H$ after $(C_h,\mathfrak{I})$ and $(C_c,\mathfrak{I})$. Due to $b+d_B<f$, $A$'s best option after $(C_h,D)$ and $(C_c,D)$ is $\mathfrak{I}$. Hence $B$'s unique subgame perfect choice after $(C_c)$ and $(C_h)$ is $D$. Thus, $A$'s only best response is $H$. Therefore, $(H)$ is the only practical history.
As practicality implies \srp{} in our case, $(H)$ is secure.

For (2), we just showed that $(C_h,S)$ cannot be practical. Additionally, $(C_h,S)$ is not weak immune, since $B$ could deviate to $D$ after $(C_h)$, in which case $A$ gets negative utility for sure, because of $b+d_B<f$.

Finally, we consider the strategy $\sigma'$, with history $(C_h,S)$, where $A$ initially  chooses $C_h$, $B$ chooses $S$ after $(C_h)$, both take $P$ in case of a dishonest unilateral closing attempt $D$, $B$ takes $H$ after $(C_c)$, similarly $A$ takes $H$ after $(C_h,\mathfrak{I})$ and $(C_c,\mathfrak{I})$. Using similar argumentation as before, we conclude that any deviation of a player leads a utility as most as good as $\sigma'$ for them, but never better. Hence, $\sigma'$ is a Nash Equilibrium yielding terminal history $(C_h,S)$.
\end{proof}

\subsubsection{Results for Edge Cases}
So far, we only considered cases where both balances $a$ and $b$ were strictly greater than zero. This is not necessarily the case. Therefore, we consider these cases here. 
In the first case, $a=0$, $B$ cannot close dishonestly, as there is no old state that increases his balance. The corresponding simplified game is presented in \Cref{tbl:a0}.

\begin{figure}
	\centering
	\begin{tikzpicture}[scale=0.9,->,>=stealth',auto,node distance=2cm, el/.style = {inner sep=2pt, align=left, sloped}]
		\node (1) at (0.125,2) {\color{teal}$A$};
		
		\node (2) at (-3.5,-0.5) {\color{olive} $B$} ; 
		\node (3) at (-2.5,1.25) { $(0,\alpha)$};
		\node (4) at (0,0.75) {\color{olive}$B$};
		\node (5) at (3.5,-0.5) {\color{olive}$B$};
		
		\node (6) at (-3.5,-3.25) {\color{teal}$A$};
		\node (7) at (-2.5,-2.5) { $(0,\alpha)$};
		\node (8) at (1.1,-0.25) { $(0,-f+\alpha)$};
		\node (9) at (-0.25,-0.75) { $(d_A+\alpha-\epsilon, -d_A+\alpha)$};
		\node (10) at (2,-2.5) {$(c+\alpha,-c+\alpha)$};
		\node (11) at (3.5,-3.25) {\color{teal}$A$};
		
		\node (12) at (-3.5,-5.25) {\color{olive}$B$};
		\node (13) at (-1,-4) {$(0,\alpha)$};
		\node (14) at (-2.25,-4.5) {$(0,-b)$};

		\node (19) at (2.25,-4.5) {$(0,-b)$};
		\node (20) at (1,-4) { $(0,\alpha)$};
		\node (21) at (3.5,-5.25) {\color{olive}$B$};
		
		\node (22) at (-1,-6) {$(0,-f+\alpha)$} ; 
		\node (23) at (-2,-6.5) {$(d_A+\alpha-\epsilon,-d_A+\alpha)$};
		\node (24) at (1,-6) { $(0,-f+\alpha)$};
		\node (25) at (2,-6.5) {$(d_A+\alpha-\epsilon,-d_A+\alpha)$};	
		
		\node (27) at (-1.25,-2) {$(0,\alpha-\epsilon)$};
		\node (34) at (1.25,-2) {$(0,\alpha-\epsilon)$};

		{\color{teal}
		\path (1) edge node [below, pos=0.15] {\tiny $C_h$}  (2);
		\path (1) edge node [below, pos=0.5] {\tiny $H$}  (3); 
		\path (1) edge node [right, pos=0.6] {\tiny $D$}  (4);
		\path (1) edge node [above, pos=0.25] {\tiny $C_c$}   (5);}
		
		{\color{olive}
		\path (2) edge node [right] {\tiny $\mathfrak{I}$}   (6);
		\path (2) edge node [right, pos=0.55] {\tiny $S$}   (7); 
		\path (4) edge node [above] {\tiny $P$}   (8);
		\path (4) edge node [left] {\tiny $\mathfrak{I}$}   (9);
		\path (5) edge node [left] {\tiny $S$}   (10);
		\path (5) edge node [left] {\tiny $\mathfrak{I}$}   (11); }
		
		{\color{teal}
		\path (6) edge node [right, pos=0.5] {\tiny $D$}   (12);
		\path (6) edge node [above, pos=0.5] {\tiny $H$}   (13); 
		\path (6) edge  node [above, pos=0.7] {\tiny $\mathfrak{I}$}  (14);
		\path (11) edge node [above, pos=0.7] {\tiny $\mathfrak{I}$}   (19); 
		\path (11) edge node [above, pos=0.5] {\tiny $H$}   (20);
		\path (11) edge node [left, pos=0.5] {\tiny $D$}   (21);}
		
		{\color{olive}
		\path (12) edge node [above] {\tiny $P$}   (22);
		\path (12) edge node [left] {\tiny $\mathfrak{I}$}   (23); 
		\path (21) edge node [above] {\tiny $P$}   (24);
		\path (21) edge node [right] {\tiny $\mathfrak{I}$}   (25); }
		
		{\color{olive}
		\path (2) edge node [right, pos=0.5] {\tiny $H$}   (27); 
		\path (5) edge node [left, pos=0.5] {\tiny $H$}   (34);}
		
	\end{tikzpicture}
\caption{Closing game $G_c(A)$ with $a=0$.}
\label{tbl:a0}
\end{figure}
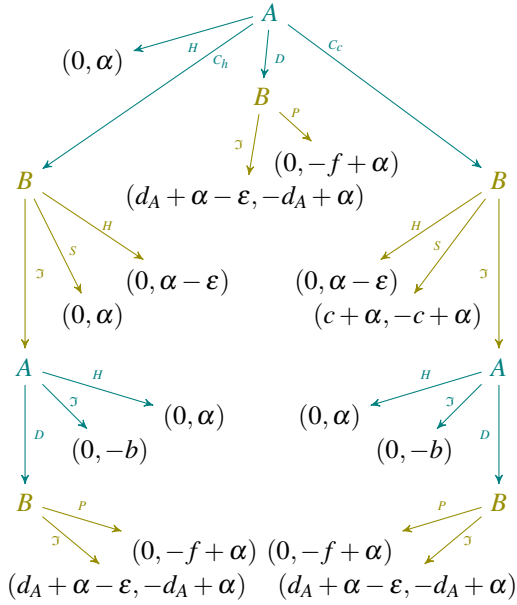

If $b=0$ (\Cref{tbl:b0}), player $A$ cannot close dishonestly, as she cannot take any money from $B$. Thus, both dishonest unilateral closing $D$ and proposing an unfair split in a collaborative closing attempt $C_c$ are not possible.

\begin{figure}
	\centering
	\begin{tikzpicture}[scale=0.8,->,>=stealth',auto,node distance=2cm, el/.style = {inner sep=2pt, align=left, sloped}]
		\node (1) at (-1.75,0) {\color{teal}$A$};
		
		\node (2) at (-3.5,-1) {\color{olive} $B$} ; 
		\node (3) at (-0,-1) { $(\alpha-\epsilon,0)$};
		
		\node (6) at (-5.75,-2.5) {\color{teal}$A$};
		\node (7) at (-2.75,-2.5) { $(\alpha,0)$};
	
		\node (13) at (-5.1,-3.75) {$(\alpha-\epsilon,0)$};
		\node (14) at (-7,-3.75) {$(-a,0)$};
		
		\node (27) at (-4,-2.5) {$(\alpha,0)$};
		\node (30) at (-1.25,-2.5) {\color{teal}$A$};
		
		\node (36) at (-2.75,-3.75) { $(-f+\alpha,0)$} ; 
		\node (37) at (0.75,-3.75) { $(-d_B+\alpha,d_B+\alpha-\epsilon)$};
		
		{\color{teal}
		\path (1) edge node [above, pos=0.5] {\tiny $C_h$}  (2);
		\path (1) edge node [above, pos=0.5] {\tiny $H$}  (3); }
		
		{\color{olive}
		\path (2) edge node [above] {\tiny $\mathfrak{I}$}   (6);
		\path (2) edge node [right] {\tiny $S$}   (7);  }
		
		{\color{teal}
		\path (6) edge node [right, pos=0.3] {\tiny $H$}   (13); 
		\path (6) edge  node [left, pos=0.3] {\tiny $\mathfrak{I}$}  (14);}
		
		{\color{olive}
		\path (2) edge node [left, pos=0.5] {\tiny $H$}   (27); 
		\path (2) edge node [above] {\tiny $D$}   (30);}
		
		{\color{teal}
		\path (30) edge node [above] {\tiny $P$}   (36);
		\path (30) edge node [above] {\tiny $\mathfrak{I}$}   (37); }
	\end{tikzpicture}
\caption{Closing game $G_c(A)$ with $b=0$.}
\label{tbl:b0}
\end{figure}
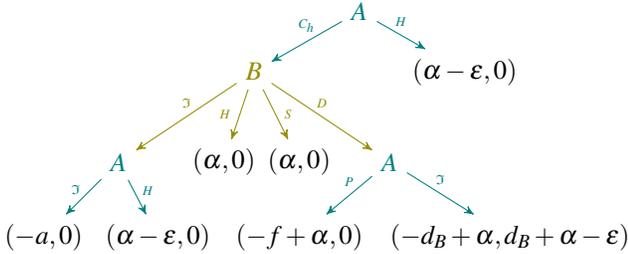

Finally, we present results about the two edge cases.

\begin{theorem} \label{thm:a0}
 If  $a=0$ and $b>0$ (\Cref{tbl:a0}), then only histories that involve an explicit cheating attempt are weak immune. Additionally,  $(H)$ and $(C_h,S)$ are practical if and only if $d_A\geq f$ in every previous state $(d_A,b-d_A)$. In any case they are \srp.
  \end{theorem}  
    \begin{proof}
    We first show that only histories that involve an explicit cheating attempt can be weak immune. Let us consider a history $h$ without a $D$ or $C_c$ action, then $A$ does not initially choose $D$ in $h$. However, if $A$ deviates to $D$, then $B$'s utility is negative. Thus, any such history $h$ is not weak immune.

To show both $(H)$ and $(C_h,S)$ are \srp, it suffices to show they are Nash Equilibria as before. We therefore consider any strategy $\sigma$, where $A$ initially chooses $H$, player $B$ chooses $P$ after $(D)$, $S$ after $(C_h)$ and $H$ after $(C_c)$. Further, player $A$ takes action $H$ after $(C_h,\mathfrak{I})$. The strategy $\sigma$ yields history $(H)$. 
No matter how player $A$ deviates, she always gets utility $0$, as she does in $\sigma$. Thus, she has no incentive to deviate. Since player $B$'s deviations cannot change the history, also he has no incentive to do so. Therefore, $\sigma$ and hence $(H)$ is a Nash Equilibrium.
Adapting $\sigma$, by making $A$ first choice $C_h$ we get strategy $\sigma'$ which leads to history $(C_h,S)$. As before, $A$'s utility stays 0 no matter how she deviates from $\sigma'$. Also player $B$ cannot improve his utility by changing strategy. Hence, also $\sigma'$ and therefore $(C_h,S)$ is a Nash Equilibrium.

Towards practicality, we now compute all subgame perfect equilibria. Let $d_A\geq f$. In which case $P$ is the subgame best choice for $B$ after $(D)$, $(C_h,\mathfrak{I},D)$ and $(C_c,\mathfrak{I},D)$. Further, after history $(C_c,\mathfrak{I})$, $S$ it is never a best option for $B$, because it is strictly dominated by $H$. Therefore, $A$ will get utility zero in any case. This makes $(H)$ a practical history. Similarly for $(C_h,S)$, since $S$ is subgame perfect for $B$ after $(C_h)$.

If now $d_A<f$, then $\mathfrak{I}$ is subgame perfect for $B$ after $D$. Thus, with similar argumentation as before, $(D,\mathfrak{I})$ is the only practical history.
\end{proof}

\begin{theorem} \label{thm:b0}
  If  $a>0$ and $b=0$ (\Cref{tbl:b0}), then
    \begin{enumerate}
        \item $(H)$ is secure.
        \item $(C_h,S)$ is not weak immune, but \srp. It is practical iff $d_B \geq f$ in every previous state $(a-d_B,d_B)$.
    \end{enumerate}
\end{theorem}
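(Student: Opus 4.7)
\medskip
\noindent
\textbf{Proof Plan.}
The plan is to mirror the proofs of \Cref{thm:a0}, \Cref{thm:bleqf}, and \Cref{thm:incentcomp}, exploiting the structural fact that $b = 0$ makes every dishonest action of $A$ unavailable (the actions $D$, $C_c$, and $U^+$ all require a positive parameter in $(0,b]$), while $B$'s dishonest closing $D$ with $d_B \in (0,a]$ and update proposal $U^-$ with $p_B \in (0,a]$ remain feasible. The proof proceeds by constructing extended strategies that are weak immune, \srp, and subgame perfect, and by performing a bottom-up subgame perfect analysis adapted to the degenerate branches.

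For part (1), weak immunity of $(H)$ follows from the extension in which $A$ plays $H$ at the root and $B$ plays $S$ after $(C_h)$ (with standard safe responses elsewhere): $A$'s utility is fixed at $\alpha-\epsilon \succ 0$ no matter what $B$ does, and $B$'s utility is $\alpha \succ 0$ on the only alternative root action of $A$ (namely $C_h$, since $D$ and $C_c$ are unavailable). For \srp, which coincides with Nash equilibrium in two-player games, I extend $B$ to play $I$ after $(C_h)$ and $A$ to play $H$ after $(C_h,I)$; this caps $A$'s deviation payoff from $C_h$ at $\alpha-\epsilon$, matching her on-path payoff, so no unilateral deviation is strictly beneficial. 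For practicality I follow the bottom-up computation of \Cref{thm:incentcomp}, noting that the tie it identifies in $S_2$ between the subgame perfect histories $(S)$ and $(I,H)$ persists here, which allows $B$'s subgame perfect response after $(C_h)$ to be $I$ yielding $A$ utility $\alpha-\epsilon$, making $H$ optimal at the root; if in addition some previous state has $d_B < f$, then $B$'s subgame perfect response after $(C_h)$ becomes $D$ with that $d_B$ and $A$ responds with $I$, pushing $A$'s $C_h$ payoff strictly below $\alpha-\epsilon$ and making $H$ the unique subgame perfect root action.

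For part (2), failure of weak immunity follows from $B$'s deviation to $D$ after $(C_h)$: with $b=0$, $A$'s response $P$ gives utility $b-f+\alpha = -f+\alpha \prec 0$ and $I$ gives $-d_B+\alpha \prec 0$, so every extension of $(C_h,S)$ admits a harmful deviation path. For \srp, I extend $A$ to play $P$ after $(C_h,D)$: then $A$'s deviation to $H$ at the root lowers her payoff from $\alpha$ to $\alpha-\epsilon$ (no strict gain), and every $B$-deviation after $(C_h)$ yields at most $\alpha$ (in particular $D$ with $A$ responding $P$ yields $B$ utility $-b = 0$), so $(C_h,S)$ is a Nash equilibrium. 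The practicality characterization is decided by the inequality $b-f+\alpha \succcurlyeq -d_B+\alpha \iff d_B \succcurlyeq f$ (using $b=0$): if $d_B \geq f$ for every previous state then $A$'s subgame perfect response to $(C_h,D)$ is $P$, $B$'s $D$ option yields $0 \prec \alpha$, so $S$ is among $B$'s subgame perfect responses after $(C_h)$ and $(C_h,S)$ is subgame perfect; if some previous state has $d_B < f$, then $A$'s subgame perfect response is $I$, $B$'s $D$ option yields $d_B+\alpha-\epsilon \succ \alpha$, so $S$ is no longer subgame perfect for $B$ and $(C_h,S)$ fails practicality.

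The main obstacle is the subgame perfect analysis of the subgames $S_i$ and $S'_i$ under the constraint $b=0$: the $U^+$ branches collapse and one must carefully re-verify that the tie in $S_2$ still supplies an extension realizing $(H)$ in part (1). A secondary delicate step is confirming that $A$'s off-path punishment $P$ used in the \srp extension of $(C_h,S)$ is consistent with Nash optimality at every other node, which amounts to checking that the deterring utility $-f+\alpha$ paid by $A$ off path never materializes on path.
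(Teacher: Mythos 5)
Your proposal is correct and follows essentially the paper's own route: weak immunity of $(H)$ is immediate, failure of weak immunity of $(C_h,S)$ comes from $B$'s deviation to $D$ (both replies $P$ and $I$ give $A$ the negative utilities $-f+\alpha$ and $-d_B+\alpha$), \srp{} is reduced to Nash equilibria with $A$ answering $P$ after $(C_h,D)$, and the practicality characterization rests on the comparison $-f+\alpha$ versus $-d_B+\alpha$, i.e.\ on $d_B\gtrless f$. Two clarifications are worth recording. First, the paper's edge-case game (\Cref{tbl:b0}) removes \emph{all} update actions, not only $U^+$, and assigns $B$ utility $0$ at the leaves where he has no stake (e.g.\ $(C_h,S)$ yields $(\alpha,0)$, not $(\alpha,\alpha)$); hence the ``main obstacle'' you defer---re-verifying the subgames $S_i$, $S'_i$ and the $S_2$ tie under $b=0$ and dealing with a surviving $U^-$ branch---does not arise at all, whereas under your own reading (in which $U^-$ stays available) your Nash-equilibrium and subgame-perfection checks would still owe an argument that $B$ cannot profit by proposing $U^-$. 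None of your comparisons breaks under the paper's utilities, since only non-negativity and the relative order of $0$, $\alpha$ and $d_B+\alpha-\epsilon$ are used. Second, for practicality of $(H)$ the paper argues more uniformly: once $A$ answers $D$ by $P$ whenever $d_B\ge f$, the action $D$ is \emph{always} among $B$'s subgame perfect replies to $(C_h)$, so in that equilibrium $A$'s payoff from $C_h$ is negative and $H$ is strictly optimal at the root; your variant instead realizes $(H)$ via the tie between $B$'s replies $S$ and $I$ and the resulting root tie, which is sound (a weakly optimal root choice suffices for a subgame perfect equilibrium with history $(H)$) but less direct.
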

\begin{proof}
We prove (1.) first. The history $(H)$ is weak immune, as $B$'s strategy does not effect the history, since $A$'s initial choice has to be $H$. Further, $A$'s deviation is irrelevant for $B$, as he can never get negative utility in this game.

Practicality of $(H)$. We compute subgame perfect equilibria. After history $(C_h,D)$ the subgame perfect choice of $A$ depends on whether $d_B \geq f$. In any case, $D$ is subgame perfect for $B$ after history $(C_h)$. If $A$ chose $P$, then it is as good as any other choice, yielding 0, otherwise it is the only best option resulting in a positive utility. Thus, $A$ either gets $-f+\alpha$ or $-d_B+\alpha$ if she chooses $C_h$, both of which is negative. Hence $A$'s subgame perfect and therefore practical choice is $H$, yielding the history $(H)$.

The fact that $(H)$ is \srp{} follows from practicality. This shows that $(H)$ is secure, if $b=0$.

For (2), we start showing $(C_h,S)$ is not weak immune. We consider any strategy $\sigma'$ yielding the history $(C_h,S)$. Assume now, $B$ deviates to $D$ after $(C_h)$, then no matter what $A$'s choice is, she will get a negative utility, thus $(C_h,S)$ is not weak immune.

The collusion resilience of $(C_h,S)$, can be shown by considering a strategy $\sigma'$ with history $(C_h,S)$, where we additionally fix that $A$ chooses $P$ after $(C_h,D)$. Then $B$ has no incentive to deviate as he always gets utility 0, and $A$ has no incentive as $\alpha$, which is her utility in $\sigma'$, is the best possible outcome for her.

To finally show that $(C_h,S)$ is practical iff $d_B\geq f$, we consider $A$'s choice after $(C_h,D)$. The option $P$ is subgame perfect iff $d_B \geq f$. Thus, $S$ is subgame perfect for $B$ iff $d_B \geq f$. For $d_B < f$, $D$ is the better option for $B$, yielding $(-d_B+\alpha,d_B+\alpha-\epsilon)$. Therefore $C_h$ is subgame perfect for $A$ iff $d_B \geq f$, in which case the resulting history is $(C_h,S)$. 
\end{proof}

The weak immunity result of $(H)$  might be misleading, as $B$ can actually close dishonestly immediately (before $A$ takes action). This is not represented here, but in $G_c(B)$, which is analog to $G_c(A)$ but with swapped roles.

    \subsection{Subgames of the Closing Game} \label{app:subgames}
\begin{figure}
	\centering
	\begin{tikzpicture}[->,>=stealth',auto,node distance=2cm, el/.style = {inner sep=2pt, align=left, sloped}]
		\node (1) at (1.75,0.75) {\color{olive}$B$};
		
		\node (20) at (1,-2.25) {\color{teal} $A$};
		\node (2) at (1.75,-2.75) {\color{teal}$A$} ; 
		\node (3) at (-1.,-0.25) {\color{teal}$A$};
		\node (21) at (-1,0.5) {$(\alpha,\alpha-\epsilon)$};
		\node (24) at (-0.15,-1.25) {$({\color{red}y}+\alpha,{\color{red}-y}+\alpha)$};
		
		\node (22) at (-2.,-3.25) {$(b-f+\alpha,-b)$};
		\node (23) at (-3.5,-2.75) {$(-d_B+\alpha,d_B+\alpha-\epsilon)$};
				
		\node (8) at (-3.25,-2) {$(-a,-b)$};
		\node (9) at (-1.75,-2.) {$(\alpha-\epsilon,\alpha)$};
		\node (10) at (-2.25,-0.5) {\color{olive}$B$};
		
		\node (5) at (1.75,-5.75) {\color{olive}$B$};
		\node (6) at (-1,-4) {$(\rho+\alpha-\epsilon,\rho+\alpha)$};
		\node (7) at (1,-4.25) {\color{olive}$B$};
		
		\node (11) at (-1.25,-5.5) {$(-a{\color{red}-x}+\rho,a{\color{red}+x}-f+\rho+\alpha)$};
		\node (12) at (-3.25,-5) {$(d_A+\rho+\alpha-\epsilon,-d_A+\rho+\alpha)$};
		
		\node (13) at (-5,-0.75) {$(-a,a-f+\alpha)$};
		\node (14) at (-4.35,-1.25) {$(d_A+\alpha-\epsilon,-d_A+\alpha)$};
		
		\node (15) at (1.75,-8.25) {\color{teal}$A$};
		\node (16) at (-0.5,-8) {$(\rho+\alpha,\rho+\alpha-\epsilon)$};
		\node (17) at (-2.5,-7.5) {$(-a{\color{red}-x}+\rho,-b{\color{red}+x}+\rho)$};
		\node (4) at (-3.75,-6.75) {$({\color{red}y-x}+\rho+\alpha,{\color{red}-y+x}+\rho+\alpha)$};
		
		\node (18) at (-3,-9) {$(b{\color{red}-x}-f+\rho+\alpha,-b{\color{red}+x}+\rho)$};
		\node (19) at (-0.5,-9.5) {$(-d_B+\rho+\alpha,d_B+\rho+\alpha-\epsilon)$};
		
		{\color{olive}
		\path (1) edge node [left, pos=0.5] {\tiny $A$}  (2);
		\path (1) edge node [above, pos=0.5] {\tiny $\mathfrak{I}$}  (3); 
		\path (1) edge node [left]  {\tiny $D$} (20);
		\path (1) edge node [above]  {\tiny $H$} (21);
		\path (1) edge node [left] {\tiny $S$} (24); }
		
		{\color{teal}
		\path (2) edge node [left, pos=0.3] {\tiny $\mathfrak{I}$}  (5);
		\path (2) edge node [above] {\tiny $H$}  (6);
		\path (2) edge node [left] {\tiny $D$}  (7);
		\path (3) edge node [above, pos=0.5] {\tiny $\mathfrak{I}$}  (8);

		\path (3) edge node [left] {\tiny $H$}  (9);
		\path (3) edge node [above] {\tiny $D$}  (10); 
		
		\path (20) edge node [below] {\tiny $\mathfrak{I}$} (23); 
		\path (20) edge node [below] {\tiny $P$} (22);}
		
		{\color{olive}
		\path (7) edge node [below] {\tiny $P$}  (11); 
		\path (7) edge node [below] {\tiny $\mathfrak{I}$}  (12);
		
		\path (10) edge node [above, pos=0.7] {\tiny $P$}  (13);
		\path (10) edge node [below] {\tiny $\mathfrak{I}$}  (14);
		
		\path (5) edge node [left] {\tiny $D$}  (15);
		\path (5) edge node [above] {\tiny $H$}  (16); 
		\path (5) edge node [above] {\tiny $\mathfrak{I}$}  (17);
		\path (5) edge node [above] {\tiny $S$}  (4);}
		
		{\color{teal}
		\path (15) edge node [below] {\tiny $P$}  (18);
		\path (15) edge node [below] {\tiny $\mathfrak{I}$}  (19);}
	\end{tikzpicture}
	\caption{Subgames $S_{1,2}$, $S'_{1,2}$ with Update $(a,b)\mapsto (a+x,b-x)$.}
	\label{tbl:S1p}
\end{figure}
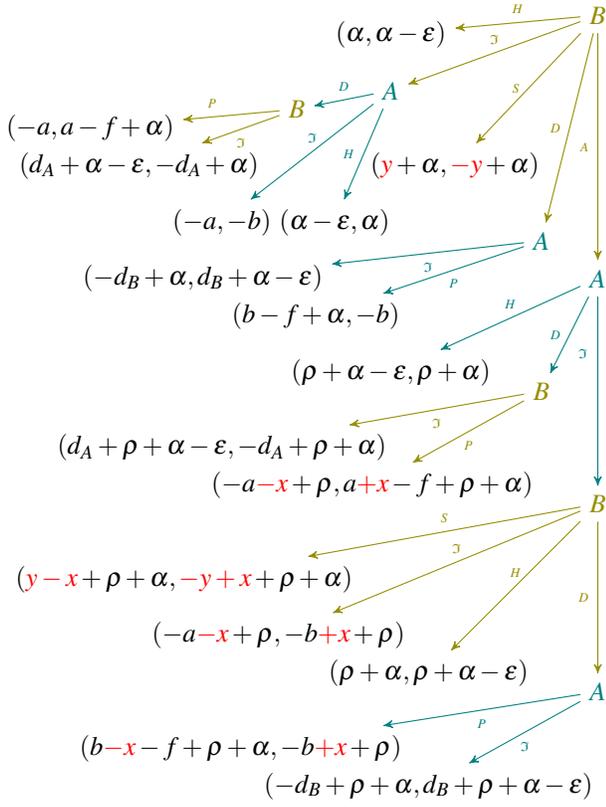
In the following all the subgames needed for the closing game $G_c(A)$ are defined. The subgames $S_{1,2}$ and $S'_{1,2}$ in \Cref{tbl:S1p} cover the case where a channel update is proposed by $A$, although $A$ has already signed a collaborative closing attempt. In $S_1$ the closing attempt was honest, hence $y=0$ and the update is from channel state $(a,b)$ to $(a+p_A,b-p_A)$, hence $x=p_A$. In $S_2$ also $y=0$ the suggested update is $(a-p_B, b+p_B)$, thus $x=-p_B$. In $S'_{1,2}$ the closing attempt was dishonest, therefore $y=c$. The channel updates are as before, thus $x=p_A$ for $S'_1$ and $x=-p_B$ for $S'_2$.
Similarly, subgames $S_{3,4}$ and $S'_{3,4}$ in \Cref{tbl:S4p} cover the case where a channel update is proposed by $B$, although $A$ has already signed a collaborative closing attempt. As in the first case, we have $y=0$ for the honest closing attempt in  $S_{3,4}$ and $y=c$ for dishonest collaborative closing in $S'_{3,4}$. Further in $S_3$ and $S'_3$, the proposed update is $(a+p_A,b-p_A)$, hence $x=p_A$, whereas in $S_4$ and $S'_4$ it is $(a-p_B, b+p_B)$, thus $x=-p_B$.

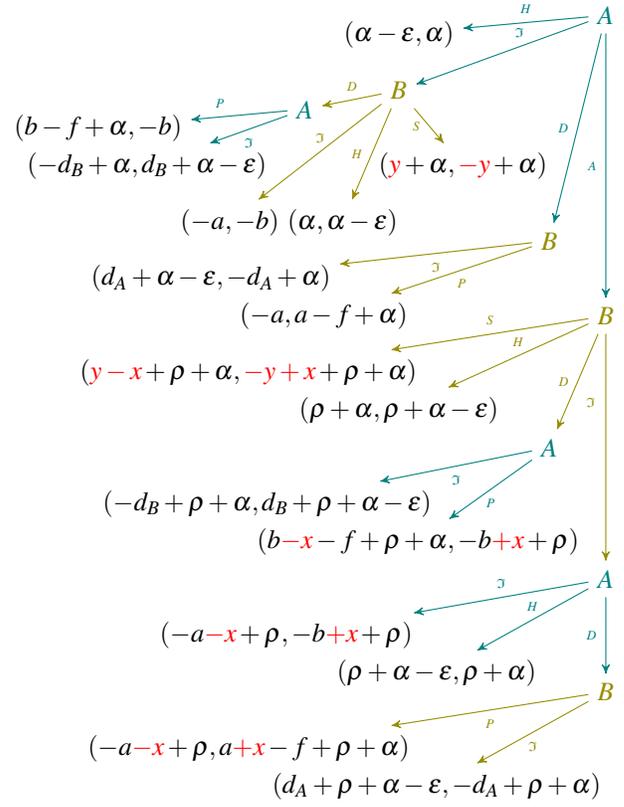
\begin{figure}
	\centering
	\begin{tikzpicture}[->,>=stealth',auto,node distance=2cm, el/.style = {inner sep=2pt, align=left, sloped}]
		\node (1) at (1.75,0.75) {\color{teal}$A$};
		
		\node (20) at (1,-2.25) {\color{olive} $B$};
		\node (2) at (1.75,-3.25) {\color{olive}$B$} ; 
		\node (3) at (-1.,-0.25) {\color{olive}$B$};
		\node (21) at (-1,0.5) {$(\alpha-\epsilon,\alpha)$};
		
		\node (22) at (-2.,-3.25) {$(-a,a-f+\alpha)$};
		\node (23) at (-3.5,-2.75) {$(d_A+\alpha-\epsilon,-d_A+\alpha)$};
				
		\node (8) at (-3.25,-2) {$(-a,-b)$};
		\node (24) at (-0.15,-1.25) {$({\color{red}y}+\alpha,{\color{red}-y}+\alpha)$};
		\node (9) at (-1.75,-2.) {$(\alpha,\alpha-\epsilon)$};
		\node (10) at (-2.25,-0.5) {\color{teal}$A$};
		
		\node (4) at (-3,-4) {$({\color{red}y-x}+\rho+\alpha,{\color{red}-y+x}+\rho+\alpha)$};
		\node (5) at (1.75,-6.75) {\color{teal}$A$};
		\node (6) at (-1,-4.5) {$(\rho+\alpha,\rho+\alpha-\epsilon)$};
		\node (7) at (1,-5.) {\color{teal}$A$};

		\node (11) at (-0.75,-6.25) {$(b{\color{red}-x}-f+\rho+\alpha,-b{\color{red}+x}+\rho)$};
		\node (12) at (-2.75,-5.75) {$(-d_B+\rho+\alpha,d_B+\rho+\alpha-\epsilon)$};
		
		\node (13) at (-5,-0.75) {$(b-f+\alpha,-b)$};
		\node (14) at (-4.35,-1.25) {$(-d_B+\alpha,d_B+\alpha-\epsilon)$};

		\node (15) at (1.75,-8.25) {\color{olive}$B$};
		\node (16) at (-0.5,-8) {$(\rho+\alpha-\epsilon,\rho+\alpha)$};
		\node (17) at (-2.5,-7.5) {$(-a{\color{red}-x}+\rho,-b{\color{red}+x}+\rho)$};

		\node (18) at (-3,-9) {$(-a{\color{red}-x}+\rho,a{\color{red}+x}-f+\rho+\alpha)$};
		\node (19) at (-0.5,-9.5) {$(d_A+\rho+\alpha-\epsilon,-d_A+\rho+\alpha)$};

		{\color{teal}
		\path (1) edge node [left, pos=0.5] {\tiny $A$}  (2);
		\path (1) edge node [above, pos=0.4] {\tiny $\mathfrak{I}$}  (3); 
		\path (1) edge node [left]  {\tiny $D$} (20);
		\path (1) edge node [above]  {\tiny $H$} (21);}
		
		{\color{olive}
		\path (2) edge node [above] {\tiny $S$}  (4);
		\path (2) edge node [left, pos=0.3] {\tiny $\mathfrak{I}$}  (5);
		\path (2) edge node [above] {\tiny $H$}  (6);
		\path (2) edge node [left] {\tiny $D$}  (7);
		 
		\path (3) edge node [above, pos=0.5] {\tiny $\mathfrak{I}$}  (8);
		\path (3) edge node [left] {\tiny $S$} (24); 
		\path (3) edge node [left] {\tiny $H$}  (9);
		\path (3) edge node [above] {\tiny $D$}  (10); 
		
		\path (20) edge node [below] {\tiny $\mathfrak{I}$} (23); 
		\path (20) edge node [below] {\tiny $P$} (22);}
		
		{\color{teal}
		\path (7) edge node [below] {\tiny $P$}  (11); 
		\path (7) edge node [below] {\tiny $\mathfrak{I}$}  (12);
		
		\path (10) edge node [above, pos=0.7] {\tiny $P$}  (13);
		\path (10) edge node [below] {\tiny $\mathfrak{I}$}  (14);
		
		\path (5) edge node [left] {\tiny $D$}  (15);
		\path (5) edge node [above] {\tiny $H$}  (16); 
		\path (5) edge node [above] {\tiny $\mathfrak{I}$}  (17);}
		
		{\color{olive}
		\path (15) edge node [below] {\tiny $P$}  (18);
		\path (15) edge node [below] {\tiny $\mathfrak{I}$}  (19);}
	\end{tikzpicture}
	\caption{Subgames $S_{3,4}$, $S'_{3,4}$ with Update $(a,b)\mapsto (a+x,b-x)$.}
	\label{tbl:S4p}
\end{figure}

\subsection{Subgames of the Routing Game} \label{app:rout}

In this section, one subgame of each type is detailed. First, subgame $\mathbb{S}_1$ in \Cref{fig:LA} describes the case where player $A$ locks an amount of money in the HTLC which deviates from the expected $m+3f$. The action $L_w$ means that the subsequent player follow along and forward the deviation of $-w$ to player $B$.
Subgame $\mathsf{S}_2$ in \Cref{fig:LH} illustrates the case that player $E_1$ creates her own secret and uses its hash $z$ as the lock of her HTLC. Action $L_z$ describes the reusing of hash lock $z$ in the next HTLCs. Lastly, subgame $\mathbf{S}_3$ in \Cref{fig:LT} handles the case, where player $I$ uses a time-out $t_3$ of the HTLC which is later than the previous ones $t_1$ and $t_2$, thereby neglecting the decreasing ordering of time outs. In \Cref{fig:LT}, the action $U_{t_2}$ means unlocking the HTLC before $t_2$ times out, thus enabling the other players to unlock too. The action $U_{>t_1}$ stands for unlocking after both $t_1$ and $t_2$ timed out, therefore $I$ and $E_1$ cannot unlock their respective HLTCs any more. Finally, action $U_{[t_2,t_1]}$ means unlocking after $t_2$ has timed out, but the HTLC with time-out $t_1$ can still be unlocked.

\begin{figure*}
	\centering
	\begin{tikzpicture}[scale=0.975,->,>=stealth',auto,node distance=2cm, el/.style = {inner sep=2pt, align=left, sloped}]

		\node (2) at (0,-1) {$E_1$};
		\node (8) at (-1,-2.5) {$(-\epsilon,0,0,0,0)$} ;
		\node (14) at (1.5, -1.5) {$I$};
		\node (15) at (0.5, -3) {$(-\epsilon,-\epsilon,0,0,0)$};
		\node (3) at (3,-2) {$E_2$};
		\node (9) at (2,-3.5) {$(-\epsilon,-\epsilon,-\epsilon,0,0)$} ;
		\node (16) at (4.5,-2.5) {$B$};
		\node (17) at (3.5, -4) {$(-\epsilon,-\epsilon,-\epsilon,-\epsilon,0)$};
		\node (4) at (6,-3) {$E_2$};

		    \node (10) at (6,-2) {$B$} ;
		
		    \node (56) at (4.5,-1.5) {$\mathfrak{S}_{10}$} ;
		    \node (57) at (6,-1) {$I$} ;
		    \node (58) at (9.75,-2.75) {$(m+3f+\rho-\epsilon,-\epsilon,-\epsilon,-m+w,-w+\rho)$} ;
		    
		    \node (59) at (6,0) {$E_1$} ;
		    \node (61) at (4.5,-0.5) {$\mathfrak{S}_{11}$} ;
		    \node (60) at (7.,-1.25) {$B$} ;
		    
		    \node (62) at (11.,0.6) {$(m+3f+\rho-\epsilon,-m+w-2f,m-w+2f-\epsilon,-m+w,-w+\rho)$} ;
		    \node (63) at (2.75,.6) {$(w+\rho,f,m-w+2f-\epsilon,-m+w,-w+\rho)$} ;
		    
		    \node (64) at (11,-2) {$(m+3f+\rho-\epsilon,-\epsilon,-\epsilon,-m+w,-w+\rho)$} ;
		    \node (65) at (8.5,-.75) {$E_1$} ;
		    
		    \node (66) at (12,-1.25) {$(m+3f+\rho-\epsilon,-\epsilon,-\epsilon,-m+w,-w+\rho)$} ;
		    \node (67) at (12,-.25) {$(w+\rho,m-w+3f-\epsilon,-\epsilon,-m+w,-w+\rho)$} ;
		
		\node (18) at (7.5, -3.5) {$I$};
		\node (19) at (6.75, -4.5) {$E_2$};
		    \node (48) at (7.25,-5.5) {$E_1$};
			\node (51) at (6,-5.125) {$B$};	    
		    
		    \node (49) at (11.5,-6) {$(m+3f+\rho-\epsilon,-\epsilon,-m+w-f,f,-w+\rho)$};
		    \node (50) at (12,-5.25) {$(w+\rho,m-w+3f-\epsilon,-m+w-f,f,-w+\rho)$};

		    \node (52) at (1.5,-4.75) {$(m+3f+\rho-\epsilon,-\epsilon,-m+w-f,f,-w+\rho)$};
		    \node (53) at (6,-6.) {$E_1$};
		    
		    \node (54) at (1.5,-6) {$(m+3f+\rho-\epsilon,-\epsilon,-m+w-f,f,-w+\rho)$};
		    \node (55) at (2,-5.4) {$(w+\rho,m-w+3f-\epsilon,-m+w-f,f,-w+\rho)$};
		
		\node (5) at (9,-4) {$E_1$};
		\node (6) at (11,-4.625) {$(w+\rho,f,f,f,-w+\rho)$} ;
		\node (11) at (12.,-3.5) {$(m+3f+\rho-\epsilon,-m+w-2f,f,f,-w+\rho)$};

			\node (32) at (0.5,0.1) {$\mathsf{S}_5$};
			\node (33) at (0,-2) {$\mathbb{S}_5$};
			\node (81) at (-.5,0.1) {$\mathbf{S}_5$};
			
			\node (34) at (2,-0.4) {$\mathsf{S}_6$};
			\node (35) at (1.5,-2.5) {$\mathbb{S}_6$};
			\node (82) at (1,-0.4) {$\mathbf{S}_6$};
			
			\node (36) at (3.5,-0.9) {$\mathsf{S}_7$};
			\node (37) at (3,-3) {$\mathbb{S}_7$};
			\node (83) at (2.5,-0.9) {$\mathbf{S}_7$};
			
			\node (42) at (4.5,-3.5) {$\mathfrak{S}_7$};
			
			\node (44) at (5.5,-4) {$\mathfrak{S}_8$};
			
			\node (45) at (8,-4.5) {$\mathfrak{S}_9$};

		\path (2) edge node[above] {\tiny $L_w$} (14);
		\path (2) edge node[above, pos=0.5] {\tiny $\mathfrak{I}$} (8);
		\path (14) edge node[ above] {\tiny $L_w$} (3);
		\path (14) edge node[ above, pos=0.5] {\tiny $\mathfrak{I}$} (15);
		\path (3) edge node[above] {\tiny $L_w$} (16);
		\path (3) edge node[above, pos=0.5] {\tiny $\mathfrak{I}$} (9);
		\path (16) edge node[ above] {\tiny $U$} (4);
		\path (16) edge node[ above, pos=0.5] {\tiny $\mathfrak{I}$} (17);
		\path (4) edge node[above] {\tiny $U$} (18);
		\path (4) edge node[left, pos=0.5] {\tiny $\mathfrak{I}$} (10); 
		\path (18) edge node[ above] {\tiny $U$} (5);
		\path (18) edge node[ above, pos=0.5] {\tiny $\mathfrak{I}$} (19);
		\path (5) edge node[right, pos=1] {\tiny $U$} (6);   
		\path (5) edge node[below, pos=1] {\tiny $\mathfrak{I}$} (11);

			\path (2) edge node[right, pos=.3] {\tiny $L_H$} (32);
			\path (2) edge node[right] {\tiny $L_A$} (33);
			\path (2) edge node[right] {\tiny $L_T$} (81);
			
			\path (14) edge node[right, pos=.3] {\tiny $L_H$} (34);
			\path (14) edge node[right] {\tiny $L_A$} (35);
			\path (14) edge node[right] {\tiny $L_T$} (82);
			
			\path (3) edge node[right, pos=.3] {\tiny $L_H$} (36);
			\path (3) edge node[right] {\tiny $L_A$} (37);
			\path (3) edge node[right] {\tiny $L_T$} (83);
			
			\path (16) edge node[right] {\tiny $S_S$} (42);
			
			\path (4) edge node[right] {\tiny $S_S$} (44);
			
			\path (18) edge node[right] {\tiny $S_S$} (45);
			
			\path (10) edge node[below, pos=0.5] {\tiny $S_{S_{E_1}}$} (56);
			\path (10) edge node[left, pos=0.5] {\tiny $S_{S_{I}}$} (57);
			\path (10) edge node[above] {\tiny $\mathfrak{I}$} (58);
			
			\path (57) edge node[left] {\tiny $U$} (59);
			\path (57) edge node[above] {\tiny $\mathfrak{I}$} (60);
			\path (57) edge node[below, pos=0.5] {\tiny $S_{S_{E_1}}$} (61);
			
			\path (59) edge node[below] {\tiny $\mathfrak{I}$} (62);
			\path (59) edge node[below] {\tiny $U$} (63);
			
			\path (60) edge node[above] {\tiny $\mathfrak{I}$} (64);
			\path (60) edge node[above] {\tiny $S_{S_{E_1}}$} (65);
			
			\path (65) edge node[right, pos=1] {\tiny $U$} (67);
			\path (65) edge node[right, pos=1] {\tiny $\mathfrak{I}$} (66);
			\path (19) edge node[right, pos=0.5] {\tiny $S_{S_{E_1}}$} (48);
			\path (19) edge node[above] {\tiny $\mathfrak{I}$} (51);
			
			\path (48) edge node[below, pos=0.3] {\tiny $\mathfrak{I}$} (49);
			\path (48) edge node[above] {\tiny $U$} (50);
			
			\path (51) edge node[above] {\tiny $\mathfrak{I}$} (52);
			\path (51) edge node[right] {\tiny $S_{S_{E_1}}$} (53);
			
			\path (53) edge node[below] {\tiny $\mathfrak{I}$} (54);
			\path (53) edge node[above, pos=0.1] {\tiny $U$} (55);

	\end{tikzpicture}
	\caption{Subgame $\mathbb{S}_1$ with locked amount $m-w+3f$.}
	\label{fig:LA}
\end{figure*}

\begin{figure*}
	\centering
	\begin{tikzpicture}[scale=0.975,->,>=stealth',auto,node distance=2cm, el/.style = {inner sep=2pt, align=left, sloped}]

		\node (2) at (0,-1) {$I$};
		\node (8) at (-1,-2.5) {$(-\epsilon,-\epsilon,0,0,0)$} ;
		\node (14) at (1.5, -1.5) {$E_2$};
		\node (15) at (0.5, -3) {$(-\epsilon,-\epsilon,-\epsilon,0,0)$};
		\node (3) at (3,-2) {$E_1$};
		\node (9) at (2,-3.5) {$(-\epsilon,-\epsilon,-\epsilon,-\epsilon,0)$} ;
		\node (16) at (4.5,-2.5) {$B$};
		\node (17) at (3.5, -4) {$(-\epsilon,-\epsilon,-\epsilon,-\epsilon,0)$};
		\node (4) at (6,-3) {$E_2$};

		    \node (10) at (6,-2) {$B$} ;
		
		    \node (56) at (4.5,-1.5) {$\mathfrak{S}_{16}$} ;
		    \node (58) at (6,-0.5) {$(m+3f+\rho-\epsilon,-\epsilon,-\epsilon,-m,\rho)$} ;
		    
		    
		    
		    
		
		\node (18) at (7.5, -3.5) {$I$};
		
		    \node (19) at (6.75, -4.5) {$B$};
		    \node (48) at (7.75,-5.125) {$\mathfrak{S}_{17}$};
			\node (51) at (4,-5.125) {$(m+3f+\rho-\epsilon,-\epsilon,-m+f,f,\rho)$};	    
		    

		    
		
		\node (5) at (9,-4) {$B$};
		\node (86) at (9,-5) {$\mathfrak{S}_{15}$};
		\node (6) at (10.5,-4.5) {$E_1$};
		\node (84) at (12.5,-5.125) {$(\rho,f,f,f,\rho)$} ;
		\node (11) at (9.,-2.5) {$(m+3f+\rho-\epsilon,-m-2f,f,f,\rho)$};
		\node (85) at (12.,-3.5) {$(m+3f+\rho-\epsilon,-m-2f,f,f,\rho)$};

			\node (32) at (0.5,0.1) {$\mathsf{S}_8$};
			\node (33) at (0,-2) {$\mathbb{S}_8$};
			\node (81) at (-.5,0.1) {$\mathbf{S}_8$};
			
			\node (34) at (2,-0.4) {$\mathsf{S}_9$};
			\node (35) at (1.5,-2.5) {$\mathbb{S}_9$};
			\node (82) at (1,-0.4) {$\mathbf{S}_9$};
			
			\node (36) at (3,-3) {$\mathfrak{S}_{12}$};
			
			\node (42) at (4.5,-3.5) {$\mathfrak{S}_{13}$};
			
			\node (44) at (6,-4) {$\mathfrak{S}_{14}$};
			

		\path (2) edge node[above] {\tiny $L_z$} (14);
		\path (2) edge node[above, pos=0.5] {\tiny $\mathfrak{I}$} (8);
		\path (14) edge node[ above] {\tiny $L_z$} (3);
		\path (14) edge node[ above, pos=0.5] {\tiny $\mathfrak{I}$} (15);
		\path (3) edge node[above] {\tiny $S_{S_{B}}$} (16);
		\path (3) edge node[above, pos=0.5] {\tiny $\mathfrak{I}$} (9);
		\path (16) edge node[ above] {\tiny $U$} (4);
		\path (16) edge node[ above, pos=0.5] {\tiny $\mathfrak{I}$} (17);
		\path (4) edge node[above] {\tiny $U$} (18);
		\path (4) edge node[left, pos=0.5] {\tiny $\mathfrak{I}$} (10); 
		\path (18) edge node[ above] {\tiny $U$} (5);
		\path (18) edge node[ above, pos=0.5] {\tiny $\mathfrak{I}$} (19);
		\path (5) edge node[above, pos=0.5] {\tiny $U$} (6);   
		\path (5) edge node[left, pos=0.5] {\tiny $\mathfrak{I}$} (11);
		\path (5) edge node[left, pos=0.5] {\tiny $S_S$} (86);
		\path (6) edge node[above, pos=0.5] {\tiny $U$} (84); 
		\path (6) edge node[above, pos=0.3] {\tiny $\mathfrak{I}$} (85); 

			\path (2) edge node[right, pos=.3] {\tiny $L_H$} (32);
			\path (2) edge node[right] {\tiny $L_A$} (33);
			\path (2) edge node[right] {\tiny $L_T$} (81);
			
			\path (14) edge node[right, pos=.3] {\tiny $L_H$} (34);
			\path (14) edge node[right] {\tiny $L_A$} (35);
			\path (14) edge node[right] {\tiny $L_T$} (82);
			
			\path (3) edge node[right, pos=.3] {\tiny $S_S$} (36);
			
			\path (16) edge node[right] {\tiny $S_S$} (42);
			
			\path (4) edge node[right] {\tiny $S_S$} (44);
			
			
			\path (10) edge node[below, pos=0.5] {\tiny $S_{S}$} (56);
			\path (10) edge node[left] {\tiny $\mathfrak{I}$} (58);
			
			
			
			
			\path (19) edge node[above, pos=0.5] {\tiny $S_S$} (48);
			\path (19) edge node[above] {\tiny $\mathfrak{I}$} (51);
			
			
			

	\end{tikzpicture}
	\caption{Subgame $\mathsf{S}_2$ with used hash lock $z$.}
	\label{fig:LH}
\end{figure*}

\begin{figure*} \label{fig:LT}
	\centering
	\begin{tikzpicture}[scale=0.975,->,>=stealth',auto,node distance=2cm, el/.style = {inner sep=2pt, align=left, sloped}]

		\node (3) at (3,-2) {$E_2$};
		\node (9) at (2,-3.5) {$(-\epsilon,-\epsilon,-\epsilon,0,0)$} ;
		\node (16) at (4.5,-2.5) {$B$};
		\node (17) at (3.5, -4) {$(-\epsilon,-\epsilon,-\epsilon,-\epsilon,0)$};
		\node (4) at (6,-3) {$E_2$};

		    \node (70) at (5.,-4.5) {$(m+3f+\rho-\epsilon,-\epsilon,-m-f,f,\rho)$};
		    \node (71) at (9, -2.25) {$(m+3f+\rho-\epsilon,-\epsilon,-\epsilon,-m,\rho)$};
		
		    \node (10) at (6,-2) {$B$} ;
		
		    \node (56) at (4.5,-1.5) {$\mathfrak{S}_{21}$} ;
		    \node (57) at (7.5,-1.5) {$E_2$} ;
		    
		    \node (61) at (5.5,-1) {$\mathfrak{S}_{22}$} ;
		    \node (60) at (11.5,-1.5) {$(m+3f+\rho-\epsilon,-\epsilon,-m-f,f,\rho)$} ;
		    
		    
		    
		
		\node (18) at (7.5, -3.5) {$I$};
		\node (19) at (8.5, -4.5) {$E_2$};
		    \node (48) at (7.,-5) {$\mathfrak{S}_{23}$};
			\node (51) at (9.5,-5.5) {$B$};	    
		    

		    \node (52) at (14,-6) {$(m+3f+\rho-\epsilon,-\epsilon,-m-f,f,\rho)$};
		    \node (53) at (8,-6.) {$\mathfrak{S}_{24}$};
		    
		
		\node (5) at (9,-4) {$E_1$};
		\node (6) at (11,-4.625) {$(\rho,f,f,f,\rho)$} ;
		\node (11) at (12.,-3.5) {$(m+3f+\rho-\epsilon,-m-2f,f,f,\rho)$};

			
			
			\node (36) at (3.5,-0.9) {$\mathsf{S}_{10}$};
			\node (37) at (3,-3) {$\mathbb{S}_{11}$};
			\node (83) at (2.5,-0.9) {$\mathbf{S}_{12}$};
			
			\node (42) at (4.5,-3.5) {$\mathfrak{S}_{18}$};
			
			\node (44) at (6,-4) {$\mathfrak{S}_{19}$};
			
			\node (45) at (9,-3) {$\mathfrak{S}_{20}$};

		\path (3) edge node[above] {\tiny $L_w$} (16);
		\path (3) edge node[above, pos=0.5] {\tiny $\mathfrak{I}$} (9);
		\path (16) edge node[ above] {\tiny $U$} (4);
		\path (16) edge node[ above, pos=0.5] {\tiny $\mathfrak{I}$} (17);
		\path (4) edge node[above, pos=0.7] {\tiny $U_{<t_2}$} (18);
		\path (4) edge node[left, pos=0.9] {\tiny $U_{[t_2,t_1]}$} (10); 
		\path (18) edge node[ above] {\tiny $U$} (5);
		\path (18) edge node[ above, pos=0.5] {\tiny $\mathfrak{I}$} (19);
		\path (5) edge node[right, pos=1] {\tiny $U$} (6);   
		\path (5) edge node[below, pos=1] {\tiny $\mathfrak{I}$} (11);

			
			
			\path (3) edge node[right, pos=.3] {\tiny $L_H$} (36);
			\path (3) edge node[right] {\tiny $L_A$} (37);
			\path (3) edge node[right] {\tiny $L_T$} (83);
			
			\path (16) edge node[right] {\tiny $S_S$} (42);
			
			\path (4) edge node[right] {\tiny $S_S$} (44);
			\path (4) edge node[left, pos=0.3] {\tiny $U_{>t_1}$} (70);
			\path (4) edge node[above] {\tiny $\mathfrak{I}$} (71);
			
			\path (18) edge node[above] {\tiny $S_S$} (45);
			
			\path (10) edge node[above, pos=0.5] {\tiny $S_S$} (56);
			\path (10) edge node[above, pos=0.5] {\tiny $S_{S_{I}}$} (57);
			
			\path (57) edge node[above] {\tiny $\mathfrak{I}$} (60);
			\path (57) edge node[above, pos=0.5] {\tiny $S_{S_{E_1}}$} (61);
			
			
			
			
			\path (19) edge node[below, pos=0.2] {\tiny $S_{S_{E_1}}$} (48);
			\path (19) edge node[above] {\tiny $\mathfrak{I}$} (51);
			
			
			\path (51) edge node[above] {\tiny $\mathfrak{I}$} (52);
			\path (51) edge node[below, pos=0.2] {\tiny $S_{S_{E_1}}$} (53);
			

	\end{tikzpicture}
	\caption{Subgame $\mathbf{S}_3$ with time-out ordering $t_3>t_1>t_2>t_4$.}
\end{figure*}

\end{document}